\pgfplotsset{compat=newest} 
\pgfplotsset{plot coordinates/math parser=false} 
\newlength\figureheight 
\newlength\figurewidth
\providecommand{\U}[1]{\protect\rule{.1in}{.1in}}
\newtheorem{theorem}{Theorem}
\newtheorem{axiom}[theorem]{Axiom}
\newtheorem{conjecture}[theorem]{Conjecture}
\newtheorem{corollary}[theorem]{Corollary}
\newtheorem{definition}[theorem]{Definition}
\newtheorem{example}[theorem]{Example}
\newtheorem{exercise}[theorem]{Exercise}
\newtheorem{lemma}[theorem]{Lemma}
\newtheorem{proposition}[theorem]{Proposition}
\newtheorem{remark}[theorem]{Remark}
\chardef\@x10\chardef\@xv60
\def\tcitime{
\def\@time{%
  \@minute\time\@hour\@minute\divide\@hour\@xv
  \ifnum\@hour<\@x 0\fi\the\@hour:%
  \multiply\@hour\@xv\advance\@minute-\@hour
  \ifnum\@minute<\@x 0\fi\the\@minute
  }}%
\def\QCTOpt[#1]#2{%
  \def\QCTOptB{#1}
  \def\QCTOptA{#2}
}
\def\QCTNOpt#1{%
  \def\QCTOptA{#1}
  \let\QCTOptB\empty
}
\def\Qct{%
  \@ifnextchar[{%
    \QCTOpt}{\QCTNOpt}
}
\def\QCBOpt[#1]#2{%
  \def\QCBOptB{#1}
  \def\QCBOptA{#2}
}
\def\QCBNOpt#1{%
  \def\QCBOptA{#1}
  \let\QCBOptB\empty
}
\def\Qcb{%
  \@ifnextchar[{%
    \QCBOpt}{\QCBNOpt}
}
\def\PrepCapArgs{%
  \ifx\QCBOptA\empty
    \ifx\QCTOptA\empty
      {}%
    \else
      \ifx\QCTOptB\empty
        {\QCTOptA}%
      \else
        [\QCTOptB]{\QCTOptA}%
      \fi
    \fi
  \else
    \ifx\QCBOptA\empty
      {}%
    \else
      \ifx\QCBOptB\empty
        {\QCBOptA}%
      \else
        [\QCBOptB]{\QCBOptA}%
      \fi
    \fi
  \fi
}
\def\GRAPHICSPS#1{%
 \ifcase\GRAPHICSTYPE
   \special{ps: #1}%
 \or
   \special{language "PS", include "#1"}%
 \fi
}%
\def\graffile#1#2#3#4{%
    \leavevmode
    \raise -#4 \BOXTHEFRAME{%
        \hbox to #2{\raise #3\hbox to #2{\null #1\hfil}}}%
}%
\def\draftbox#1#2#3#4{%
 \leavevmode\raise -#4 \hbox{%
  \frame{\rlap{\protect\tiny #1}\hbox to #2%
   {\vrule height#3 width\z@ depth\z@\hfil}%
  }%
 }%
}%
\newif\ifwasdraft
\def\GRAPHIC#1#2#3#4#5{%
 \ifnum\draft=\@ne\draftbox{#2}{#3}{#4}{#5}%
  \else\graffile{#1}{#3}{#4}{#5}%
  \fi
 }%
\def\addtoLaTeXparams#1{%
    \edef\LaTeXparams{\LaTeXparams #1}}%
\newif\ifBoxFrame \BoxFramefalse
\newif\ifOverFrame \OverFramefalse
\newif\ifUnderFrame \UnderFramefalse
\def\BOXTHEFRAME#1{%
   \hbox{%
      \ifBoxFrame
         \frame{#1}%
      \else
         {#1}%
      \fi
   }%
}
\def\doFRAMEparams#1{\BoxFramefalse\OverFramefalse\UnderFramefalse\readFRAMEparams#1\end}%
\def\readFRAMEparams#1{%
 \ifx#1\end%
  \let\next=\relax
  \else
  \ifx#1i\dispkind=\z@\fi
  \ifx#1d\dispkind=\@ne\fi
  \ifx#1f\dispkind=\tw@\fi
  \ifx#1t\addtoLaTeXparams{t}\fi
  \ifx#1b\addtoLaTeXparams{b}\fi
  \ifx#1p\addtoLaTeXparams{p}\fi
  \ifx#1h\addtoLaTeXparams{h}\fi
  \ifx#1X\BoxFrametrue\fi
  \ifx#1O\OverFrametrue\fi
  \ifx#1U\UnderFrametrue\fi
  \ifx#1w
    \ifnum\draft=1\wasdrafttrue\else\wasdraftfalse\fi
    \draft=\@ne
  \fi
  \let\next=\readFRAMEparams
  \fi
 \next
 }%
\def\IFRAME#1#2#3#4#5#6{%
      \bgroup
      \let\QCTOptA\empty
      \let\QCTOptB\empty
      \let\QCBOptA\empty
      \let\QCBOptB\empty
      #6%
      \parindent=0pt%
      \leftskip=0pt
      \rightskip=0pt
      \setbox0 = \hbox{\QCBOptA}%
      \@tempdima = #1\relax
      \ifOverFrame
          \typeout{This is not implemented yet}%
          \show\HELP
      \else
         \ifdim\wd0>\@tempdima
            \advance\@tempdima by \@tempdima
            \ifdim\wd0 >\@tempdima
               \textwidth=\@tempdima
               \setbox1 =\vbox{%
                  \noindent\hbox to \@tempdima{\hfill\GRAPHIC{#5}{#4}{#1}{#2}{#3}\hfill}\\%
                  \noindent\hbox to \@tempdima{\parbox[b]{\@tempdima}{\QCBOptA}}%
               }%
               \wd1=\@tempdima
            \else
               \textwidth=\wd0
               \setbox1 =\vbox{%
                 \noindent\hbox to \wd0{\hfill\GRAPHIC{#5}{#4}{#1}{#2}{#3}\hfill}\\%
                 \noindent\hbox{\QCBOptA}%
               }%
               \wd1=\wd0
            \fi
         \else
            \ifdim\wd0>0pt
              \hsize=\@tempdima
              \setbox1 =\vbox{%
                \unskip\GRAPHIC{#5}{#4}{#1}{#2}{0pt}%
                \break
                \unskip\hbox to \@tempdima{\hfill \QCBOptA\hfill}%
              }%
              \wd1=\@tempdima
           \else
              \hsize=\@tempdima
              \setbox1 =\vbox{%
                \unskip\GRAPHIC{#5}{#4}{#1}{#2}{0pt}%
              }%
              \wd1=\@tempdima
           \fi
         \fi
         \@tempdimb=\ht1
         \advance\@tempdimb by \dp1
         \advance\@tempdimb by -#2%
         \advance\@tempdimb by #3%
         \leavevmode
         \raise -\@tempdimb \hbox{\box1}%
      \fi
      \egroup%
}%
\def\DFRAME#1#2#3#4#5{%
 \begin{center}
     \let\QCTOptA\empty
     \let\QCTOptB\empty
     \let\QCBOptA\empty
     \let\QCBOptB\empty
     \ifOverFrame 
        #5\QCTOptA\par
     \fi
     \GRAPHIC{#4}{#3}{#1}{#2}{\z@}
     \ifUnderFrame 
        \nobreak\par #5\QCBOptA
     \fi
 \end{center}%
 }%
\def\FFRAME#1#2#3#4#5#6#7{%
 \begin{figure}[#1]%
  \let\QCTOptA\empty
  \let\QCTOptB\empty
  \let\QCBOptA\empty
  \let\QCBOptB\empty
  \ifOverFrame
    #4
    \ifx\QCTOptA\empty
    \else
      \ifx\QCTOptB\empty
        \caption{\QCTOptA}%
      \else
        \caption[\QCTOptB]{\QCTOptA}%
      \fi
    \fi
    \ifUnderFrame\else
      \label{#5}%
    \fi
  \else
    \UnderFrametrue%
  \fi
  \begin{center}\GRAPHIC{#7}{#6}{#2}{#3}{\z@}\end{center}%
  \ifUnderFrame
    #4
    \ifx\QCBOptA\empty
      \caption{}%
    \else
      \ifx\QCBOptB\empty
        \caption{\QCBOptA}%
      \else
        \caption[\QCBOptB]{\QCBOptA}%
      \fi
    \fi
    \label{#5}%
  \fi
  \end{figure}%
 }%
\def\makeactives{
  \catcode`\"=\active
  \catcode`\;=\active
  \catcode`\:=\active
  \catcode`\'=\active
  \catcode`\~=\active
}
   \gdef\activesoff{%
      \def"{\string"}
      \def;{\string;}
      \def:{\string:}
      \def'{\string'}
      \def~{\string~}
    }
\def\FRAME#1#2#3#4#5#6#7#8{%
 \bgroup
 \@ifundefined{bbl@deactivate}{}{\activesoff}
 \ifnum\draft=\@ne
   \wasdrafttrue
 \else
   \wasdraftfalse%
 \fi
 \def\LaTeXparams{}%
 \dispkind=\z@
 \def\LaTeXparams{}%
 \doFRAMEparams{#1}%
 \ifnum\dispkind=\z@\IFRAME{#2}{#3}{#4}{#7}{#8}{#5}\else
  \ifnum\dispkind=\@ne\DFRAME{#2}{#3}{#7}{#8}{#5}\else
   \ifnum\dispkind=\tw@
    \edef\@tempa{\noexpand\FFRAME{\LaTeXparams}}%
    \@tempa{#2}{#3}{#5}{#6}{#7}{#8}%
    \fi
   \fi
  \fi
  \ifwasdraft\draft=1\else\draft=0\fi{}%
  \egroup
 }%
\def\TEXUX#1{"texux"}
\long\def\QQQ#1#2{%
     \long\expandafter\def\csname#1\endcsname{#2}}%
\long\def\QQA#1#2{}%
\def\QTR#1#2{{\csname#1\endcsname #2}}
\def\EXPAND#1[#2]#3{}%
\def\NOEXPAND#1[#2]#3{}%
\def\LaTeXparent#1{}%
\def\ChildStyles#1{}%
\def\ChildDefaults#1{}%
\def\QTagDef#1#2#3{}%
\def\QQfnmark#1{\footnotemark}
\def\makeatletter\input gnuindex.sty\makeatother\makeindex{\makeatletter\input gnuindex.sty\makeatother\makeindex}%
\def\initial#1{\bigbreak{\raggedright\large\bf #1}\kern 2\p@\penalty3000}}%
 \def\abstract{%
  \if@twocolumn
   \section*{Abstract (Not appropriate in this style!)}%
   \else \small 
   \begin{center}{\bf Abstract\vspace{-.5em}\vspace{\z@}}\end{center}%
   \quotation 
   \fi
  }%
   \def\registered{\relax\ifmmode{}\r@gistered
                    \else$\m@th\r@gistered$\fi}%
 \def\r@gistered{^{\ooalign
  {\hfil\raise.07ex\hbox{$\scriptstyle\rm\text{R}$}\hfil\crcr
  \mathhexbox20D}}}}{}%
\newdimen\theight
\def\Column{%
 \vadjust{\setbox\z@=\hbox{\scriptsize\quad\quad tcol}%
  \theight=\ht\z@\advance\theight by \dp\z@\advance\theight by \lineskip
  \kern -\theight \vbox to \theight{%
   \rightline{\rlap{\box\z@}}%
   \vss
   }%
  }%
 }%
\def\qed{%
 \ifhmode\unskip\nobreak\fi\ifmmode\ifinner\else\hskip5\p@\fi\fi
 \hbox{\hskip5\p@\vrule width4\p@ height6\p@ depth1.5\p@\hskip\p@}%
 }%
\def\miss{\hbox{\vrule height2\p@ width 2\p@ depth\z@}}%
\def\tcol#1{{\baselineskip=6\p@ \vcenter{#1}} \Column}  %
\def\newfmtname{LaTeX2e}
\def\chkcompat{%
   \if@compatibility
   \else
     \usepackage{latexsym}
   \fi
}
  \DeclareOldFontCommand{\rm}{\normalfont\rmfamily}{\mathrm}
  \DeclareOldFontCommand{\sf}{\normalfont\sffamily}{\mathsf}
  \DeclareOldFontCommand{\tt}{\normalfont\ttfamily}{\mathtt}
  \DeclareOldFontCommand{\bf}{\normalfont\bfseries}{\mathbf}
  \DeclareOldFontCommand{\it}{\normalfont\itshape}{\mathit}
  \DeclareOldFontCommand{\sl}{\normalfont\slshape}{\@nomath\sl}
  \DeclareOldFontCommand{\sc}{\normalfont\scshape}{\@nomath\sc}
\def\alpha{\Greekmath 010B }%
\def\beta{\Greekmath 010C }%
\def\gamma{\Greekmath 010D }%
\def\delta{\Greekmath 010E }%
\def\epsilon{\Greekmath 010F }%
\def\zeta{\Greekmath 0110 }%
\def\eta{\Greekmath 0111 }%
\def\theta{\Greekmath 0112 }%
\def\iota{\Greekmath 0113 }%
\def\kappa{\Greekmath 0114 }%
\def\lambda{\Greekmath 0115 }%
\def\mu{\Greekmath 0116 }%
\def\nu{\Greekmath 0117 }%
\def\xi{\Greekmath 0118 }%
\def\pi{\Greekmath 0119 }%
\def\rho{\Greekmath 011A }%
\def\sigma{\Greekmath 011B }%
\def\tau{\Greekmath 011C }%
\def\upsilon{\Greekmath 011D }%
\def\phi{\Greekmath 011E }%
\def\chi{\Greekmath 011F }%
\def\psi{\Greekmath 0120 }%
\def\omega{\Greekmath 0121 }%
\def\varepsilon{\Greekmath 0122 }%
\def\vartheta{\Greekmath 0123 }%
\def\varpi{\Greekmath 0124 }%
\def\varrho{\Greekmath 0125 }%
\def\varsigma{\Greekmath 0126 }%
\def\varphi{\Greekmath 0127 }%
\def\nabla{\Greekmath 0272 }
\def\FindBoldGroup{%
   {\setbox0=\hbox{$\mathbf{x\global\edef\theboldgroup{\the\mathgroup}}$}}%
}
\def\Greekmath#1#2#3#4{%
    \if@compatibility
        \ifnum\mathgroup=\symbold
           \mathchoice{\mbox{\boldmath$\displaystyle\mathchar"#1#2#3#4$}}%
                      {\mbox{\boldmath$\textstyle\mathchar"#1#2#3#4$}}%
                      {\mbox{\boldmath$\scriptstyle\mathchar"#1#2#3#4$}}%
                      {\mbox{\boldmath$\scriptscriptstyle\mathchar"#1#2#3#4$}}%
        \else
           \mathchar"#1#2#3#4%
        \fi 
    \else 
        \FindBoldGroup
        \ifnum\mathgroup=\theboldgroup 
           \mathchoice{\mbox{\boldmath$\displaystyle\mathchar"#1#2#3#4$}}%
                      {\mbox{\boldmath$\textstyle\mathchar"#1#2#3#4$}}%
                      {\mbox{\boldmath$\scriptstyle\mathchar"#1#2#3#4$}}%
                      {\mbox{\boldmath$\scriptscriptstyle\mathchar"#1#2#3#4$}}%
        \else
           \mathchar"#1#2#3#4%
        \fi     	    
	  \fi}
\newif\ifGreekBold  \GreekBoldfalse
\let\SAVEPBF=\pbf
\def\pbf{\GreekBoldtrue\SAVEPBF}%
  \newcounter{equationnumber}  
  \def\mathletters{%
     \addtocounter{equation}{1}
     \edef\@currentlabel{\theequation}%
     \setcounter{equationnumber}{\c@equation}
     \setcounter{equation}{0}%
     \edef\theequation{\@currentlabel\noexpand\alph{equation}}%
  }
    \def\BibTeX{{\rm B\kern-.05em{\sc i\kern-.025em b}\kern-.08em
                 T\kern-.1667em\lower.7ex\hbox{E}\kern-.125emX}}}{}%
\def\AmS{{\protect\usefont{OMS}{cmsy}{m}{n}%
                A\kern-.1667em\lower.5ex\hbox{M}\kern-.125emS}}}{}%
\let\DOTSI\relax
\def\RIfM@{\relax\ifmmode}%
\def\FN@{\futurelet\next}%
\def\iint{\DOTSI\intno@\tw@\FN@\ints@}%
\def\iiint{\DOTSI\intno@\thr@@\FN@\ints@}%
\def\iiiint{\DOTSI\intno@4 \FN@\ints@}%
\def\idotsint{\DOTSI\intno@\z@\FN@\ints@}%
\def\ints@{\findlimits@\ints@@}%
\newif\iflimtoken@
\newif\iflimits@
\def\findlimits@{\limtoken@true\ifx\next\limits\limits@true
 \else\ifx\next\nolimits\limits@false\else
 \limtoken@false\ifx\ilimits@\nolimits\limits@false\else
 \ifinner\limits@false\else\limits@true\fi\fi\fi\fi}%
\def\multint@{\int\ifnum\intno@=\z@\intdots@                          
 \else\intkern@\fi                                                    
 \ifnum\intno@>\tw@\int\intkern@\fi                                   
 \ifnum\intno@>\thr@@\int\intkern@\fi                                 
 \int}
\def\multintlimits@{\intop\ifnum\intno@=\z@\intdots@\else\intkern@\fi
 \ifnum\intno@>\tw@\intop\intkern@\fi
 \ifnum\intno@>\thr@@\intop\intkern@\fi\intop}%
\def\intic@{%
    \mathchoice{\hskip.5em}{\hskip.4em}{\hskip.4em}{\hskip.4em}}%
\def\negintic@{\mathchoice
 {\hskip-.5em}{\hskip-.4em}{\hskip-.4em}{\hskip-.4em}}%
\def\ints@@{\iflimtoken@                                              
 \def\ints@@@{\iflimits@\negintic@
   \mathop{\intic@\multintlimits@}\limits                             
  \else\multint@\nolimits\fi                                          
  \eat@}
 \else                                                                
 \def\ints@@@{\iflimits@\negintic@
  \mathop{\intic@\multintlimits@}\limits\else
  \multint@\nolimits\fi}\fi\ints@@@}%
\def\intkern@{\mathchoice{\!\!\!}{\!\!}{\!\!}{\!\!}}%
\def\plaincdots@{\mathinner{\cdotp\cdotp\cdotp}}%
\def\intdots@{\mathchoice{\plaincdots@}%
 {{\cdotp}\mkern1.5mu{\cdotp}\mkern1.5mu{\cdotp}}%
 {{\cdotp}\mkern1mu{\cdotp}\mkern1mu{\cdotp}}%
 {{\cdotp}\mkern1mu{\cdotp}\mkern1mu{\cdotp}}}%
\def\RIfM@{\relax\protect\ifmmode}
\def\text{\RIfM@\expandafter\text@\else\expandafter\mbox\fi}
\let\nfss@text\text
\def\text@#1{\mathchoice
   {\textdef@\displaystyle\f@size{#1}}%
   {\textdef@\textstyle\tf@size{\firstchoice@false #1}}%
   {\textdef@\textstyle\sf@size{\firstchoice@false #1}}%
   {\textdef@\textstyle \ssf@size{\firstchoice@false #1}}%
   \glb@settings}
\def\textdef@#1#2#3{\hbox{{%
                    \everymath{#1}%
                    \let\f@size#2\selectfont
                    #3}}}
\newif\iffirstchoice@
\def\Let@{\relax\iffalse{\fi\let\\=\cr\iffalse}\fi}%
\def\vspace@{\def\vspace##1{\crcr\noalign{\vskip##1\relax}}}%
\def\multilimits@{\bgroup\vspace@\Let@
 \baselineskip\fontdimen10 \scriptfont\tw@
 \advance\baselineskip\fontdimen12 \scriptfont\tw@
 \lineskip\thr@@\fontdimen8 \scriptfont\thr@@
 \lineskiplimit\lineskip
 \vbox\bgroup\ialign\bgroup\hfil$\m@th\scriptstyle{##}$\hfil\crcr}%
\def\Sb{_\multilimits@}%
\def\endSb{\crcr\egroup\egroup\egroup}%
\def\Sp{^\multilimits@}%
\newdimen\ex@
\def\rightarrowfill@#1{$#1\m@th\mathord-\mkern-6mu\cleaders
 \hbox{$#1\mkern-2mu\mathord-\mkern-2mu$}\hfill
 \mkern-6mu\mathord\rightarrow$}%
\def\leftarrowfill@#1{$#1\m@th\mathord\leftarrow\mkern-6mu\cleaders
 \hbox{$#1\mkern-2mu\mathord-\mkern-2mu$}\hfill\mkern-6mu\mathord-$}%
\def\leftrightarrowfill@#1{$#1\m@th\mathord\leftarrow
\mkern-6mu\cleaders
 \hbox{$#1\mkern-2mu\mathord-\mkern-2mu$}\hfill
 \mkern-6mu\mathord\rightarrow$}%
\def\overrightarrow{\mathpalette\overrightarrow@}%
\def\overrightarrow@#1#2{\vbox{\ialign{##\crcr\rightarrowfill@#1\crcr
 \noalign{\kern-\ex@\nointerlineskip}$\m@th\hfil#1#2\hfil$\crcr}}}%
\def\overleftarrow{\mathpalette\overleftarrow@}%
\def\overleftarrow@#1#2{\vbox{\ialign{##\crcr\leftarrowfill@#1\crcr
 \noalign{\kern-\ex@\nointerlineskip}$\m@th\hfil#1#2\hfil$\crcr}}}%
\def\overleftrightarrow{\mathpalette\overleftrightarrow@}%
\def\overleftrightarrow@#1#2{\vbox{\ialign{##\crcr
   \leftrightarrowfill@#1\crcr
 \noalign{\kern-\ex@\nointerlineskip}$\m@th\hfil#1#2\hfil$\crcr}}}%
\def\underrightarrow{\mathpalette\underrightarrow@}%
\def\underrightarrow@#1#2{\vtop{\ialign{##\crcr$\m@th\hfil#1#2\hfil
  $\crcr\noalign{\nointerlineskip}\rightarrowfill@#1\crcr}}}%
\def\underleftarrow{\mathpalette\underleftarrow@}%
\def\underleftarrow@#1#2{\vtop{\ialign{##\crcr$\m@th\hfil#1#2\hfil
  $\crcr\noalign{\nointerlineskip}\leftarrowfill@#1\crcr}}}%
\def\underleftrightarrow{\mathpalette\underleftrightarrow@}%
\def\underleftrightarrow@#1#2{\vtop{\ialign{##\crcr$\m@th
  \hfil#1#2\hfil$\crcr
 \noalign{\nointerlineskip}\leftrightarrowfill@#1\crcr}}}%
\def\qopnamewl@#1{\mathop{\operator@font#1}\nlimits@}
\let\nlimits@\displaylimits
\def\setboxz@h{\setbox\z@\hbox}
\def\varlim@#1#2{\mathop{\vtop{\ialign{##\crcr
 \hfil$#1\m@th\operator@font lim$\hfil\crcr
 \noalign{\nointerlineskip}#2#1\crcr
 \noalign{\nointerlineskip\kern-\ex@}\crcr}}}}
 \def\rightarrowfill@#1{\m@th\setboxz@h{$#1-$}\ht\z@\z@
  $#1\copy\z@\mkern-6mu\cleaders
  \hbox{$#1\mkern-2mu\box\z@\mkern-2mu$}\hfill
  \mkern-6mu\mathord\rightarrow$}
\def\leftarrowfill@#1{\m@th\setboxz@h{$#1-$}\ht\z@\z@
  $#1\mathord\leftarrow\mkern-6mu\cleaders
  \hbox{$#1\mkern-2mu\copy\z@\mkern-2mu$}\hfill
  \mkern-6mu\box\z@$}
\def\projlim{\qopnamewl@{proj\,lim}}
\def\injlim{\qopnamewl@{inj\,lim}}
\def\varinjlim{\mathpalette\varlim@\rightarrowfill@}
\def\varprojlim{\mathpalette\varlim@\leftarrowfill@}
\def\varliminf{\mathpalette\varliminf@{}}
\def\varliminf@#1{\mathop{\underline{\vrule\@depth.2\ex@\@width\z@
   \hbox{$#1\m@th\operator@font lim$}}}}
\def\varlimsup{\mathpalette\varlimsup@{}}
\def\varlimsup@#1{\mathop{\overline
  {\hbox{$#1\m@th\operator@font lim$}}}}
\def\binom#1#2{{#1 \choose #2}}%
\def\align{\@verbatim \frenchspacing\@vobeyspaces \@alignverbatim
You are using the "align" environment in a style in which it is not defined.}
\let\csname endalign*\endcsname =\endtrivlist
\def\alignat{\@verbatim \frenchspacing\@vobeyspaces \@alignatverbatim
You are using the "alignat" environment in a style in which it is not defined.}
\let\csname endalignat*\endcsname =\endtrivlist
\def\xalignat{\@verbatim \frenchspacing\@vobeyspaces \@xalignatverbatim
You are using the "xalignat" environment in a style in which it is not defined.}
\let\csname endxalignat*\endcsname =\endtrivlist
\def\gather{\@verbatim \frenchspacing\@vobeyspaces \@gatherverbatim
You are using the "gather" environment in a style in which it is not defined.}
\let\csname endgather*\endcsname =\endtrivlist
\def\multiline{\@verbatim \frenchspacing\@vobeyspaces \@multilineverbatim
You are using the "multiline" environment in a style in which it is not defined.}
\let\csname endmultiline*\endcsname =\endtrivlist
\def\arrax{\@verbatim \frenchspacing\@vobeyspaces \@arraxverbatim
You are using a type of "array" construct that is only allowed in AmS-LaTeX.}
\def\tabulax{\@verbatim \frenchspacing\@vobeyspaces \@tabulaxverbatim
You are using a type of "tabular" construct that is only allowed in AmS-LaTeX.}
\let\csname endarrax*\endcsname =\endtrivlist
\let\csname endtabulax*\endcsname =\endtrivlist
\def\@@eqncr{\let\@tempa\relax
    \ifcase\@eqcnt \def\@tempa{& & &}\or \def\@tempa{& &}%
      \else \def\@tempa{&}\fi
     \@tempa
     \if@eqnsw
        \iftag@
           \@taggnum
        \else
           \@eqnnum\stepcounter{equation}%
        \fi
     \fi
     \global\tag@false
     \global\@eqnswtrue
     \global\@eqcnt\z@\cr}
 \def\endequation{%
     \ifmmode\ifinner 
      \iftag@
        \addtocounter{equation}{-1} 
        $\hfil
           \displaywidth\linewidth\@taggnum\egroup \endtrivlist
        \global\tag@false
        \global\@ignoretrue   
      \else
        $\hfil
           \displaywidth\linewidth\@eqnnum\egroup \endtrivlist
        \global\tag@false
        \global\@ignoretrue 
      \fi
     \else   
      \iftag@
        \addtocounter{equation}{-1} 
        \eqno \hbox{\@taggnum}
        \global\tag@false%
        $$\global\@ignoretrue
      \else
        \eqno \hbox{\@eqnnum}
        $$\global\@ignoretrue
      \fi
     \fi\fi
 } 
 \newif\iftag@ \tag@false
 \def\tag{\@ifnextchar*{\@tagstar}{\@tag}}
 \def\@tag#1{%
     \global\tag@true
     \global\def\@taggnum{(#1)}}
 \def\@tagstar*#1{%
     \global\tag@true
     \global\def\@taggnum{#1}%
}
\journal{ArXiv}
\begin{document}
%
\begin{frontmatter}%

\title{A Polynomial-Time Algorithm for Unconstrained Binary Quadratic Optimization}%

\author{Juan Ignacio Mulero-Mart\'{i}nez}
\address{Department of Automatic Control, Electrical Engineering and Electronic Technology, Technical University of Cartagena, Campus Muralla del Mar 30203, Spain. \\
E-mail:  juan.mulero@upct.es}

%
%

\begin{abstract}
In this paper, an exact algorithm in polynomial time is developed to solve
unrestricted binary quadratic programs. The computational complexity is
$O\left(  n^{\frac{15}{2}}\right)  $, although very conservative, it is
sufficient to prove that this minimization problem is in the complexity class
$P$. The implementation aspects are also described in detail with a special
emphasis on the transformation of the quadratic program into a linear program
that can be solved in polynomial time. The algorithm was implemented in MATLAB
and checked by generating five million matrices of arbitrary dimensions up
to 30 with random entries in the range $\left[  -50,50\right]  $. All the
experiments carried out have revealed that the method works correctly.
\end{abstract}

\begin{keyword}
Unconstrained binary quadratic programming, global optimization, complexity measures, and classes
\end{keyword}

\end{frontmatter}%

\section{Introduction}

The unconstrained binary quadratic programming (UBQP) problem occurs in many
computer vision, image processing, and pattern recognition applications,
including but not limited to image segmentation/pixel labeling, image
registration/matching, image denoising/restoration, partitioning of graphs,
data clustering, and data classification. Much of the algorithmic progress at
UBQP has been due to the computer vision research community, \cite{969114},
\cite{4204169}, \cite{6619019}, \cite{1262177}. For example, the objective
functions in the UBQP problem are a class of energy functions that are widely
useful and have had very striking success in computer vision (see
\cite{Felzenszwalb2011} for a recent survey).

The UBQP problem dates back to the 1960s where pseudo-boolean functions and
binary quadratic optimization were introduced by Hammer and Rudeanu,
\cite{Hammer1968}. Since then, it has become an active research area in
Discrete Mathematics and Complexity Theory (surveys in \cite{Boros2002} and in
\cite{Hansen1993}, give a good account of this topic).

Currently, this problem has become a major problem in recent years due to the
discovery that UBQP represents a unifying framework for a very wide variety of
combinatorial optimization problems. In particular, as pointed out in
\cite{Kochenberger2006} the UBQP model includes the following important
combinatorial optimization problems: maximum cut problems, maximum, click
problems, maximum independent set problems, graph coloring problems,
satisfiability problems, quadratic knapsack problems, etc.

The UBQP problem is generally NP-Hard, \cite{Pardalos1992} (you can use the
UBQP problem to optimize the number of constraints satisfied on a 0/1 integer
Programming instance, one of the Karp's 21 NP-complete problems). Only a few
special cases are solvable in polynomial time. In fact, the problem of
determining local minima of pseudo-boolean functions is found in the
PLS-complete class (the class of hardest polynomial local search problems),
\cite{Pardalos1992}, \cite{Schaffer1991}, and in general, local search
problems are found in the EXP class, \cite{Tovey1985}, \cite{Tovey1986},
\cite{Hammer1988}, \cite{WilliamsonHoke1988}, \cite{EmamyK1989},
\cite{Tovey2003}. Global optimization methods are NP-complete. To obtain a
global optimal solution by exact methods (generally based on branch and bound
strategies), the following techniques should be highlighted: the combinatorial
variable elimination algorithm\footnote{\bigskip This algorithm is in the
class EXP and only runs in polynomial time for pseudo-Boolean functions
associated with graphs of bounded tree-width.}, \cite{Hammer1968},
\cite{Hammer1963}, \cite{Crama1990}; the continuous relaxation with
linearization (where the requirement of binary variables is replaced by a
weaker restriction of membership to the closed interval $\left[  0,1\right]
$), \cite{Balas1984}, \cite{Balas1978}; the posiform transformations,
\cite{Hammer1984}, \cite{Bourjolly1992}; the conflict graphs (the connection
between posiform minimization problem and the maximum weighted stability),
\cite{Hammer1978}, \cite{Hamor1980}, \cite{Ebenegger1984}, \cite{Alexe2003},
\cite{Hammer1985}, \cite{Hertz1997}; the linearization strategies such as
standard linearization (consisting in transforming the minimization problem
into an equivalent linear 0--1 programming problem), \cite{Dantzig1960},
\cite{Fortet1960}, \cite{Glover1974}, \cite{Hansen1979}, Glover method,
\cite{Glover1975}, improved linearization strategy, \cite{Sherali2006}, (the
reader is referred to \cite{Forrester2020} for a recent comparison of these
methods); semidefinite-based solvers, \cite{Krislock2017} (and the references
therein), et cetera. The reader is referred to the survey
\cite{Kochenberger2014} for a detailed description of these techniques until 2014.

Many researchers have extensively studied the UBQP problem, however, up to 
date nobody has succeeded in developing an algorithm running in polynomial
time. We claim, and this is the main contribution of this work, that UBQP is
in the complexity class $P$. The main idea is to transform the UBQP problem
into a linear programming (LP) problem, that is solved in polynomial time. We
guarantee that the minimum of the LP problem is also the minimum of the UBQP
problem. We also provide the implementation details of the algorithm,
motivated by the following aspects that any work on discrete optimization
should present:

(i) Describe in detail the algorithms to be able to reproduce the experiments
and even improve them in the future.

(ii) Providing the source code so that it is openly available to the
scientific community: Interestingly, a recent study by Dunning has revealed
that only $4\%$ of papers on heuristic methods provide the source code,
\cite{Dunning2018}.

(iii) Establish random test problems with an arbitrary input size. Here it is
important to indicate the ranges of the parameters in the UBQP problem.

This procedure has been implemented in MATLAB (source code is provided as
supplementary material) and checked with five million random matrices up to
dimension $30$, with entries in the range $\left[  -50,50\right]  $.

An advantage of this algorithm is its modularity concerning the dimension of the problem: the set of linear constraints of the equivalent linear programming problem is fixed for a constant arbitrary dimension regardless of the objective function of the quadratic problem. Finally, we highlight that the objective of the work is not the speed of resolution of the problem but simply to show that the UBQP problem can be solved in polynomial time. Future works will analyze large-scale UBQP problems as well as the design of more efficient polynomial-time algorithms.

The paper is organized as follows: Section 2 describes the relaxation process
for the UBQP problem. Next in section 3, the main result about the equivalence
of the UBQP problem with a linear programming problem is addressed. For
simplicity in the exposition, the case $n=3$ is presented first and then it is
generalized for $n>3$. The computational complexity in both time and space is
analyzed in section 4. The implementation features about primary
variables, transformation of the objective function, and convexity and
consistency constraints are treated in section 5. The design of the experiment
for testing the solution is presented in section 6. Finally, section 7 is
dedicated to discussing the main aspects presented in this work as well as
possible future works.

\section{Background}

Let $\mathcal{B}=\left\{  0,1\right\}  $ and $f:\mathcal{B}^{n}\rightarrow
\mathbb{R}$ be a quadratic objective function defined as $f\left(  x\right)
=x^{T}Qx+b^{T}x$ with $Q=Q^{T}\in\mathbb{R}^{n\times n}$, $diag\left(
Q\right)  =\left(  0,\ldots,0\right)  $ and $b\in\mathbb{R}^{n}$. The UBQP
problem is defined as follows:

\begin{description}
\item[\textbf{UBQP}:] $\min_{x\in\mathcal{B}}f\left(  x\right)  $.
\end{description}

The objective function $f$ is usually called a quadratic pseudo-boolean
function, i.e. multilinear polynomials in binary unknowns. These functions
represent a class of energy functions that are widely useful and have had very
striking success in computer vision (see \cite{Felzenszwalb2011} for a recent survey).

This problem can naturally be extended to the solid hypercube $\mathcal{H}%
_{n}=\left[  0,1\right]  ^{n}$ spanned by $\mathcal{B}^{n}$. The extension of
the pseudo-Boolean function $f:\mathcal{B}^{n}\rightarrow\mathbb{R}$ is a
function $f^{pol}:\mathcal{H}_{n}\rightarrow\mathbb{R}$ that coincides with
$f$ at the vertices of $\mathcal{H}_{n}$. Rosenberg discovered an attractive
feature regarding the multilinear polynomial extension $f^{pol}$,
\cite{RO_1972__6_2_95_0}: the minimum of $f^{pol}$ is always attained at a
vertex of $\mathcal{H}_{n}$, and hence, that this minimum coincides with the
minimum of $f$. From this, our optimization problem is reduced to the
following relaxed quadratic problem:

\begin{description}
\item[\textbf{(P}$_{n}$\textbf{):}] $\min_{x\in\mathcal{H}_{n}}f\left(  x\right)  $.
\end{description}

\section{Main Result}

In this section, we prove that Problem (P) can be reduced to a Linear
Programming Problem.

\subsection{A Simple Case}

We begin with the simple case of minimization of a quadratic form $f\left(
x\right)  $ in the cube $\mathcal{H}_{3}$. Here the minimization problem is stated as follows:

\begin{description}
\item[\textbf{(P}$_{3}$\textbf{):}] $\min_{x\in\mathcal{H}_{3}}f\left(
x\right)  $.
\end{description}

Associated with the cube $\mathcal{H}_{3}$ we have a map $\phi:\mathcal{H}%
_{3}\rightarrow\left[  0,2\right]  ^{3}\times\left[  0,\frac{1}{2}\right]
^{3}$ defined as%
\[
\phi\left(  x_{1},x_{2},x_{3}\right)  =\left(
\begin{array}
[c]{c}%
\frac{x_{1}+2x_{1}x_{2}+x_{2}}{2}\\
\frac{x_{1}+2x_{1}x_{3}+x_{3}}{2}\\
\frac{x_{2}+2x_{2}x_{3}+x_{3}}{2}\\
\frac{x_{1}-2x_{1}x_{2}+x_{2}}{2}\\
\frac{x_{1}-2x_{1}x_{3}+x_{3}}{2}\\
\frac{x_{2}-2x_{2}x_{3}+x_{3}}{2}%
\end{array}
\right).
\]

An important fact is that the cube $\mathcal{H}_{3}$ can be expressed as a
convex hull of a finite set of vertices $V=\left\{  0,1\right\}  ^{3}$. For
simplicity, we enumerate the vertices in $V$ as $p_{1},p_{2},\ldots,p_{8}$ so
that $\mathcal{H}_{3}$ can be written as convex combinations of those vertices,
i.e. $\mathcal{H}_{3}=conv\left(  V\right)  $, where%
\[
conv\left(  V\right)  =\left\{  \sum_{i=1}^{8}\alpha_{i}p_{i}:a_{i}%
\geq0\text{, }\sum_{i=1}^{8}\alpha_{i}=1\right\}.
\]

The map $\phi$ is a composition of the maps $\alpha:\mathcal{H}_{3}%
\rightarrow\left[  0,1\right]  ^{6}$ and $\beta:\left[  0,1\right]
^{6}\rightarrow\left[  0,2\right]  ^{3}\times\left[  0,\frac{1}{2}\right]
^{3}$ defined as%
\begin{equation}
\alpha\left(  x\right)  =\left(  x_{1},x_{1}x_{2},x_{1}x_{3},x_{2},x_{2}%
x_{3},x_{3}\right),  \label{EQ7}%
\end{equation}%
\begin{equation}
\beta\left(  y\right)  =E_{3}y\text{ for every }y\in\left[  0,1\right]
^{6},\label{EQ6}%
\end{equation}
where $E_{3}$ is
\begin{equation}
E_{3}=\frac{1}{2}\left(
\begin{array}
[c]{cccccc}%
1 & 2 & 0 & 1 & 0 & 0\\
1 & 0 & 2 & 0 & 0 & 1\\
0 & 0 & 0 & 1 & 2 & 1\\
1 & -2 & 0 & 1 & 0 & 0\\
1 & 0 & -2 & 0 & 0 & 1\\
0 & 0 & 0 & 1 & -2 & 1
\end{array}
\right).  \label{EQ14}%
\end{equation}
More specifically $\phi=\beta\circ\alpha$.

\begin{figure*}
[ptb]
\begin{center}
\tikzset{every picture/.style={line width=0.75pt}} 

\begin{tikzpicture}[x=0.75pt,y=0.75pt,yscale=-1,xscale=1]

\draw    (309,101) -- (309,172.73) ;
\draw [shift={(309,174.73)}, rotate = 270] [color={rgb, 255:red, 0; green, 0; blue, 0 }  ][line width=0.75]    (10.93,-3.29) .. controls (6.95,-1.4) and (3.31,-0.3) .. (0,0) .. controls (3.31,0.3) and (6.95,1.4) .. (10.93,3.29)   ;
\draw    (285.33,175.33) -- (212.75,102.75) ;
\draw [shift={(211.33,101.33)}, rotate = 405] [color={rgb, 255:red, 0; green, 0; blue, 0 }  ][line width=0.75]    (10.93,-3.29) .. controls (6.95,-1.4) and (3.31,-0.3) .. (0,0) .. controls (3.31,0.3) and (6.95,1.4) .. (10.93,3.29)   ;
\draw    (288.33,82.37) -- (252.33,81.63) -- (219.33,81.63) ;
\draw [shift={(217.33,81.63)}, rotate = 360] [color={rgb, 255:red, 0; green, 0; blue, 0 }  ][line width=0.75]    (10.93,-3.29) .. controls (6.95,-1.4) and (3.31,-0.3) .. (0,0) .. controls (3.31,0.3) and (6.95,1.4) .. (10.93,3.29)   ;
\draw    (356,69) .. controls (357.67,70.67) and (357.67,72.33) .. (356,74) .. controls (354.33,75.67) and (354.33,77.33) .. (356,79) .. controls (357.67,80.67) and (357.67,82.33) .. (356,84) .. controls (354.33,85.67) and (354.33,87.33) .. (356,89) .. controls (357.67,90.67) and (357.67,92.33) .. (356,94) .. controls (354.33,95.67) and (354.33,97.33) .. (356,99) .. controls (357.67,100.67) and (357.67,102.33) .. (356,104) .. controls (354.33,105.67) and (354.33,107.33) .. (356,109) .. controls (357.67,110.67) and (357.67,112.33) .. (356,114) .. controls (354.33,115.67) and (354.33,117.33) .. (356,119) .. controls (357.67,120.67) and (357.67,122.33) .. (356,124) .. controls (354.33,125.67) and (354.33,127.33) .. (356,129) .. controls (357.67,130.67) and (357.67,132.33) .. (356,134) .. controls (354.33,135.67) and (354.33,137.33) .. (356,139) .. controls (357.67,140.67) and (357.67,142.33) .. (356,144) .. controls (354.33,145.67) and (354.33,147.33) .. (356,149) .. controls (357.67,150.67) and (357.67,152.33) .. (356,154) .. controls (354.33,155.67) and (354.33,157.33) .. (356,159) .. controls (357.67,160.67) and (357.67,162.33) .. (356,164) .. controls (354.33,165.67) and (354.33,167.33) .. (356,169) -- (356,170.73) -- (356,178.73) ;
\draw [shift={(356,180.73)}, rotate = 270] [color={rgb, 255:red, 0; green, 0; blue, 0 }  ][line width=0.75]    (10.93,-3.29) .. controls (6.95,-1.4) and (3.31,-0.3) .. (0,0) .. controls (3.31,0.3) and (6.95,1.4) .. (10.93,3.29)   ;
\draw    (252,192.73) .. controls (249.64,192.74) and (248.46,191.56) .. (248.46,189.2) .. controls (248.47,186.84) and (247.29,185.66) .. (244.93,185.66) .. controls (242.57,185.67) and (241.39,184.49) .. (241.39,182.13) .. controls (241.4,179.77) and (240.22,178.59) .. (237.86,178.59) .. controls (235.5,178.6) and (234.32,177.42) .. (234.32,175.06) .. controls (234.33,172.7) and (233.15,171.52) .. (230.79,171.52) .. controls (228.43,171.52) and (227.25,170.34) .. (227.25,167.98) .. controls (227.25,165.63) and (226.07,164.45) .. (223.72,164.45) .. controls (221.36,164.45) and (220.18,163.27) .. (220.18,160.91) .. controls (220.18,158.55) and (219,157.37) .. (216.64,157.38) .. controls (214.28,157.38) and (213.1,156.2) .. (213.11,153.84) .. controls (213.11,151.48) and (211.93,150.3) .. (209.57,150.31) .. controls (207.21,150.31) and (206.03,149.13) .. (206.04,146.77) .. controls (206.04,144.41) and (204.86,143.23) .. (202.5,143.24) .. controls (200.14,143.24) and (198.96,142.06) .. (198.97,139.7) .. controls (198.97,137.34) and (197.79,136.16) .. (195.43,136.16) .. controls (193.08,136.16) and (191.9,134.98) .. (191.9,132.63) .. controls (191.9,130.27) and (190.72,129.09) .. (188.36,129.09) .. controls (186,129.1) and (184.82,127.92) .. (184.82,125.56) .. controls (184.83,123.2) and (183.65,122.02) .. (181.29,122.02) .. controls (178.93,122.03) and (177.75,120.85) .. (177.75,118.49) .. controls (177.76,116.13) and (176.58,114.95) .. (174.22,114.95) .. controls (171.86,114.96) and (170.68,113.78) .. (170.68,111.42) -- (168.07,108.8) -- (162.41,103.15) ;
\draw [shift={(161,101.73)}, rotate = 405] [color={rgb, 255:red, 0; green, 0; blue, 0 }  ][line width=0.75]    (10.93,-3.29) .. controls (6.95,-1.4) and (3.31,-0.3) .. (0,0) .. controls (3.31,0.3) and (6.95,1.4) .. (10.93,3.29)   ;
\draw    (343,55.5) .. controls (341.33,57.17) and (339.67,57.17) .. (338,55.5) .. controls (336.33,53.83) and (334.67,53.83) .. (333,55.5) .. controls (331.33,57.17) and (329.67,57.17) .. (328,55.5) .. controls (326.33,53.83) and (324.67,53.83) .. (323,55.5) .. controls (321.33,57.17) and (319.67,57.17) .. (318,55.5) .. controls (316.33,53.83) and (314.67,53.83) .. (313,55.5) .. controls (311.33,57.17) and (309.67,57.17) .. (308,55.5) .. controls (306.33,53.83) and (304.67,53.83) .. (303,55.5) .. controls (301.33,57.17) and (299.67,57.17) .. (298,55.5) .. controls (296.33,53.83) and (294.67,53.83) .. (293,55.5) .. controls (291.33,57.17) and (289.67,57.17) .. (288,55.5) .. controls (286.33,53.83) and (284.67,53.83) .. (283,55.5) .. controls (281.33,57.17) and (279.67,57.17) .. (278,55.5) .. controls (276.33,53.83) and (274.67,53.83) .. (273,55.5) .. controls (271.33,57.17) and (269.67,57.17) .. (268,55.5) .. controls (266.33,53.83) and (264.67,53.83) .. (263,55.5) .. controls (261.33,57.17) and (259.67,57.17) .. (258,55.5) .. controls (256.33,53.83) and (254.67,53.83) .. (253,55.5) .. controls (251.33,57.17) and (249.67,57.17) .. (248,55.5) .. controls (246.33,53.83) and (244.67,53.83) .. (243,55.5) .. controls (241.33,57.17) and (239.67,57.17) .. (238,55.5) .. controls (236.33,53.83) and (234.67,53.83) .. (233,55.5) .. controls (231.33,57.17) and (229.67,57.17) .. (228,55.5) -- (224.1,55.5) -- (216.1,55.5) ;
\draw [shift={(214.1,55.5)}, rotate = 360] [color={rgb, 255:red, 0; green, 0; blue, 0 }  ][line width=0.75]    (10.93,-3.29) .. controls (6.95,-1.4) and (3.31,-0.3) .. (0,0) .. controls (3.31,0.3) and (6.95,1.4) .. (10.93,3.29)   ;
\draw   (377,43.33) -- (631.31,43.33) -- (631.31,193.33) -- (377,193.33) -- cycle ;

\draw (309,82) node    {$\mathcal{H}_{3}$};
\draw (309,188) node    {$\alpha (\mathcal{H}_{3})$};
\draw (198,82) node    {$\mathcal{C}_{n}$};
\draw (325,20) node    {$w =\phi ( x) =\left(\frac{x_{1} +2x_{1} x_{2} +x_{2}}{2} ,\frac{x_{1} +2x_{1} x_{3} +x_{3}}{2} ,\frac{x_{2} +2x_{2} x_{3} +x_{3}}{2} ,\frac{x_{1} -2x_{1} x_{2} +x_{2}}{2} ,\frac{x_{1} -2x_{1} x_{3} +x_{3}}{2} ,\frac{x_{2} -2x_{2} x_{3} +x_{3}}{2}\right)$};
\draw (183,46.5) node [anchor=north west][inner sep=0.75pt]   [align=left] {$\displaystyle w$};
\draw (353,46.5) node [anchor=north west][inner sep=0.75pt]   [align=left] {$\displaystyle x$};
\draw (348,199) node [anchor=north west][inner sep=0.75pt]   [align=left] {$\displaystyle \alpha ( x) =( x_{1} ,x_{1} x_{2} ,x_{1} x_{3} ,x_{2} ,x_{2} x_{3} \ ,x_{3}$)};
\draw (237,140) node    {$\beta $};
\draw (252,191.73) node [anchor=north west][inner sep=0.75pt]   [align=left] {$\displaystyle y$};
\draw (96,73) node [anchor=north west][inner sep=0.75pt]   [align=left] {$\displaystyle \beta ( y) =E_{3} y\ $};
\draw (324,134) node    {$\alpha $};
\draw (263,67) node    {$\phi $};
\draw (93,164) node    {$\frac{1}{2}\begin{pmatrix}
y_{1} +2y_{2} +y_{4}\\
y_{1} +2y_{3} +y_{6}\\
y_{4} +2y_{5} +y_{6}\\
y_{1} -2y_{2} +y_{4}\\
y_{1} -2y_{3} +y_{6}\\
y_{4} -2y_{5} +y_{6}
\end{pmatrix} =\beta ( y)$};
\draw (396.59,50.07) node [anchor=north west][inner sep=0.75pt]   [align=left] {$\displaystyle \beta ( \alpha ( x)) =\phi ( x) =\begin{pmatrix}
u_{12}\\
u_{13}\\
u_{23}\\
v_{12}\\
v_{13}\\
v_{23}
\end{pmatrix}$};

\end{tikzpicture}
\caption{Diagram for the maps $\phi$,
$\alpha$, and $\beta$.}%
\label{FIG2}%
\end{center}
\end{figure*}
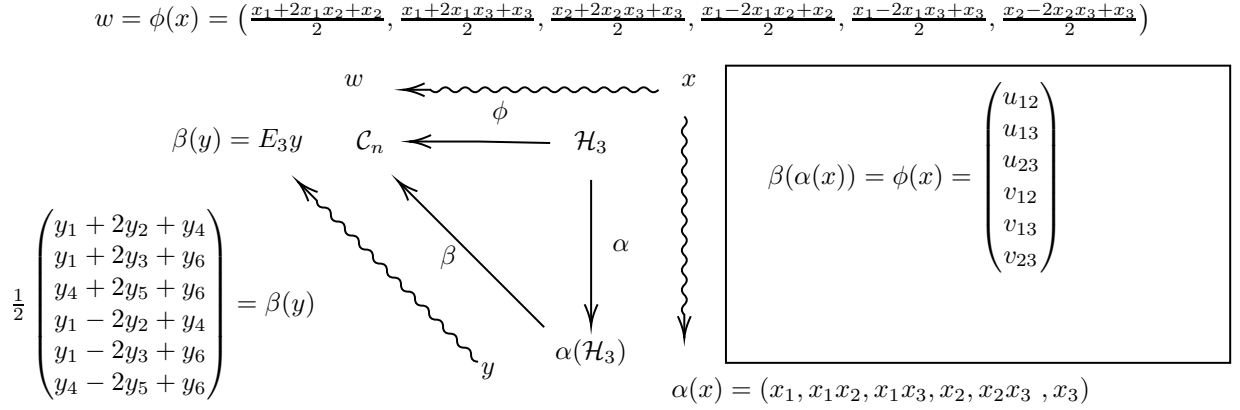

As a summary, the maps $\phi$, $\alpha$, and $\beta$ are represented in the
diagram of Figure \ref{FIG2}. The map $\phi$ is composition of $\alpha$ with $\beta$, i.e. $\phi=\beta
\circ\alpha$, where $\alpha$ can be built from $\mathcal{H}_{3}$ as a
selection of the Kronecker product $\tilde{x}\otimes\tilde{x}$ with $\tilde
{x}^{T}=\left(  1,x^{T}\right)  $ and $x\in\mathcal{H}_{3}$. The set
$\mathcal{\tilde{H}}_{3}=\left\{  \tilde{x}=\left(
\begin{array}
[c]{c}%
1\\
x
\end{array}
\right)  :x\in\mathcal{H}_{3}\right\}  $ is convex: this is trivial simply by
building a convex combination of two points $\tilde{x}$ and $\tilde{y}$ in
$\mathcal{\tilde{H}}_{3}$,%
\[
\lambda\tilde{x}+\left(  1-\lambda\right)  \tilde{y}=\left(
\begin{array}
[c]{c}%
1\\
\lambda x+\left(  1-\lambda\right)  y
\end{array}
\right)  \in\mathcal{\tilde{H}}_{3}\text{ with }\lambda\in\left[  0,1\right]
\text{.}%
\]
Since $\mathcal{H}_{3}=conv\left(  V\right)  $, it follows that $x=\sum
_{i=1}^{8}\lambda_{i}p_{i}$ where $\sum_{i=1}^{8}\lambda_{i}=1$, $\lambda
_{i}\geq0$, and $p_{i}\in V$. The set of vertices of $\mathcal{\tilde{H}}_{3}$
is
\[
\tilde{V}=\left\{  \tilde{p}_{i}=\left(
\begin{array}
[c]{c}%
1\\
p_{i}%
\end{array}
\right)  :i=1,\ldots,8\right\}  \text{.}%
\]
So $\mathcal{\tilde{H}}_{3}=conv\left(  \tilde{V}\right)  $ and $\tilde
{x}\otimes\tilde{x}$ is written as a convex combination:%
\[
\tilde{x}\otimes\tilde{x}=\sum_{i,j=1}^{8}\lambda_{i}\lambda_{j}\left(
\tilde{p}_{i}\otimes\tilde{p}_{j}\right)  \text{.}%
\]
There exists a matrix $S_{3}$ given by%
\[
S_{3}=\frac{1}{2}\left(
\begin{array}
[c]{cccccccccccccccc}%
0 & 1 & 0 & 0 & 1 & 0 & 0 & 0 & 0 & 0 & 0 & 0 & 0 & 0 & 0 & 0\\
0 & 0 & 0 & 0 & 0 & 0 & 1 & 0 & 0 & 1 & 0 & 0 & 0 & 0 & 0 & 0\\
0 & 0 & 0 & 0 & 0 & 0 & 0 & 1 & 0 & 0 & 0 & 0 & 0 & 1 & 0 & 0\\
0 & 0 & 1 & 0 & 0 & 0 & 0 & 0 & 1 & 0 & 0 & 0 & 0 & 0 & 0 & 0\\
0 & 0 & 0 & 0 & 0 & 0 & 0 & 0 & 0 & 0 & 0 & 1 & 0 & 0 & 1 & 0\\
0 & 0 & 0 & 1 & 0 & 0 & 0 & 0 & 0 & 0 & 0 & 0 & 1 & 0 & 0 & 0
\end{array}
\right)  \text{,}%
\]
such that $\alpha\left(  x\right)  =S_{3}\left(  \tilde{x}\otimes\tilde
{x}\right)  $.

\begin{figure*}
[hptb]
\begin{center}
\input{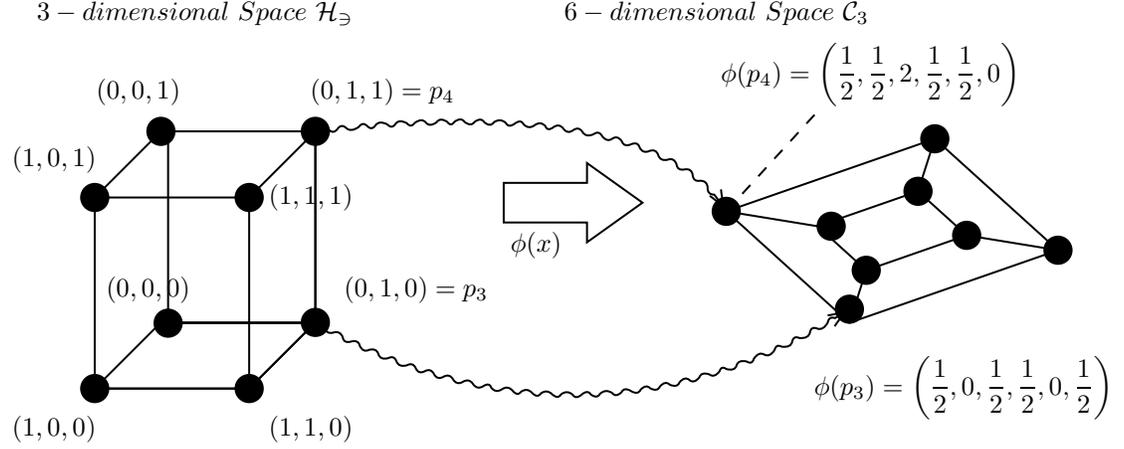}%
\caption{Map $\phi$ between the cube $\mathcal{H}_{3}$ and the 6-dimensional
convex-hull  $\mathcal{C}_{3}$.}%
\label{FIG1}%
\end{center}
\end{figure*}

From the map $\phi$, another convex hull is built
$\mathcal{C}_{3}=conv\left(  \phi\left(  V\right)  \right)  $. In Figure \ref{FIG1},
the transformation between $\mathcal{H}_{3}$ and $\mathcal{C}_{3}$ through the
map $\phi$ is illustrated.%

From $\phi\left(  x\right)  $ we can recover $x$ through the linear
transformation $x=L\phi\left(  x\right)  $, where%
\[
L=\frac{1}{2}\left(
\begin{array}
[c]{cccccc}%
1 & 1 & -1 & 1 & 1 & -1\\
1 & -1 & 1 & 1 & -1 & 1\\
-1 & 1 & 1 & -1 & 1 & 1
\end{array}
\right)  \text{.}%
\]
We have seen that the points of $\mathcal{C}_{3}$ are convex combinations of
the vertices $\phi\left(  p_{i}\right)  $, $i=1,\ldots,8$. 

For simplicity, we write such points as $w=B\lambda$ where matrix $B$ is given
by%
\begin{equation}
B=\left(
\begin{array}
[c]{cccccccc}%
0 & 0 & \frac{1}{2} & \frac{1}{2} & \frac{1}{2} & \frac{1}{2} & 2 & 2\\
0 & \frac{1}{2} & 0 & \frac{1}{2} & \frac{1}{2} & 2 & \frac{1}{2} & 2\\
0 & \frac{1}{2} & \frac{1}{2} & 2 & 0 & \frac{1}{2} & \frac{1}{2} & 2\\
0 & 0 & \frac{1}{2} & \frac{1}{2} & \frac{1}{2} & \frac{1}{2} & 0 & 0\\
0 & \frac{1}{2} & 0 & \frac{1}{2} & \frac{1}{2} & 0 & \frac{1}{2} & 0\\
0 & \frac{1}{2} & \frac{1}{2} & 0 & 0 & \frac{1}{2} & \frac{1}{2} & 0
\end{array}
\right)  \text{,}\label{EQ13}%
\end{equation}
and the vector $\lambda\in\left[  0,1\right]  ^{8}$ is such that $\sum
_{i=1}^{8}\lambda_{i}=1$.

Next we show with a counterexample that $\mathcal{C}_{3}\nsubseteq\phi\left(
\mathcal{H}_{3}\right)  $. For this, let $w$ be the point of $\mathcal{C}_{3}$
given by $w=B\left(  \frac{e_{2}+e_{3}}{2}\right)  $ where $e_{k}$ represents
the standard base vector in $\mathbb{R}^{8}$ with a '1' at the k-th component.
Suppose that there was an $x\in\mathcal{H}_{3}$ such that $\phi\left(
x\right)  =w$, then $x=L\phi\left(  x\right)  =Lw$, and therefore it should be
verified that $\phi\left(  Lw\right)  =w$. However, a contradiction would be
reached since%
\[
\phi\left(  Lw\right)  =\left(
\begin{array}
[c]{c}%
\frac{1}{8}\\
\frac{1}{8}\\
\frac{5}{16}\\
\frac{1}{8}\\
\frac{1}{8}\\
\frac{3}{16}%
\end{array}
\right)  \neq\left(
\begin{array}
[c]{c}%
\frac{1}{4}\\
\frac{1}{4}\\
\frac{1}{2}\\
\frac{1}{4}\\
\frac{1}{4}\\
\frac{1}{2}%
\end{array}
\right)  =w\text{,}%
\]
where $Lw=\left(  0,\frac{1}{4},\frac{1}{4}\right)  ^{T}$. This result reveals
that $\phi\left(  \mathcal{H}_{3}\right)  $ is not convex, because otherwise,
when $\mathcal{C}_{3}$ has its vertices in $\phi\left(  \mathcal{H}%
_{3}\right)  $ it would be that $\mathcal{C}_{3}\subseteq\phi\left(
\mathcal{H}_{3}\right)  $.

The image of $\mathcal{H}_{3}$ through $\phi$ is contained in $\mathcal{C}%
_{3}$.

\begin{lemma}
\label{LEMA2}$\phi\left(  \mathcal{H}_{3}\right)  \subseteq\mathcal{C}_{3}$.
\end{lemma}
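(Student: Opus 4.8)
The plan is to exhibit, for each $x\in\mathcal{H}_3$, an explicit set of convex weights writing $\phi(x)$ as a convex combination of the eight vertex images $\phi(p_1),\dots,\phi(p_8)$. Since $\phi=\beta\circ\alpha$ and $\beta=E_3\,(\cdot)$ is \emph{linear}, it suffices to produce weights $\mu_k\ge 0$ with $\sum_k\mu_k=1$ such that $\alpha(x)=\sum_{k=1}^{8}\mu_k\,\alpha(p_k)$; applying the linear map $E_3$ then gives $\phi(x)=\sum_k\mu_k\,\phi(p_k)$, which lies in $\mathcal{C}_3=conv(\phi(V))$ by definition.

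The natural candidate weights are the multilinear interpolation coefficients
\[
\mu_k=\prod_{l:\,p_{k,l}=1}x_l\;\prod_{l:\,p_{k,l}=0}(1-x_l),\qquad k=1,\dots,8,
\]
that is, the probabilities assigned to the vertices of $\{0,1\}^3$ by three \emph{independent} Bernoulli variables $X_1,X_2,X_3$ with means $x_1,x_2,x_3$. These are nonnegative, and $\sum_k\mu_k=\prod_{l=1}^{3}\bigl(x_l+(1-x_l)\bigr)=1$, so they genuinely define a convex combination.

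The crux is to verify $\alpha(x)=\sum_k\mu_k\,\alpha(p_k)$ coordinate by coordinate. Here one uses essentially that every entry of $\alpha$ is a \emph{squarefree} monomial: the coordinates are the linear terms $x_1,x_2,x_3$ and the bilinear cross terms $x_1x_2,x_1x_3,x_2x_3$, each involving distinct variables (this is precisely the feature encoded by $diag(Q)=0$). For a linear coordinate, $\sum_k\mu_k\,p_{k,i}=\mathbb{E}[X_i]=x_i$; for a cross coordinate, independence of the $X_l$ yields $\sum_k\mu_k\,p_{k,i}p_{k,j}=\mathbb{E}[X_iX_j]=\mathbb{E}[X_i]\,\mathbb{E}[X_j]=x_ix_j$ for $i\ne j$. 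Hence each component of $\sum_k\mu_k\,\alpha(p_k)$ equals the corresponding component of $\alpha(x)$, and the identity follows.

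I expect the only delicate point to be exactly this reliance on the absence of diagonal (squared) terms: the product-measure weights reproduce $x_i$ but would reproduce $x_i$ again in place of a genuine $x_i^{2}$, so a diagonal entry in $Q$ would break the interpolation. Since $\alpha$ carries no such term, the weights $\mu_k$ interpolate $\alpha$ exactly and linearity of $\beta$ closes the argument. This is sharp: for correlated weightings—e.g.\ mass split equally between $(0,0,0)$ and $(1,1,1)$—the symmetrized products $\tilde p_i\otimes\tilde p_j$ leave $\mathcal{C}_3$, matching the earlier observation that $\phi(\mathcal{H}_3)$ is not itself convex even though it sits inside $\mathcal{C}_3$.
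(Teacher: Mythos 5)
Your proof is correct, and it takes a genuinely different---and substantially shorter---route than the paper's. The paper proceeds by explicit linear algebra: it solves $B\lambda=\phi(x)$ by inverting the full-rank submatrix $\tilde{B}$ (columns $2,\ldots,7$ of $B$), obtaining $\lambda_{2},\ldots,\lambda_{7}$ as functions of $x$ and a free parameter $\lambda_{8}$, and then must verify through a four-part case analysis (the quantities $m_{1},m_{2},m_{3},M_{1},M_{2},M_{3}$, each bounded by evaluating multilinear polynomials at the vertices of $\mathcal{H}_{3}$) that the admissible interval $\left[\max\left\{0,m_{1}(x)\right\},\min\left\{M(x),1\right\}\right]$ for $\lambda_{8}$ is nonempty. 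Your product-measure weights $\mu_{k}$ bypass all of this: nonnegativity and normalization are immediate, the identity $\sum_{k}\mu_{k}\,\alpha(p_{k})=\alpha(x)$ follows in one line from independence precisely because every coordinate of $\alpha$ is a squarefree monomial, and linearity of $\beta=E_{3}\,(\cdot)$ transports the convex combination through to $\phi$. In fact your choice sits inside the paper's one-parameter family at $\lambda_{8}=x_{1}x_{2}x_{3}$ (e.g.\ $\lambda_{7}=x_{1}x_{2}-\lambda_{8}=x_{1}x_{2}(1-x_{3})=\mu_{7}$ for $p_{7}=(1,1,0)$, and $\lambda_{1}=(1-x_{1})(1-x_{2})(1-x_{3})=\mu_{1}$), so the two constructions are consistent; what the paper's longer argument buys is the entire admissible range of $\lambda_{8}$, later exploited to exhibit non-uniqueness of $\lambda$ (Example \ref{EJEMPLO1}), while yours buys brevity and immediate generality---the same Bernoulli weights work verbatim in $\mathcal{H}_{n}$ and would streamline the argument of Lemma \ref{LEMA3} as well. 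One small slip in your closing remark: a correlated weighting such as $\frac{1}{2}\left(e_{2}+e_{3}\right)$ does \emph{not} leave $\mathcal{C}_{3}$---by definition $\mathcal{C}_{3}$ contains every convex combination of the $\phi(p_{k})$---it leaves $\phi\left(\mathcal{H}_{3}\right)$, which is exactly the paper's counterexample showing $\mathcal{C}_{3}\nsubseteq\phi\left(\mathcal{H}_{3}\right)$; this wording issue does not affect the validity of your proof.
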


\begin{proof}
Matrix $B$ has rank $6$. In fact submatrix $\tilde{B}$ formed by columns
$2,\ldots,7$ from $B$ has full rank. Let $x$ be an arbitrary point in
$\mathcal{H}_{3}$, we will prove that $\phi\left(  x\right)  \in
\mathcal{C}_{3}$. For this, we must find a vector $\lambda\in\left[
0,1\right]  ^{8}$ such that $\Sigma_{i=1}^{8}\lambda_{i}=1$ and%
\begin{equation}
B\lambda=\phi\left(  x\right)  \text{.}\label{EQQ1}%
\end{equation}
For simplicity we will use the notation $\tilde{\lambda}=\left(
\begin{array}
[c]{cccccc}%
\lambda_{2} & \lambda_{3} & \lambda_{4} & \lambda_{5} & \lambda_{6} &
\lambda_{7}%
\end{array}
\right)  ^{T}$. Then the vector $\tilde{\lambda}$ will depend on the point $x$
in $\mathcal{H}_{3}$ and on the free parameter $\lambda_{8}$:%
\[
\tilde{\lambda}=\left(  \tilde{B}^{T}\tilde{B}\right)  ^{-1}\tilde{B}%
^{T}\left(  \phi\left(  x\right)  -B_{8}\lambda_{8}\right)  =\left(
\begin{array}
[c]{c}%
\lambda_{8}+x_{3}-x_{1}x_{3}-x_{2}x_{3}\\
\lambda_{8}+x_{2}-x_{1}x_{2}-x_{2}x_{3}\\
x_{2}x_{3}-\lambda_{8}\\
\lambda_{8}+x_{1}-x_{1}x_{2}-x_{1}x_{3}\\
x_{1}x_{3}-\lambda_{8}\\
x_{1}x_{2}-\lambda_{8}%
\end{array}
\right)  \text{.}%
\]
Furthermore $\lambda_{1}$, $\tilde{\lambda}$, and $\lambda_{8}$ should satisfy
the convexity constraints:%
\begin{equation}
0\leq x_{i}x_{j}-\lambda_{8}\leq1\text{, }i,j\in\left\{  1,2,3\right\}  \text{
and }i\neq j,\label{EQQ2}%
\end{equation}%
\begin{equation}
0\leq\lambda_{8}+x_{i}\left(  x_{j}+x_{k}-1\right)  \leq1\text{, }%
i,j,k\in\left\{  1,2,3\right\}  \text{, }i\neq j\text{, }i\neq k\text{, }j\neq
k\text{,}\label{EQQ3}%
\end{equation}%
\begin{equation}
0\leq\lambda_{1},\lambda_{8}\leq1\text{,}\label{EQQ4}%
\end{equation}%
\begin{equation}
\lambda_{1}+\lambda_{8}+x_{1}+x_{2}+x_{3}-x_{1}x_{2}-x_{1}x_{3}-x_{2}%
x_{3}=1\text{.}\label{EQQ5}%
\end{equation}
\newline The identity $\left(  \ref{EQQ5}\right)  $ is precisely the convexity
constraint  $\Sigma_{i=1}^{8}\lambda_{i}=1$. Let us define the following
quantities:%
\begin{align*}
M_{1}\left(  x\right)    & =\min\left\{  1-x_{1}\left(  1-x_{2}-x_{3}\right)
,1-x_{2}\left(  1-x_{1}-x_{3}\right)  ,1-x_{3}\left(  1-x_{1}-x_{2}\right)
\right\}  \text{,}\\
M_{2}\left(  x\right)    & =\min\left\{  x_{1}x_{2},x_{1}x_{3},x_{2}%
x_{3}\right\}  \text{,}\\
m_{1}\left(  x\right)    & =\max\left\{  -x_{1}\left(  1-x_{2}-x_{3}\right)
,-x_{2}\left(  1-x_{1}-x_{3}\right)  ,-x_{3}\left(  1-x_{1}-x_{2}\right)
\right\}  \text{,}\\
m_{2}\left(  x\right)    & =\max\left\{  x_{1}x_{2}-1,x_{1}x_{3}-1,x_{2}%
x_{3}-1\right\}  \text{.}%
\end{align*}
It can be verified that $m_{1}\left(  x\right)  \leq1$, $m_{2}\left(
x\right)  \leq0$ and that $M_{1}\left(  x\right)  ,M_{2}\left(  x\right)
\geq0$: multilinear polynomials reach their extremes at the vertices of
$\mathcal{H}_{3}$, so%
\begin{align*}
\min_{x\in\mathcal{H}_{3}}1-x_{i}\left(  1-x_{j}-x_{k}\right)    &
=0\text{,}\\
\min_{x\in\mathcal{H}_{3}}x_{i}x_{j}  & =0\text{,}\\
\max_{x\in\mathcal{H}_{3}}-x_{i}\left(  1-x_{j}-x_{k}\right)    & =1\text{,}\\
\max_{x\in\mathcal{H}_{3}}x_{i}x_{j}-1  & =0\text{.}%
\end{align*}
\newline Then,%
\[
\max\left\{  m_{1}\left(  x\right)  ,m_{2}\left(  x\right)  \right\}
\leq\lambda_{8}\leq\min\left\{  M_{1}\left(  x\right)  ,M_{2}\left(  x\right)
\right\}  \text{.}%
\]
On the other hand,%
\[
m_{3}\left(  x\right)  =-\left(  x_{1}+x_{2}+x_{3}-x_{1}x_{2}-x_{1}x_{3}%
-x_{2}x_{3}\right)  \leq\lambda_{8}\leq1-\left(  x_{1}+x_{2}+x_{3}-x_{1}%
x_{2}-x_{1}x_{3}-x_{2}x_{3}\right)  =M_{3}\left(  x\right).
\]
\newline It is immediate to verify that $\min_{x\in\mathcal{H}_{3}}%
M_{3}\left(  x\right)  =\max_{x\in\mathcal{H}_{3}}m_{3}\left(  x\right)  =0$,
which implies that $m_{3}\left(  x\right)  \leq0$ and that $M_{3}\left(
x\right)  \geq0$ in the hypercube $\mathcal{H}_{3}$. Now it is enough to prove
that:\newline(i) $-x_{i}\left(  1-x_{j}-x_{k}\right)  \leq1-\left(
x_{1}+x_{2}+x_{3}-x_{1}x_{2}-x_{1}x_{3}-x_{2}x_{3}\right)  $: This boils down
to simply analyzing the case $\left(  i,j,k\right)  =\left(  1,2,3\right)  $
since the polynomial that appears on the right side is symmetric (for the
cases $\left(  i,j,k\right)  =\left(  2,1,3\right)  $ and $\left(
i,j,k\right)  =\left(  3,1,2\right)  $ the proof would be identical). We will
analyze the sign of the multilinear polynomial,%
\[
p\left(  x_{1},x_{2},x_{3}\right)  =-x_{1}\left(  1-x_{2}-x_{3}\right)
-\left(  1-\left(  x_{1}+x_{2}+x_{3}-x_{1}x_{2}-x_{1}x_{3}-x_{2}x_{3}\right)
\right)  \text{.}%
\]
The maximum of $p$ in the hypercube $\mathcal{H}_{3}$ is reached at one of its
vertices with value $0$. With this, we have shown that%
\[
m_{1}\left(  x\right)  \leq1-\left(  x_{1}+x_{2}+x_{3}-x_{1}x_{2}-x_{1}%
x_{3}-x_{2}x_{3}\right)  \text{.}%
\]
\newline(ii) $x_{i}x_{j}-1\leq1-\left(  x_{1}+x_{2}+x_{3}-x_{1}x_{2}%
-x_{1}x_{3}-x_{2}x_{3}\right)  $. By the same argument as in (i) we reduce the
problem to the pair $\left(  i,j\right)  =\left(  1,2\right)  $ and simply
analyze the multilinear polynomial,%
\[
p\left(  x_{1},x_{2},x_{3}\right)  =x_{1}x_{2}-1-\left(  1-\left(  x_{1}%
+x_{2}+x_{3}-x_{1}x_{2}-x_{1}x_{3}-x_{2}x_{3}\right)  \right)  \text{,}%
\]
whose maximum in $\mathcal{H}_{3}$ is $0$. In this way,%
\[
m_{2}\left(  x\right)  \leq1-\left(  x_{1}+x_{2}+x_{3}-x_{1}x_{2}-x_{1}%
x_{3}-x_{2}x_{3}\right)  \text{.}%
\]
(iii) $-\left(  x_{1}+x_{2}+x_{3}-x_{1}x_{2}-x_{1}x_{3}-x_{2}x_{3}\right)
\leq1-x_{i}\left(  1-x_{j}-x_{k}\right)  $. Again we reduce it to $\left(
i,j,k\right)  =\left(  1,2,3\right)  $, and we construct the multilinear
polynomial:%
\[
p\left(  x_{1},x_{2},x_{3}\right)  =-\left(  x_{1}+x_{2}+x_{3}-x_{1}%
x_{2}-x_{1}x_{3}-x_{2}x_{3}\right)  -\left(  1-x_{1}\left(  1-x_{2}%
-x_{3}\right)  \right)  \text{.}%
\]
It can be verified that $\min_{x\in\mathcal{H}_{3}}p\left(  x_{1},x_{2}%
,x_{3}\right)  =-2$, so
\[
-\left(  x_{1}+x_{2}+x_{3}-x_{1}x_{2}-x_{1}x_{3}-x_{2}x_{3}\right)  \leq
M_{1}\text{.}%
\]
\newline(iv) $-\left(  x_{1}+x_{2}+x_{3}-x_{1}x_{2}-x_{1}x_{3}-x_{2}%
x_{3}\right)  \leq x_{i}x_{j}$. We reduce it to $\left(  i,j\right)  =\left(
1,2\right)  $ and to the polynomial%
\[
p\left(  x_{1},x_{2},x_{3}\right)  =-\left(  x_{1}+x_{2}+x_{3}-x_{1}%
x_{2}-x_{1}x_{3}-x_{2}x_{3}\right)  -x_{1}x_{2}\text{,}%
\]
such that again $\min_{x\in\mathcal{H}_{3}}p\left(  x_{1},x_{2},x_{3}\right)
=-2$. Thus,
\[
-\left(  x_{1}+x_{2}+x_{3}-x_{1}x_{2}-x_{1}x_{3}-x_{2}x_{3}\right)  \leq
M_{2}\text{.}%
\]
\newline With this, we have proven that%
\begin{align*}
\max\left\{  m_{1}\left(  x\right)  ,m_{2}\left(  x\right)  \right\}    &
\leq\lambda_{8}\leq M_{3}\left(  x\right)  \text{,}\\
m_{3}\left(  x\right)    & \leq\lambda_{8}\leq\min\left\{  M_{1}\left(
x\right)  ,M_{2}\left(  x\right)  \right\}  \text{,}%
\end{align*}
and that
\[
\max\left\{  m_{1}\left(  x\right)  ,m_{2}\left(  x\right)  ,m_{3}\left(
x\right)  \right\}  \leq\lambda_{8}\leq\min\left\{  M_{1}\left(  x\right)
,M_{2}\left(  x\right)  ,M_{3}\left(  x\right)  \right\}  =M\left(  x\right)
\text{.}%
\]
Since $m_{2}\left(  x\right)  ,m_{3}\left(  x\right)  \leq0$, we should simply
take $\lambda_{8}\in\left[  \max\left\{  0,m_{1}\left(  x\right)  \right\}
,\min\left\{  M\left(  x\right)  ,1\right\}  \right]  $ and
\[
\lambda_{1}=1-\left(  \lambda_{8}+x_{1}+x_{2}+x_{3}-x_{1}x_{2}-x_{1}%
x_{3}-x_{2}x_{3}\right)  \text{.}%
\]
\newline
\end{proof}

Since the previous proof is constructive, in the following examples we will
show how $\lambda\in\left[  0,1\right]  ^{8}$ can be constructed from
$x\in\mathcal{H}_{3}$. We must emphasize that since $\lambda_{8}$ moves in a
permitted interval $\left[  \max\left\{  0,m_{1}\left(  x\right)  \right\}
,\min\left\{  M\left(  x\right)  ,1\right\}  \right]  $, in general, there can
be many solutions.

\begin{example}
Let $x=\left(
\begin{array}
[c]{ccc}%
1 & \frac{1}{2} & \frac{1}{2}%
\end{array}
\right)  ^{T}$. According to the notation in Lemma \ref{LEMA2}:%
\begin{align*}
M_{1}\left(  x\right)    & =\min\left\{  1-x_{1}\left(  1-x_{2}-x_{3}\right)
,1-x_{2}\left(  1-x_{1}-x_{3}\right)  ,1-x_{3}\left(  1-x_{1}-x_{2}\right)
\right\}  =\\
& =\min\left\{  1,\frac{5}{4},\frac{5}{4}\right\}  =1\text{,}\\
M_{2}\left(  x\right)    & =\min\left\{  x_{1}x_{2},x_{1}x_{3},x_{2}%
x_{3}\right\}  =\min\left\{  \frac{1}{2},\frac{1}{2},\frac{1}{4}\right\}
=\frac{1}{4}\text{,}\\
m_{1}\left(  x\right)    & =\max\left\{  -x_{1}\left(  1-x_{2}-x_{3}\right)
,-x_{2}\left(  1-x_{1}-x_{3}\right)  ,-x_{3}\left(  1-x_{1}-x_{2}\right)
\right\}  =\\
& =\max\left\{  0,\frac{1}{4},\frac{1}{4}\right\}  =\frac{1}{4}\text{,}\\
M_{3}\left(  x\right)    & =1-\left(  x_{1}+x_{2}+x_{3}-x_{1}x_{2}-x_{1}%
x_{3}-x_{2}x_{3}\right)  =\frac{1}{4}\text{,}%
\end{align*}
and as a result $M\left(  x\right)  =\min\left\{  M_{1}\left(  x\right)
,M_{2}\left(  x\right)  ,M_{3}\left(  x\right)  \right\}  =\frac{1}{4}$, and
$\lambda_{8}=\frac{1}{4}$, $\lambda_{1}=0$. With this selection, we will have
\[
\lambda=\left(
\begin{array}
[c]{c}%
\lambda_{1}\\
\lambda_{8}+x_{3}-x_{1}x_{3}-x_{2}x_{3}\\
\lambda_{8}+x_{2}-x_{1}x_{2}-x_{2}x_{3}\\
x_{2}x_{3}-\lambda_{8}\\
\lambda_{8}+x_{1}-x_{1}x_{2}-x_{1}x_{3}\\
x_{1}x_{3}-\lambda_{8}\\
x_{1}x_{2}-\lambda_{8}\\
\lambda_{8}%
\end{array}
\right)  =\left(
\begin{array}
[c]{c}%
0\\
0\\
0\\
0\\
\frac{1}{4}\\
\frac{1}{4}\\
\frac{1}{4}\\
\frac{1}{4}%
\end{array}
\right)  \text{.}%
\]
For this $\lambda$ we verify that $B\lambda=\phi\left(  x\right)  $:%
\[
B\lambda=\left(
\begin{array}
[c]{cccccccc}%
0 & 0 & \frac{1}{2} & \frac{1}{2} & \frac{1}{2} & \frac{1}{2} & 2 & 2\\
0 & \frac{1}{2} & 0 & \frac{1}{2} & \frac{1}{2} & 2 & \frac{1}{2} & 2\\
0 & \frac{1}{2} & \frac{1}{2} & 2 & 0 & \frac{1}{2} & \frac{1}{2} & 2\\
0 & 0 & \frac{1}{2} & \frac{1}{2} & \frac{1}{2} & \frac{1}{2} & 0 & 0\\
0 & \frac{1}{2} & 0 & \frac{1}{2} & \frac{1}{2} & 0 & \frac{1}{2} & 0\\
0 & \frac{1}{2} & \frac{1}{2} & 0 & 0 & \frac{1}{2} & \frac{1}{2} & 0
\end{array}
\right)  \left(
\begin{array}
[c]{c}%
0\\
0\\
0\\
0\\
\frac{1}{4}\\
\frac{1}{4}\\
\frac{1}{4}\\
\frac{1}{4}%
\end{array}
\right)  =\left(
\begin{array}
[c]{c}%
\frac{5}{4}\\
\frac{5}{4}\\
\frac{3}{4}\\
\frac{1}{4}\\
\frac{1}{4}\\
\frac{1}{4}%
\end{array}
\right)  =\phi\left(  x\right)  \text{.}%
\]

\end{example}

\begin{example}
Let $\left(  x_{1},x_{2},x_{3}\right)  =\left(  0,\frac{1}{4},\frac{1}%
{4}\right)  $:%
\begin{align*}
M_{1}\left(  x\right)    & =\min\left\{  1-x_{1}\left(  1-x_{2}-x_{3}\right)
,1-x_{2}\left(  1-x_{1}-x_{3}\right)  ,1-x_{3}\left(  1-x_{1}-x_{2}\right)
\right\}  =\\
& =\min\left\{  1,\frac{13}{16},\frac{13}{16}\right\}  =\frac{13}{16}%
\text{,}\\
M_{2}\left(  x\right)    & =\min\left\{  x_{1}x_{2},x_{1}x_{3},x_{2}%
x_{3}\right\}  =\min\left\{  0,0,\frac{1}{16}\right\}  =0\text{,}\\
m_{1}\left(  x\right)    & =\max\left\{  -x_{1}\left(  1-x_{2}-x_{3}\right)
,-x_{2}\left(  1-x_{1}-x_{3}\right)  ,-x_{3}\left(  1-x_{1}-x_{2}\right)
\right\}  =\\
& =\max\left\{  0,-\frac{3}{16},-\frac{3}{16}\right\}  =0\text{,}\\
M_{3}\left(  x\right)    & =1-\left(  x_{1}+x_{2}+x_{3}-x_{1}x_{2}-x_{1}%
x_{3}-x_{2}x_{3}\right)  =\frac{9}{16}\text{,}%
\end{align*}
and as a result $M\left(  x\right)  =\min\left\{  M_{1}\left(  x\right)
,M_{2}\left(  x\right)  ,M_{3}\left(  x\right)  \right\}  =0$, and
$\lambda_{8}=0$, $\lambda_{1}=\frac{9}{16}$. We this selection we obtain the
vector $\lambda$:%
\[
\lambda=\left(
\begin{array}
[c]{c}%
\lambda_{1}\\
\lambda_{8}+x_{3}-x_{1}x_{3}-x_{2}x_{3}\\
\lambda_{8}+x_{2}-x_{1}x_{2}-x_{2}x_{3}\\
x_{2}x_{3}-\lambda_{8}\\
\lambda_{8}+x_{1}-x_{1}x_{2}-x_{1}x_{3}\\
x_{1}x_{3}-\lambda_{8}\\
x_{1}x_{2}-\lambda_{8}\\
\lambda_{8}%
\end{array}
\right)  =\left(
\begin{array}
[c]{c}%
\frac{9}{16}\\
0\\
0\\
0\\
\frac{1}{4}\\
\frac{1}{4}\\
\frac{1}{4}\\
\frac{1}{4}%
\end{array}
\right)  \text{.}%
\]
Now for this $\lambda$, we check that $B\lambda=\phi\left(  x\right)  $:%
\[
B\lambda=\left(
\begin{array}
[c]{cccccccc}%
0 & 0 & \frac{1}{2} & \frac{1}{2} & \frac{1}{2} & \frac{1}{2} & 2 & 2\\
0 & \frac{1}{2} & 0 & \frac{1}{2} & \frac{1}{2} & 2 & \frac{1}{2} & 2\\
0 & \frac{1}{2} & \frac{1}{2} & 2 & 0 & \frac{1}{2} & \frac{1}{2} & 2\\
0 & 0 & \frac{1}{2} & \frac{1}{2} & \frac{1}{2} & \frac{1}{2} & 0 & 0\\
0 & \frac{1}{2} & 0 & \frac{1}{2} & \frac{1}{2} & 0 & \frac{1}{2} & 0\\
0 & \frac{1}{2} & \frac{1}{2} & 0 & 0 & \frac{1}{2} & \frac{1}{2} & 0
\end{array}
\right)  \left(
\begin{array}
[c]{c}%
\frac{9}{16}\\
\frac{3}{16}\\
\frac{3}{16}\\
\frac{1}{16}\\
0\\
0\\
0\\
0
\end{array}
\right)  =\left(
\begin{array}
[c]{c}%
\frac{1}{8}\\
\frac{1}{8}\\
\frac{5}{16}\\
\frac{1}{8}\\
\frac{1}{8}\\
\frac{3}{16}%
\end{array}
\right)  =\phi\left(  x\right)  \text{.}%
\]

\end{example}

\begin{example}
For the point $x=\left(
\begin{array}
[c]{ccc}%
1 & \frac{1}{2} & 0
\end{array}
\right)  ^{T}$ we compute the quantities $M_{i}\left(  x\right)  $ and
$m_{i}\left(  x\right)  $:\newline%
\begin{align*}
M_{1}\left(  x\right)    & =\min\left\{  1-x_{1}\left(  1-x_{2}-x_{3}\right)
,1-x_{2}\left(  1-x_{1}-x_{3}\right)  ,1-x_{3}\left(  1-x_{1}-x_{2}\right)
\right\}  =\\
& =\min\left\{  \frac{1}{2},1,1\right\}  =\frac{1}{2}\text{,}\\
M_{2}\left(  x\right)    & =\min\left\{  x_{1}x_{2},x_{1}x_{3},x_{2}%
x_{3}\right\}  =\min\left\{  \frac{1}{2},0,0\right\}  =0\text{,}\\
m_{1}\left(  x\right)    & =\max\left\{  -x_{1}\left(  1-x_{2}-x_{3}\right)
,-x_{2}\left(  1-x_{1}-x_{3}\right)  ,-x_{3}\left(  1-x_{1}-x_{2}\right)
\right\}  =\\
& =\max\left\{  -\frac{1}{2},0,0\right\}  =0\text{,}\\
M_{3}\left(  x\right)    & =1-\left(  x_{1}+x_{2}+x_{3}-x_{1}x_{2}-x_{1}%
x_{3}-x_{2}x_{3}\right)  =0\text{,}%
\end{align*}
which implies that $M\left(  x\right)  =\min\left\{  M_{1}\left(  x\right)
,M_{2}\left(  x\right)  ,M_{3}\left(  x\right)  \right\}  =0$, and
$\lambda_{8}=0$, $\lambda_{1}=0$. Using this selection we have that%
\[
\lambda=\left(
\begin{array}
[c]{c}%
\lambda_{1}\\
\lambda_{8}+x_{3}-x_{1}x_{3}-x_{2}x_{3}\\
\lambda_{8}+x_{2}-x_{1}x_{2}-x_{2}x_{3}\\
x_{2}x_{3}-\lambda_{8}\\
\lambda_{8}+x_{1}-x_{1}x_{2}-x_{1}x_{3}\\
x_{1}x_{3}-\lambda_{8}\\
x_{1}x_{2}-\lambda_{8}\\
\lambda_{8}%
\end{array}
\right)  =\left(
\begin{array}
[c]{c}%
0\\
0\\
0\\
0\\
\frac{1}{2}\\
0\\
\frac{1}{2}\\
0
\end{array}
\right)  \text{.}%
\]
For this $\lambda$ we verify that $B\lambda=\phi\left(  x\right)  $:%
\[
B\lambda=\left(
\begin{array}
[c]{cccccccc}%
0 & 0 & \frac{1}{2} & \frac{1}{2} & \frac{1}{2} & \frac{1}{2} & 2 & 2\\
0 & \frac{1}{2} & 0 & \frac{1}{2} & \frac{1}{2} & 2 & \frac{1}{2} & 2\\
0 & \frac{1}{2} & \frac{1}{2} & 2 & 0 & \frac{1}{2} & \frac{1}{2} & 2\\
0 & 0 & \frac{1}{2} & \frac{1}{2} & \frac{1}{2} & \frac{1}{2} & 0 & 0\\
0 & \frac{1}{2} & 0 & \frac{1}{2} & \frac{1}{2} & 0 & \frac{1}{2} & 0\\
0 & \frac{1}{2} & \frac{1}{2} & 0 & 0 & \frac{1}{2} & \frac{1}{2} & 0
\end{array}
\right)  \left(
\begin{array}
[c]{c}%
0\\
0\\
0\\
0\\
\frac{1}{2}\\
0\\
\frac{1}{2}\\
0
\end{array}
\right)  =\left(
\begin{array}
[c]{c}%
\frac{5}{4}\\
\frac{1}{2}\\
\frac{1}{4}\\
\frac{1}{4}\\
\frac{1}{2}\\
\frac{1}{4}%
\end{array}
\right)  =\phi\left(  x\right).
\]

\end{example}

The following example shows that the associated vector $\lambda$ for a
$\phi\left(  x\right)  $ is not unique.

\begin{example}
\label{EJEMPLO1}
Let $x=\left(
\begin{array}
[c]{ccc}%
\frac{1}{4} & \frac{1}{4} & \frac{1}{4}%
\end{array}
\right)  ^{T}$:%
\begin{align*}
M_{1}\left(  x\right)    & =\min\left\{  1-x_{1}\left(  1-x_{2}-x_{3}\right)
,1-x_{2}\left(  1-x_{1}-x_{3}\right)  ,1-x_{3}\left(  1-x_{1}-x_{2}\right)
\right\}  =\\
& =\min\left\{  \frac{7}{8},\frac{7}{8},\frac{7}{8}\right\}  =\frac{7}%
{8}\text{,}\\
M_{2}\left(  x\right)    & =\min\left\{  x_{1}x_{2},x_{1}x_{3},x_{2}%
x_{3}\right\}  =\min\left\{  \frac{1}{16},\frac{1}{16},\frac{1}{16}\right\}
=\frac{1}{16}\text{,}\\
m_{1}\left(  x\right)    & =\max\left\{  -x_{1}\left(  1-x_{2}-x_{3}\right)
,-x_{2}\left(  1-x_{1}-x_{3}\right)  ,-x_{3}\left(  1-x_{1}-x_{2}\right)
\right\}  =\\
& =\max\left\{  -\frac{1}{8},-\frac{1}{8},-\frac{1}{8}\right\}  =-\frac{1}%
{8}\text{,}\\
M_{3}\left(  x\right)    & =1-\left(  x_{1}+x_{2}+x_{3}-x_{1}x_{2}-x_{1}%
x_{3}-x_{2}x_{3}\right)  =\frac{7}{16}\text{,}%
\end{align*}
and as a consequence, $M\left(  x\right)  =\min\left\{  M_{1}\left(  x\right)
,M_{2}\left(  x\right)  ,M_{3}\left(  x\right)  \right\}  =\frac{1}{16}$, and
$\lambda_{8}\in\left[  0,\frac{1}{16}\right]  $. \newline If we select
$\lambda_{8}=\frac{1}{32}$ then $\lambda_{1}=\frac{13}{32}$ and we will have
the following vector $\lambda$:%
\[
\lambda=\left(
\begin{array}
[c]{c}%
\lambda_{1}\\
\lambda_{8}+x_{3}-x_{1}x_{3}-x_{2}x_{3}\\
\lambda_{8}+x_{2}-x_{1}x_{2}-x_{2}x_{3}\\
x_{2}x_{3}-\lambda_{8}\\
\lambda_{8}+x_{1}-x_{1}x_{2}-x_{1}x_{3}\\
x_{1}x_{3}-\lambda_{8}\\
x_{1}x_{2}-\lambda_{8}\\
\lambda_{8}%
\end{array}
\right)  =\left(
\begin{array}
[c]{c}%
\frac{13}{32}\\
\frac{5}{32}\\
\frac{5}{32}\\
\frac{1}{32}\\
\frac{5}{32}\\
\frac{1}{32}\\
\frac{1}{32}\\
\frac{1}{32}%
\end{array}
\right)  \text{.}%
\]
For this $\lambda$ we check that $B\lambda=\phi\left(  x\right)  $:%
\[
B\lambda=\left(
\begin{array}
[c]{cccccccc}%
0 & 0 & \frac{1}{2} & \frac{1}{2} & \frac{1}{2} & \frac{1}{2} & 2 & 2\\
0 & \frac{1}{2} & 0 & \frac{1}{2} & \frac{1}{2} & 2 & \frac{1}{2} & 2\\
0 & \frac{1}{2} & \frac{1}{2} & 2 & 0 & \frac{1}{2} & \frac{1}{2} & 2\\
0 & 0 & \frac{1}{2} & \frac{1}{2} & \frac{1}{2} & \frac{1}{2} & 0 & 0\\
0 & \frac{1}{2} & 0 & \frac{1}{2} & \frac{1}{2} & 0 & \frac{1}{2} & 0\\
0 & \frac{1}{2} & \frac{1}{2} & 0 & 0 & \frac{1}{2} & \frac{1}{2} & 0
\end{array}
\right)  \left(
\begin{array}
[c]{c}%
\frac{13}{32}\\
\frac{5}{32}\\
\frac{5}{32}\\
\frac{1}{32}\\
\frac{5}{32}\\
\frac{1}{32}\\
\frac{1}{32}\\
\frac{1}{32}%
\end{array}
\right)  =\left(
\begin{array}
[c]{c}%
\frac{5}{16}\\
\frac{5}{16}\\
\frac{5}{16}\\
\frac{3}{16}\\
\frac{3}{16}\\
\frac{3}{16}%
\end{array}
\right)  =\phi\left(  x\right)  \text{.}%
\]
On the other hand, with the selection $\lambda_{8}=\frac{1}{16}$, we would
have $\lambda_{1}=\frac{3}{8}$, and therefore%
\[
\lambda=\left(
\begin{array}
[c]{c}%
\frac{3}{8}\\
\frac{3}{16}\\
\frac{3}{16}\\
0\\
\frac{3}{16}\\
0\\
0\\
\frac{1}{16}%
\end{array}
\right)  \text{.}%
\]
For this $\lambda$ we check again that $B\lambda=\phi\left(  x\right)  $:%
\[
B\lambda=\left(
\begin{array}
[c]{cccccccc}%
0 & 0 & \frac{1}{2} & \frac{1}{2} & \frac{1}{2} & \frac{1}{2} & 2 & 2\\
0 & \frac{1}{2} & 0 & \frac{1}{2} & \frac{1}{2} & 2 & \frac{1}{2} & 2\\
0 & \frac{1}{2} & \frac{1}{2} & 2 & 0 & \frac{1}{2} & \frac{1}{2} & 2\\
0 & 0 & \frac{1}{2} & \frac{1}{2} & \frac{1}{2} & \frac{1}{2} & 0 & 0\\
0 & \frac{1}{2} & 0 & \frac{1}{2} & \frac{1}{2} & 0 & \frac{1}{2} & 0\\
0 & \frac{1}{2} & \frac{1}{2} & 0 & 0 & \frac{1}{2} & \frac{1}{2} & 0
\end{array}
\right)  \left(
\begin{array}
[c]{c}%
\frac{3}{8}\\
\frac{3}{16}\\
\frac{3}{16}\\
0\\
\frac{3}{16}\\
0\\
0\\
\frac{1}{16}%
\end{array}
\right)  =\left(
\begin{array}
[c]{c}%
\frac{5}{16}\\
\frac{5}{16}\\
\frac{5}{16}\\
\frac{3}{16}\\
\frac{3}{16}\\
\frac{3}{16}%
\end{array}
\right)  =\phi\left(  x\right)  \text{.}%
\]

\end{example}

The elements in $\mathcal{C}_{3}$ are called primary variables and are denoted
as $w$. We will make a distinction between primary variables $u\in\left[
0,2\right]  ^{3}$ and primary variables $v\in\left[  0,\frac{1}{2}\right]
^{3}$, which make up the vector $w$ as $w=\left(  u^{T},v^{T}\right)  ^{T}$.

\begin{definition}
[Primary variables]For the triplet of variables $\left(  x_{1},x_{2}%
,x_{3}\right)  \in\mathcal{H}_{3}$ we define the primary variables for $1\leq
i<j\leq3$:%
\begin{equation}
u_{ij}=\frac{x_{i}+2x_{i}x_{j}+x_{j}}{2},\label{EQ1}%
\end{equation}%
\begin{equation}
v_{ij}=\frac{x_{i}-2x_{i}x_{j}+x_{j}}{2},\label{EQ2}%
\end{equation}
where $u_{ij}\in\left[  0,2\right]  $ and $v_{ij}\in\left[  0,\frac{1}%
{2}\right]  $.
\end{definition}

The vector of primary variables $w^{T}=\left(  u^{T},v^{T}\right)  $ is such
that $u^{T}=\left(  u_{12},u_{13},u_{23}\right)  $ is given by $\left(
\ref{EQ1}\right)  $ and $v^{T}=\left(  v_{12},v_{13},v_{23}\right)  $ by
$\left(  \ref{EQ2}\right)  $. The advantage of defining these variables is
that they satisfy the following relationships:

(i) Cross-products in the objective function (corresponding to the
off-diagonal entries in $Q$):%
\begin{equation}
x_{i}x_{j}=\frac{u_{ij}-v_{ij}}{2}\text{ for }1\leq i<j\leq3.\label{EQ5}%
\end{equation}
(ii) Single variables (corresponding to the vector $b$):%
\begin{equation}
x_{i}=\frac{u_{ij}+v_{ij}+u_{ik}+v_{ik}-u_{jk}-v_{jk}}{2}\text{ for }1\leq
i<j\leq3.\label{EQ3}%
\end{equation}

We have seen in Lemma \ref{LEMA2} that given an $x\in\mathcal{H}_{3}$
there always exists a $w\in\mathcal{C}_{3}$ such that $\phi\left(  x\right)
=w$.

We define the linear transformation $\varphi:$ $\mathcal{C}_{3}\rightarrow
\mathbb{R}^{6}$ given by $\varphi\left(  w\right)  =T_{3}w$, with%
\[
T_{3}=\frac{1}{2}\left(
\begin{array}
[c]{cccccc}%
1 & 1 & -1 & 1 & 1 & -1\\
1 & 0 & 0 & -1 & 0 & 0\\
0 & 1 & 0 & 0 & -1 & 0\\
1 & -1 & 1 & 1 & -1 & 1\\
0 & 0 & 1 & 0 & 0 & -1\\
-1 & 1 & 1 & -1 & 1 & 1
\end{array}
\right).
\]

We claim that $\varphi\circ\beta=id$; this can be checked by inspection:%
\begin{equation}
T_{3}E_{3}=I_{6},\label{EQ103}%
\end{equation}
where $E_{3}$ was given in $\left(  \ref{EQ14}\right)  $.

\begin{remark}
According to the definition of $\alpha$ there exists a vector $c\in
\mathbb{R}^{6}$ such that $f\left(  x\right)  =c^{T}\alpha\left(  x\right)  $.
Additionally, always there exists a vector $\tilde{c}\in\mathbb{R}^{6}$ such
that $f\left(  x\right)  =\tilde{c}^{T}\phi\left(  x\right)  $, where%
\begin{equation}
\tilde{c}^{T}=c^{T}T_{3}.\label{EQ9}%
\end{equation}
This is due to
\[
\tilde{c}^{T}E_{3}\alpha\left(  x\right)  =c^{T}T_{3}E_{3}\alpha\left(
x\right),
\]
and we know from $\left(  \ref{EQ103}\right)  $ that $T_{3}E_{3}=I$.
\end{remark}

Let $\tilde{f}\left(  w\right)  =\tilde{c}^{T}w$, we define the optimization problem:

\begin{description}
\item[\textbf{(P}$_{3}^{\prime}$\textbf{):}] $\min_{w\in\mathcal{C}_{3}}%
\tilde{f}\left(  w\right)  $.
\end{description}

The following theorem states that the minimum of $f$ over $\mathcal{H}_{3}$ is
the minimum of $\tilde{f}$ over $\mathcal{C}_{3}$.

\begin{theorem}
\label{TEO1}Let $\,f\left(  x^{\ast}\right)  $ be the minimum of the problem
(P$_{3}$), and $\tilde{f}\left(  w^{\ast}\right)  $ the minimum of the problem
(P$_{3}^{\prime}$'). Then $f\left(  x^{\ast}\right)  =\tilde{f}\left(
w^{\ast}\right)  $.
\end{theorem}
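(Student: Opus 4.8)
The plan is to prove the equality as two opposite inequalities, in each case exploiting the linear representation $f(x)=\tilde{c}^{T}\phi(x)=\tilde{f}(\phi(x))$ recorded in the remark preceding the statement.

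First I would show $\tilde{f}(w^{\ast})\leq f(x^{\ast})$. Let $x^{\ast}$ minimize (P$_{3}$). By Lemma \ref{LEMA2} we have $\phi(x^{\ast})\in\mathcal{C}_{3}$, so $\phi(x^{\ast})$ is feasible for (P$_{3}^{\prime}$); and since $\tilde{f}(\phi(x^{\ast}))=\tilde{c}^{T}\phi(x^{\ast})=f(x^{\ast})$, optimality of $w^{\ast}$ forces $\tilde{f}(w^{\ast})\leq\tilde{f}(\phi(x^{\ast}))=f(x^{\ast})$.

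The reverse inequality $f(x^{\ast})\leq\tilde{f}(w^{\ast})$ is where the geometry of $\mathcal{C}_{3}$ enters. Because $\tilde{f}$ is linear and $\mathcal{C}_{3}=\mathrm{conv}(\phi(V))$ is a compact polytope, the minimum of $\tilde{f}$ over $\mathcal{C}_{3}$ is attained at an extreme point of $\mathcal{C}_{3}$, and the extreme points of the convex hull of a finite set are contained in that set. Hence $w^{\ast}$ may be taken equal to $\phi(p_{k})$ for some cube vertex $p_{k}\in V=\{0,1\}^{3}$. The representation then gives $\tilde{f}(w^{\ast})=\tilde{c}^{T}\phi(p_{k})=f(p_{k})$, and since $p_{k}\in\mathcal{H}_{3}$ we have $f(p_{k})\geq f(x^{\ast})$, so $\tilde{f}(w^{\ast})\geq f(x^{\ast})$. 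Chaining the two inequalities yields $f(x^{\ast})=\tilde{f}(w^{\ast})$.

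The argument has no genuinely hard technical step; the only delicate point is the second inequality, where one must justify that the linear program over $\mathcal{C}_{3}$ has an optimum at a vertex and that every vertex of $\mathcal{C}_{3}$ is the image $\phi(p_{k})$ of an actual vertex of the cube rather than some spurious interior point. Both facts are standard (a linear functional on a compact polytope attains its extremum at an extreme point, and $\mathrm{ext}(\mathrm{conv}\,S)\subseteq S$ for finite $S$). The conceptual content is simply that Lemma \ref{LEMA2} supplies one direction while vertex-optimality of linear programming supplies the other; notably, the argument does not require $\phi(\mathcal{H}_{3})=\mathcal{C}_{3}$, which in fact fails, as the counterexample preceding the lemma shows.
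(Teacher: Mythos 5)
Your proof is correct and follows essentially the same route as the paper's: the first inequality via Lemma \ref{LEMA2} together with the identity $\tilde{f}(\phi(x))=\tilde{c}^{T}\phi(x)=c^{T}\alpha(x)=f(x)$ (which rests on $T_{3}E_{3}=I_{6}$), and the reverse inequality by pulling the optimizer $w^{\ast}$ back to a cube vertex. If anything, you are more careful than the paper, which simply asserts that ``there exists a vertex $y$ in $\mathcal{H}_{3}$ such that $\phi(y)=w^{\ast}$,'' whereas you justify it by vertex-attainment of linear minima over a polytope and the fact that $\mathrm{ext}(\mathrm{conv}\,S)\subseteq S$ for finite $S$, noting correctly that $w^{\ast}$ ``may be taken'' to be such a vertex since all minimizers share the same value.
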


\begin{proof}
In virtue of Lemma \ref{LEMA2}, $\phi\left(  \mathcal{H}_{3}\right)
\subseteq\mathcal{C}_{3}$. This means that there exists a point $w\in
\mathcal{C}_{3}$ such that $\phi\left(  x^{\ast}\right)  =w$. The minimum of
$\tilde{f}$ over $\mathcal{C}_{3}$ is attained at $w^{\ast}\in\mathcal{C}_{3}%
$, so that
\begin{equation}
\tilde{f}\left(  w^{\ast}\right)  \leq\tilde{f}\left(  w\right)
\text{.}\label{EQ100}%
\end{equation}
The connection between $f$ and $\tilde{f}$ yields:%
\begin{gather}
\tilde{f}\left(  \phi\left(  x^{\ast}\right)  \right)  =\tilde{c}^{T}%
\phi\left(  x^{\ast}\right)  =\left(  c^{T}T_{3}\right)  \left(  E_{3}%
\alpha\left(  x^{\ast}\right)  \right)  \label{EQ101}\\
=c^{T}\alpha\left(  x^{\ast}\right)  =f\left(  x^{\ast}\right)  \text{.}%
\nonumber
\end{gather}
\newline According to $\left(  \ref{EQ100}\right)  $ and $\left(
\ref{EQ101}\right)  $ it follows that%
\[
\tilde{f}\left(  w^{\ast}\right)  \leq\tilde{f}\left(  w\right)  =f\left(
x^{\ast}\right)  \text{.}%
\]
On the other hand, we know that there exists a vertex $y$ in $\mathcal{H}_{3}$
such that $\phi\left(  y\right)  =w^{\ast}$, so%
\begin{equation}
f\left(  y\right)  =c^{T}\alpha\left(  y\right)  =\left(  c^{T}T_{3}\right)
\left(  E_{3}\alpha\left(  y\right)  \right)  =\tilde{c}^{T}\phi\left(
y\right)  =\tilde{f}\left(  \phi\left(  y\right)  \right)  \text{.}%
\label{EQ102}%
\end{equation}
\newline Since the minimum of $f$ over $\mathcal{H}_{3}$ is attained at
$x^{\ast}\in\mathcal{H}_{3}$, and accounting for $\left(  \ref{EQ102}\right)
$, we have that%
\[
f\left(  y\right)  =\tilde{f}\left(  \phi\left(  y\right)  \right)  =f\left(
w^{\ast}\right)  \geq f\left(  x^{\ast}\right)  \text{.}%
\]
Henceforth, $f\left(  x^{\ast}\right)  =\tilde{f}\left(  w^{\ast}\right)  $.
\end{proof}

In the previous theorem the condition $\phi\left(  \mathcal{H}_{3}\right)
\subseteq\mathcal{C}_{3}$ could be eliminated since $\phi\left(  x^{\ast
}\right)  $ is a vertex of $\mathcal{C}_{3}$.

The problem (P$_{3}^{\prime}$): can be written as a linear programming problem:

\begin{description}
\item[\textbf{(LP}$_{3}$\textbf{):}] $\left\{
\begin{array}
[c]{l}%
\min\tilde{c}^{T}w\\
\text{such that }\left\{
\begin{array}
[c]{c}%
B\mathbf{\lambda}-w=0\\
u^{T}\mathbf{\lambda=}1\\
w\geq0\text{, }\mathbf{\lambda}\geq0
\end{array}
\right.
\end{array}
\right.  $
\end{description}

where $\mathbf{\lambda}^{T}=\left(  \lambda_{1},\ldots,\lambda_{8}\right)  $,
$B=\left(  \phi\left(  p_{1}\right)  ,\ldots,\phi\left(  p_{8}\right)
\right)  $, and $u$ is an all-ones vector of appropriate dimension. Throughout
this work the variables $\lambda$ are called secondary variables.

\begin{example}
Let $f\left(  x\right)  =x^{T}Qx+b^{T}x$ with%
\begin{align*}
Q  &  =\left(
\begin{array}
[c]{ccc}%
0 & -10 & -20\\
-10 & 0 & -10\\
-20 & -10 & 0
\end{array}
\right), \\
b  &  =\left(
\begin{array}
[c]{c}%
-2\\
-2\\
-26
\end{array}
\right).
\end{align*}
This objective function is written as $f\left(  x\right)  =c^{T}\alpha\left(
x\right)  $, where $c^{T}=\left(  -2,-20,-40,-2,-20,-26\right)  $. The problem
(LP$_{3}$) has an objective function $\tilde{f}\left(  w\right)  =\tilde
{c}^{T}w$ with $\tilde{c}^{T}=c^{T}T_{3}=\left(  1,-33,-23,21,7,-3\right)  $.

The matrix $B$ in the constraints was given in \ref{EQ13}. %
The minimum of (P') is $-110$ and is attained at $w^{\ast}=\left(
2,2,2,0,0,0\right)  ^{T}$.
\end{example}

\subsection{The General Case}

In this subsection, we generalize the simple case to the n-dimensional
hypercube $\mathcal{H}_{n}$.

For each triplet $\left(  i,j,k\right)  $ with $1\leq i<j<k\leq n$ we define
the convex-hull $\mathcal{H}_{3}^{\left(  i,j,k\right)  }$ in the variables
$x_{i}$, $x_{j}$, and $x_{k}$. Associated with this convex-hull we will have a
map $\phi_{i,j,k}:\mathcal{H}_{3}^{\left(  i,j,k\right)  }\rightarrow\left[
0,2\right]  ^{3}\times\left[  0,\frac{1}{2}\right]  ^{3}$ defined as%
\[
\phi_{i,j,k}\left(  x_{i},x_{j},x_{k}\right)  =\left(
\begin{array}
[c]{c}%
\frac{x_{i}+2x_{i}x_{j}+x_{j}}{2}\\
\frac{x_{i}+2x_{i}x_{k}+x_{k}}{2}\\
\frac{x_{j}+2x_{j}x_{k}+x_{k}}{2}\\
\frac{x_{i}-2x_{i}x_{j}+x_{j}}{2}\\
\frac{x_{i}-2x_{i}x_{k}+x_{k}}{2}\\
\frac{x_{j}-2x_{j}x_{k}+x_{k}}{2}%
\end{array}
\right).
\]

From the set of vertices $V^{\left(  i,j,k\right)  }$ of $\mathcal{H}%
_{3}^{\left(  i,j,k\right)  }$ the convex-Hull $\mathcal{C}_{3}^{\left(
i,j,k\right)  }=conv\left(  \phi_{i,j,k}\left(  V^{\left(  i,j,k\right)
}\right)  \right)  $) is defined. Recall that although the set $\phi
_{i,j,k}\left(  \mathcal{H}_{3}^{\left(  i,j,k\right)  }\right)  $ is not
convex, the image of $\mathcal{H}_{3}^{\left(  i,j,k\right)  }$ through
$\phi_{i,j,k}$ is contained in $\mathcal{C}_{3}^{\left(  i,j,k\right)  }$.

Similarly as done in $\left(  \ref{EQ1}\right)  $ and $\left(
\ref{EQ2}\right)  $ we define primary variables $u_{ij}$ and $v_{ij}$ for
$1\leq i<j\leq n$. For the sake of clarity we adopt the notation $w_{i,j,k}$
to refer to the vector $\left(  u_{ij},u_{ik},u_{jk}\,v_{ij},v_{ik}%
,v_{jk}\right)  $ and $x_{i,j,k}$ for the vector $\left(  x_{i},x_{j}%
,x_{k}\right)  $. With this notation the elements in $\mathcal{C}_{3}^{\left(
i,j,k\right)  }$ are $w_{i,j,k}\in\left[  0,2\right]  ^{3}\times\left[
0,\frac{1}{2}\right]  ^{3}$. Also, we will make a distinction between primary
variables $u_{i,j,k}\in\left[  0,2\right]  ^{3}$ and primary variables
$v_{i,j,k}\in\left[  0,\frac{1}{2}\right]  ^{3}$, which make up the vector
$w_{i,j,k}$ as $w_{i,j,k}=\left(  u_{i,j,k}^{T},v_{i,j,k}^{T}\right)  ^{T}$.
We have seen that given an $x_{i,j,k}\in\mathcal{H}_{3}^{\left(  i,j,k\right)
}$ there always exists a $w\in\mathcal{C}_{3}^{\left(  i,j,k\right)  }$ such
that $\phi_{i,j,k}\left(  x_{i,j,k}\right)  =w_{i,j,k}$. This result should be
borne in mind because it is key in the main theorem of this subsection.

From the convex-hulls $\mathcal{C}_{3}^{\left(  i,j,k\right)  }$ we create the
set $\mathcal{C}_{n}\subseteq\left[  0,2\right]  ^{n}\times\left[  0,\frac
{1}{2}\right]  ^{n}$ by introducing consistency constraints: For it, we
introduce the functions%
\[
g_{i,j,k}\left(  w\right)  =\frac{u_{i,j}+v_{i,j}+u_{i,k}+v_{i,k}%
-u_{j,k}-v_{j,k}}{2},%
\]
where $1\leq i<j<k\leq n$. This function can be compactly written as%
\[
g_{i,j,k}\left(  w\right)  =r^{T}w_{ijk}\text{ with }r^{T}=\frac{1}{2}\left(
1,1,1,1,-1,-1\right).
\]
Given a point $w\in\left[  0,2\right]  ^{n}\times\left[  0,\frac{1}{2}\right]
^{n}$ we define the following consistency constraints for $1\leq j<k\leq n$:

\textbf{(C1):} $g_{1,2,3}\left(  w\right)  =g_{1,j,k}\left(  w\right)  $ with
$\left(  j,k\right)  \neq\left(  2,3\right)  $.

\textbf{(C2):} $g_{2,1,3}\left(  w\right)  =g_{2,j,k}\left(  w\right)  $ with
$\left(  j,k\right)  \neq\left(  1,3\right)  $.

\textbf{(C3):} $g_{i,1,2}\left(  w\right)  =g_{i,j,k}\left(  w\right)  $ with
$\left(  j,k\right)  \neq\left(  1,2\right)  $ and $i\geq3$.

The set $\mathcal{C}_{n}$ is then defined as%
\[
\begin{split}
\mathcal{C}_{n}=\bigg \{  w\in\left[  0,2\right]  ^{n}\times\left[  0,\frac
{1}{2}\right]  ^{n}:w_{i,j,k}\in\mathcal{C}_{3}^{\left(  i,j,k\right)  }
\\
\text{
and }w\text{ satisfies \textbf{(C1)}, \textbf{(C2)}, and \textbf{(C3)}%
} \bigg\}.
\end{split}
\]
\bigskip

In this case $u$ and $v$ are $\binom{n}{2}$-dimensional vectors, and then
$w=\left(  u^{T},v^{T}\right)  ^{T}\in\mathbb{R}^{n\left(  n-1\right)  }$. Let
$w\in\mathcal{C}_{n}$, and triplets $\left(  i,j,k\right)  $, $\left(
i,j,l\right)  $, with $l\neq k$, there exist one-to-one maps $\phi_{i,j,k}$
and $\phi_{i,j,l}$ (as shown in the simple case) such that%
\[
w_{i,j,k}=\phi_{i,j,k}\left(  x_{i,j,k}\right)  \text{, }x_{i,j,k}%
\in\mathcal{H}_{3}^{\left(  i,j,k\right)  },%
\]
and
\[
w_{i,j,l}=\phi_{i,j,l}\left(  y_{i,j,k}\right)  \text{, }x_{i,j,l}%
\in\mathcal{H}_{3}^{\left(  i,j,l\right)  }.%
\]
Consistency means that we expect that
\begin{align*}
x_{i} &  =g_{i,j,k}\left(  w\right)  =g_{i,j,l}\left(  w\right)  =y_{i},\\
x_{j} &  =g_{j,i,k}\left(  w\right)  =g_{j,i.l}\left(  w\right)  =y_{j}.%
\end{align*}

Analogously to the simple case we have that $\phi\left(  \mathcal{H}%
_{n}\right)  \subset\mathcal{C}_{n}$. This result is argued in the following lemma:

\begin{lemma}
\label{LEMA3}Given an arbitrary point $x\in\mathcal{H}_{n}$, there exists a
$w\in\mathcal{C}_{n}$ such that $\phi\left(  x\right)  =w$.
\end{lemma}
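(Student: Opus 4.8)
The plan is to exhibit the required point explicitly by setting $w=\phi(x)$, i.e.\ letting $w$ be the vector whose entries are the primary variables $u_{ij}=\frac{x_i+2x_ix_j+x_j}{2}$ and $v_{ij}=\frac{x_i-2x_ix_j+x_j}{2}$ for $1\le i<j\le n$, and then checking the three conditions that define membership in $\mathcal{C}_n$. That $w\in\left[0,2\right]^n\times\left[0,\tfrac12\right]^n$ is immediate from the range bounds $u_{ij}\in\left[0,2\right]$ and $v_{ij}\in\left[0,\tfrac12\right]$ recorded in the Definition of primary variables. Note that each $u_{ij}$ and $v_{ij}$ is defined once from the single point $x$, so the restriction of the global $w$ to any triplet yields a well-defined vector $w_{i,j,k}$ with no ambiguity.

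Next I would verify that $w_{i,j,k}\in\mathcal{C}_3^{\left(i,j,k\right)}$ for every triplet $1\le i<j<k\le n$. Since $x_{i,j,k}=\left(x_i,x_j,x_k\right)$ is an arbitrary point of $\mathcal{H}_3^{\left(i,j,k\right)}$ and, by construction, $w_{i,j,k}=\phi_{i,j,k}\left(x_{i,j,k}\right)$, this is exactly the statement of Lemma \ref{LEMA2} transcribed from the distinguished indices $\left(1,2,3\right)$ to the triplet $\left(i,j,k\right)$; no new argument is needed beyond relabeling.

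The only point requiring genuine verification is that $w$ satisfies the consistency constraints \textbf{(C1)}, \textbf{(C2)}, and \textbf{(C3)}. The key observation is the algebraic identity $u_{ij}+v_{ij}=x_i+x_j$, obtained by adding $\left(\ref{EQ1}\right)$ and $\left(\ref{EQ2}\right)$, which cancels the bilinear term $x_ix_j$. Substituting this into the consistency function collapses it to
\[
g_{i,j,k}\left(w\right)=\frac{\left(x_i+x_j\right)+\left(x_i+x_k\right)-\left(x_j+x_k\right)}{2}=x_i,
\]
which is precisely identity $\left(\ref{EQ3}\right)$. The decisive feature is that this value depends only on the first index $i$ and not on the pair $\left(j,k\right)$. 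Hence $g_{1,j,k}\left(w\right)=x_1$ for every admissible $\left(j,k\right)$, giving \textbf{(C1)}; similarly $g_{2,j,k}\left(w\right)=x_2$ gives \textbf{(C2)}, and $g_{i,j,k}\left(w\right)=x_i$ for $i\ge3$ gives \textbf{(C3)}. All consistency constraints therefore hold automatically, so $w\in\mathcal{C}_n$ with $\phi\left(x\right)=w$.

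I do not anticipate any real obstacle. Once one recognizes that $u_{ij}+v_{ij}$ eliminates the quadratic contribution and returns $x_i+x_j$, the consistency functions become insensitive to the second and third indices in exactly the manner the constraints require, while the per-triplet membership is inherited verbatim from the simple case. The entire lemma thus reduces to one short algebraic identity together with a coordinate-wise application of Lemma \ref{LEMA2}.
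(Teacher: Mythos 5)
Your proposal is correct and takes essentially the same route as the paper's proof: both reduce the claim to a per-triplet application of Lemma \ref{LEMA2} (giving $w_{i,j,k}=\phi_{i,j,k}\left(x_{i,j,k}\right)\in\mathcal{C}_{3}^{\left(i,j,k\right)}$) together with the observation that $g_{i,j,k}\left(\phi\left(x\right)\right)=x_{i}$ independently of the pair $\left(j,k\right)$, which forces \textbf{(C1)}--\textbf{(C3)}. If anything, your explicit derivation of the consistency constraints via the identity $u_{ij}+v_{ij}=x_{i}+x_{j}$ spells out a step the paper only asserts when it writes $x_{i_{1}}=g_{i_{1},j_{1},k_{1}}\left(\phi_{i_{1},j_{1},k_{1}}\left(\cdot\right)\right)=g_{i_{1},j_{2},k_{2}}\left(\phi_{i_{1},j_{2},k_{2}}\left(\cdot\right)\right)=y_{i_{1}}$ for overlapping triplets.
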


\begin{proof}
For a triplet $\left(  i_{1},j_{1},k_{1}\right)  $ there exists a point
$\left(  w_{i_{1}},w_{j_{1}},w_{k_{1}}\right)  \in\mathcal{C}_{3}$ and a map
$\phi_{i_{1},j_{1},k_{1}}:\mathcal{H}_{3}\rightarrow\mathcal{C}_{3}$ such that
$\phi_{i_{1},j_{1},k_{1}}\left(  x_{i_{1}},x_{j_{1}},x_{k_{1}}\right)
=w_{i_{1},j_{1},k_{1}}$ (this was shown above in Lemma \ref{LEMA2} for the
simple case). Let $\left(  i_{2},j_{2},k_{2}\right)  $ be another triplet with
$\phi_{i_{2},j_{2},k_{2}}\left(  x_{i_{2}},x_{j_{2}},x_{k_{2}}\right)
=w_{i_{2},j_{2},k_{2}}$, such that $\left\{  i_{1},j_{1},k_{1}\right\}
\cap\left\{  i_{2},j_{2},k_{2}\right\}  \neq\varnothing$. Without loss of
generality let us assume that $i_{1}=i_{2}$ (otherwise we can make a
permutation of the indices to get that configuration) then from the
consistency constraints we have $x_{i_{1}}=y_{i_{1}}$:%
\[
x_{i_{1}}=g_{i_{1},j_{1},k_{1}}\left(  \phi_{i_{1},j_{1},k_{1}}\left(
x_{i_{1}},x_{j_{1}},x_{k_{1}}\right)  \right)  =g_{i_{1},j_{2},k_{2}}\left(
\phi_{i_{1},j_{2},k_{2}}\left(  y_{i_{1}},y_{j_{2}},y_{k_{2}}\right)  \right)
=y_{i_{1}}%
\]
Extending this idea to all pair of triplets, we conclude that for every
$x\in\mathcal{H}_{n}$, there exists a vector $w\in\mathcal{C}_{n}$ such that
$\phi\left(  x\right)  =w$.

\end{proof}

\begin{example}
\label{EXAMPLE2}For $n=4$, the consistency constraints are%
\begin{gather*}
\text{Consistency for }x_{1}\text{:}\\
\left\{
\begin{array}
[c]{c}%
\left(  u_{12}+v_{12}+u_{13}+v_{13}-u_{23}-v_{23}\right)  \\
-\left(  u_{12}+v_{12}+u_{14}+v_{14}-u_{24}-v_{24}\right)  =0\\
\left(  u_{12}+v_{12}+u_{13}+v_{13}-u_{23}-v_{23}\right)  \\
-\left(  u_{13}+v_{13}+u_{14}+v_{14}-u_{34}-v_{34}\right)  =0
\end{array}
\right.
\end{gather*}%
\begin{gather*}
\text{Consistency for }x_{2}\text{:}\\
\left\{
\begin{array}
[c]{c}%
\left(  u_{12}+v_{12}+u_{23}+v_{23}-u_{13}-v_{13}\right)  \\
-\left(  u_{12}+v_{12}+u_{24}+v_{24}-u_{14}-v_{14}\right)  =0\\
\left(  u_{12}+v_{12}+u_{23}+v_{23}-u_{13}-v_{13}\right)  \\
-\left(  u_{23}+v_{23}+u_{24}+v_{24}-u_{34}-v_{34}\right)  =0
\end{array}
\right.
\end{gather*}%
\begin{gather*}
\text{Consistency for }x_{3}\text{:}\\
\left\{
\begin{array}
[c]{c}%
\left(  u_{13}+v_{13}+u_{23}+v_{23}-u_{12}-v_{12}\right)  \\
-\left(  u_{13}+v_{13}+u_{34}+v_{34}-u_{14}-v_{14}\right)  =0\\
\left(  u_{13}+v_{13}+u_{23}+v_{23}-u_{12}-v_{12}\right)  \\
-\left(  u_{23}+v_{23}+u_{34}+v_{34}-u_{24}-v_{24}\right)  =0
\end{array}
\right.
\end{gather*}%
\begin{gather*}
\text{Consistency for }x_{4}\text{:}\\
\left\{
\begin{array}
[c]{c}%
\left(  u_{14}+v_{14}+u_{24}+v_{24}-u_{12}-v_{12}\right)  \\
-\left(  u_{14}+v_{14}+u_{34}+v_{34}-u_{13}-v_{13}\right)  =0\\
\left(  u_{14}+v_{14}+u_{24}+v_{24}-u_{12}-v_{12}\right)  \\
-\left(  u_{24}+v_{24}+u_{34}+v_{34}-u_{23}-v_{23}\right)  =0
\end{array}
\right.
\end{gather*}

\begin{figure*}
[ptb]
\begin{center}
\tikzset{every picture/.style={line width=0.75pt}} 

\begin{tikzpicture}[x=0.75pt,y=0.75pt,yscale=-1,xscale=1]

\draw   (45.5,24) .. controls (45.5,19.58) and (49.08,16) .. (53.5,16) -- (107.5,16) .. controls (111.92,16) and (115.5,19.58) .. (115.5,24) -- (115.5,48) .. controls (115.5,52.42) and (111.92,56) .. (107.5,56) -- (53.5,56) .. controls (49.08,56) and (45.5,52.42) .. (45.5,48) -- cycle ;
\draw   (45.5,81.33) .. controls (45.5,76.91) and (49.08,73.33) .. (53.5,73.33) -- (107.5,73.33) .. controls (111.92,73.33) and (115.5,76.91) .. (115.5,81.33) -- (115.5,105.33) .. controls (115.5,109.75) and (111.92,113.33) .. (107.5,113.33) -- (53.5,113.33) .. controls (49.08,113.33) and (45.5,109.75) .. (45.5,105.33) -- cycle ;
\draw   (45.5,138.66) .. controls (45.5,134.24) and (49.08,130.66) .. (53.5,130.66) -- (107.5,130.66) .. controls (111.92,130.66) and (115.5,134.24) .. (115.5,138.66) -- (115.5,162.66) .. controls (115.5,167.08) and (111.92,170.66) .. (107.5,170.66) -- (53.5,170.66) .. controls (49.08,170.66) and (45.5,167.08) .. (45.5,162.66) -- cycle ;
\draw   (45.5,193.5) .. controls (45.5,189.08) and (49.08,185.5) .. (53.5,185.5) -- (107.5,185.5) .. controls (111.92,185.5) and (115.5,189.08) .. (115.5,193.5) -- (115.5,217.5) .. controls (115.5,221.92) and (111.92,225.5) .. (107.5,225.5) -- (53.5,225.5) .. controls (49.08,225.5) and (45.5,221.92) .. (45.5,217.5) -- cycle ;
\draw    (117,36) -- (241,36) ;
\draw [shift={(243,36)}, rotate = 180] [color={rgb, 255:red, 0; green, 0; blue, 0 }  ][line width=0.75]    (10.93,-3.29) .. controls (6.95,-1.4) and (3.31,-0.3) .. (0,0) .. controls (3.31,0.3) and (6.95,1.4) .. (10.93,3.29)   ;
\draw    (117,93.33) -- (241,93.33) ;
\draw [shift={(243,93.33)}, rotate = 180] [color={rgb, 255:red, 0; green, 0; blue, 0 }  ][line width=0.75]    (10.93,-3.29) .. controls (6.95,-1.4) and (3.31,-0.3) .. (0,0) .. controls (3.31,0.3) and (6.95,1.4) .. (10.93,3.29)   ;
\draw    (117,150.66) -- (241,150.66) ;
\draw [shift={(243,150.66)}, rotate = 180] [color={rgb, 255:red, 0; green, 0; blue, 0 }  ][line width=0.75]    (10.93,-3.29) .. controls (6.95,-1.4) and (3.31,-0.3) .. (0,0) .. controls (3.31,0.3) and (6.95,1.4) .. (10.93,3.29)   ;
\draw    (118,211) -- (242,211) ;
\draw [shift={(244,211)}, rotate = 180] [color={rgb, 255:red, 0; green, 0; blue, 0 }  ][line width=0.75]    (10.93,-3.29) .. controls (6.95,-1.4) and (3.31,-0.3) .. (0,0) .. controls (3.31,0.3) and (6.95,1.4) .. (10.93,3.29)   ;
\draw  [dash pattern={on 4.5pt off 4.5pt}]  (332,36) -- (391.83,36) -- (408.83,36) ;
\draw [shift={(410.83,36)}, rotate = 180] [color={rgb, 255:red, 0; green, 0; blue, 0 }  ][line width=0.75]    (10.93,-3.29) .. controls (6.95,-1.4) and (3.31,-0.3) .. (0,0) .. controls (3.31,0.3) and (6.95,1.4) .. (10.93,3.29)   ;
\draw  [dash pattern={on 4.5pt off 4.5pt}]  (333,93.33) -- (392.83,93.33) -- (409.83,93.33) ;
\draw [shift={(411.83,93.33)}, rotate = 180] [color={rgb, 255:red, 0; green, 0; blue, 0 }  ][line width=0.75]    (10.93,-3.29) .. controls (6.95,-1.4) and (3.31,-0.3) .. (0,0) .. controls (3.31,0.3) and (6.95,1.4) .. (10.93,3.29)   ;
\draw  [dash pattern={on 4.5pt off 4.5pt}]  (334,150.66) -- (393.83,150.66) -- (410.83,150.66) ;
\draw [shift={(412.83,150.66)}, rotate = 180] [color={rgb, 255:red, 0; green, 0; blue, 0 }  ][line width=0.75]    (10.93,-3.29) .. controls (6.95,-1.4) and (3.31,-0.3) .. (0,0) .. controls (3.31,0.3) and (6.95,1.4) .. (10.93,3.29)   ;
\draw  [dash pattern={on 4.5pt off 4.5pt}]  (332,205.5) -- (391.83,205.5) -- (408.83,205.5) ;
\draw [shift={(410.83,205.5)}, rotate = 180] [color={rgb, 255:red, 0; green, 0; blue, 0 }  ][line width=0.75]    (10.93,-3.29) .. controls (6.95,-1.4) and (3.31,-0.3) .. (0,0) .. controls (3.31,0.3) and (6.95,1.4) .. (10.93,3.29)   ;
\draw  [dash pattern={on 4.5pt off 4.5pt}]  (332,36) -- (410.21,92.16) ;
\draw [shift={(411.83,93.33)}, rotate = 215.68] [color={rgb, 255:red, 0; green, 0; blue, 0 }  ][line width=0.75]    (10.93,-3.29) .. controls (6.95,-1.4) and (3.31,-0.3) .. (0,0) .. controls (3.31,0.3) and (6.95,1.4) .. (10.93,3.29)   ;
\draw  [dash pattern={on 4.5pt off 4.5pt}]  (332,36) -- (411.68,149.03) ;
\draw [shift={(412.83,150.66)}, rotate = 234.82] [color={rgb, 255:red, 0; green, 0; blue, 0 }  ][line width=0.75]    (10.93,-3.29) .. controls (6.95,-1.4) and (3.31,-0.3) .. (0,0) .. controls (3.31,0.3) and (6.95,1.4) .. (10.93,3.29)   ;
\draw  [dash pattern={on 4.5pt off 4.5pt}]  (333,93.33) -- (409.22,37.19) ;
\draw [shift={(410.83,36)}, rotate = 503.63] [color={rgb, 255:red, 0; green, 0; blue, 0 }  ][line width=0.75]    (10.93,-3.29) .. controls (6.95,-1.4) and (3.31,-0.3) .. (0,0) .. controls (3.31,0.3) and (6.95,1.4) .. (10.93,3.29)   ;
\draw  [dash pattern={on 4.5pt off 4.5pt}]  (333,93.33) -- (409.69,203.86) ;
\draw [shift={(410.83,205.5)}, rotate = 235.24] [color={rgb, 255:red, 0; green, 0; blue, 0 }  ][line width=0.75]    (10.93,-3.29) .. controls (6.95,-1.4) and (3.31,-0.3) .. (0,0) .. controls (3.31,0.3) and (6.95,1.4) .. (10.93,3.29)   ;
\draw  [dash pattern={on 4.5pt off 4.5pt}]  (334,150.66) -- (409.72,37.66) ;
\draw [shift={(410.83,36)}, rotate = 483.83] [color={rgb, 255:red, 0; green, 0; blue, 0 }  ][line width=0.75]    (10.93,-3.29) .. controls (6.95,-1.4) and (3.31,-0.3) .. (0,0) .. controls (3.31,0.3) and (6.95,1.4) .. (10.93,3.29)   ;
\draw  [dash pattern={on 4.5pt off 4.5pt}]  (334,150.66) -- (409.21,204.34) ;
\draw [shift={(410.83,205.5)}, rotate = 215.52] [color={rgb, 255:red, 0; green, 0; blue, 0 }  ][line width=0.75]    (10.93,-3.29) .. controls (6.95,-1.4) and (3.31,-0.3) .. (0,0) .. controls (3.31,0.3) and (6.95,1.4) .. (10.93,3.29)   ;
\draw  [dash pattern={on 4.5pt off 4.5pt}]  (332,205.5) -- (410.67,94.96) ;
\draw [shift={(411.83,93.33)}, rotate = 485.44] [color={rgb, 255:red, 0; green, 0; blue, 0 }  ][line width=0.75]    (10.93,-3.29) .. controls (6.95,-1.4) and (3.31,-0.3) .. (0,0) .. controls (3.31,0.3) and (6.95,1.4) .. (10.93,3.29)   ;
\draw  [dash pattern={on 4.5pt off 4.5pt}]  (332,205.5) -- (411.18,151.78) ;
\draw [shift={(412.83,150.66)}, rotate = 505.85] [color={rgb, 255:red, 0; green, 0; blue, 0 }  ][line width=0.75]    (10.93,-3.29) .. controls (6.95,-1.4) and (3.31,-0.3) .. (0,0) .. controls (3.31,0.3) and (6.95,1.4) .. (10.93,3.29)   ;

\draw (80.5,36) node    {$\mathcal{C}^{( 1,2,3)}_{3}$};
\draw (80.5,93.33) node    {$\mathcal{C}^{( 1,2,4)}_{3}$};
\draw (80.5,150.66) node    {$\mathcal{C}^{( 1,3,4)}_{3}$};
\draw (80.5,205.5) node    {$\mathcal{C}^{( 2,3,4)}_{3}$};
\draw (251,25) node [anchor=north west][inner sep=0.75pt]   [align=left] {$\displaystyle ( x_{1} ,x_{2} ,x_{3}$)};
\draw (143.5,9) node [anchor=north west][inner sep=0.75pt]   [align=left] {$\displaystyle Cw_{123}$};
\draw (250,82.33) node [anchor=north west][inner sep=0.75pt]   [align=left] {$\displaystyle ( x_{1} ,x_{2} ,x_{4}$)};
\draw (252,139.66) node [anchor=north west][inner sep=0.75pt]   [align=left] {$\displaystyle ( x_{1} ,x_{3} ,x_{4}$)};
\draw (251,194.5) node [anchor=north west][inner sep=0.75pt]   [align=left] {$\displaystyle ( x_{2} ,x_{3} ,x_{4}$)};
\draw (413,25) node [anchor=north west][inner sep=0.75pt]   [align=left] {$\displaystyle x_{1} =r^{T} w_{123} =r^{T} w_{124} =r^{T} w_{134} \ $};
\draw (143.5,67) node [anchor=north west][inner sep=0.75pt]   [align=left] {$\displaystyle Cw_{124}$};
\draw (143.5,124) node [anchor=north west][inner sep=0.75pt]   [align=left] {$\displaystyle Cw_{134}$};
\draw (143.5,188.5) node [anchor=north west][inner sep=0.75pt]   [align=left] {$\displaystyle Cw_{234}$};
\draw (433,3) node [anchor=north west][inner sep=0.75pt]   [align=left] {Consistency Constraints};
\draw (414,82.33) node [anchor=north west][inner sep=0.75pt]   [align=left] {$\displaystyle x_{2} =r^{T} w_{213} =r^{T} w_{214} =r^{T} w_{234} \ $};
\draw (414,139.66) node [anchor=north west][inner sep=0.75pt]   [align=left] {$\displaystyle x_{3} =r^{T} w_{312} =r^{T} w_{314} =r^{T} w_{324} \ $};
\draw (414,194.5) node [anchor=north west][inner sep=0.75pt]   [align=left] {$\displaystyle x_{4} =r^{T} w_{412} =r^{T} w_{413} =r^{T} w_{423} \ $};

\end{tikzpicture}
\caption{Consistency constraints for $n=4$}%
\label{FIG3}%
\end{center}
\end{figure*}
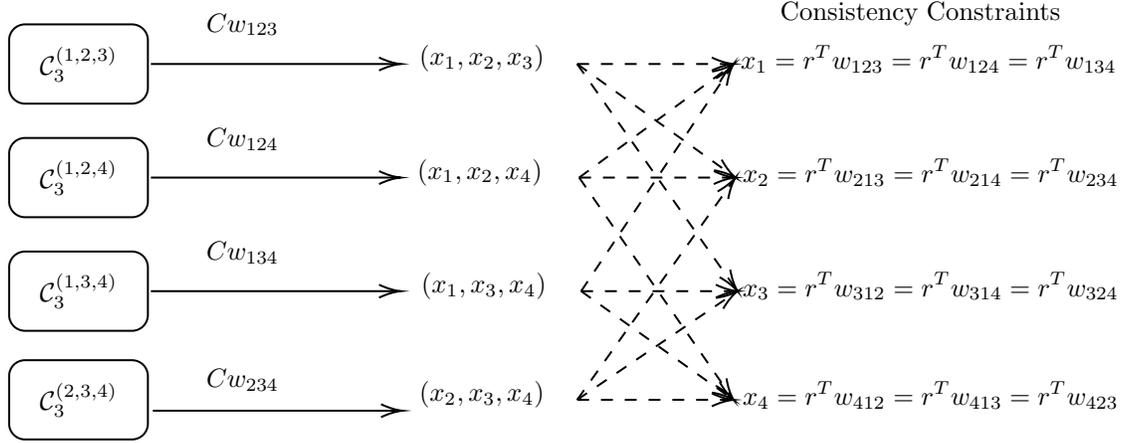

In Figure \ref{FIG3}, we graphically show the consistency constraints for the
variables $x_{1}$, $x_{2}$, $x_{3}$ and $x_{4}$, generated from the
convex-hulls $\mathcal{C}_{3}^{\left(  1,2,3\right)  }\,$, $\mathcal{C}%
_{3}^{\left(  1,2,4\right)  }$, $\mathcal{C}_{3}^{\left(  1,3,4\right)  }$ and
$\mathcal{C}_{3}^{\left(  2,3,4\right)  }$ via the linear transformation
$C=\left(  M,M\right)  $ with%
\[
M=\frac{1}{4}\left(
\begin{array}
[c]{ccc}%
1 & 1 & -1\\
1 & -1 & 1\\
-1 & 1 & 1
\end{array}
\right).
\]
Note that
\[
C\phi_{i,j,k}\left(  x_{i,j,k}\right)  =\left(
\begin{array}
[c]{c}%
x_{i}\\
x_{j}\\
x_{k}%
\end{array}
\right).
\]

\end{example}

We will see that the minimization problem of $f\left(  x\right)  $ in
$\mathcal{H}_{n}$ is equivalent to the minimization problem of a linear
objective function $\tilde{f}\left(  w\right)  $ in $\mathcal{C}_{n}$, where
$\mathcal{C}_{n}$ can be expressed as constraints on equality given by the
convexity constraints and consistency constraints, in addition to the natural
constraints for the vectors $\lambda^{\left(  i,j,k\right)  }$ (secondary
variables) in the definition of the convex-hull $\mathcal{C}_{3}^{\left(
i,j,k\right)  }$:%
\[
\begin{split}
\mathcal{C}_{3}^{\left(  i,j,k\right)  }=\bigg \{  \sum_{l=1}^{8}\lambda
_{l}^{\left(  i,j,k\right)  }\phi\left(  p_{l}\right)  :\sum_{l=1}^{8}%
\lambda_{l}^{\left(  i,j,k\right)  }=1,
\\
\text{ }p_{l}\in V^{\left(
i,j,k\right)  }\text{, }\lambda_{l}^{\left(  i,j,k\right)  }\geq0\text{ for
}l=1,\ldots,8 \bigg \}.
\end{split}
\]
In particular, to define $\tilde{f}\left(  w\right)  $ it is necessary to
introduce a transformation $T_{n}$ such that%
\[
\tilde{f}\left(  w\right)  =\tilde{c}^{T}w=c^{T}T_{n}w.
\]
This transformation is obtained according to the following relationships:%
\begin{align*}
x_{i} &  =g_{i,j,k}\left(  w\right),  \\
x_{i}x_{j} &  =\frac{u_{i,j}-v_{i,j}}{2}.%
\end{align*}
More specifically, if $\alpha:\mathcal{H}_{n}\rightarrow\alpha\left(
\mathcal{H}_{n}\right)  $ is the map%
\begin{align*}
\alpha\left(  x\right)    & =(x_{1},x_{1}x_{2},\cdots,x_{1}x_{n},x_{2}%
,x_{2}x_{3},\cdots,\\
& x_{2}x_{n},\cdots,x_{n-1},x_{n-1}x_{n},x_{n}).
\end{align*}
and $\beta:\alpha\left(  \mathcal{H}_{n}\right)  \rightarrow\mathcal{C}_{n}$
is a linear map $\beta\left(  \alpha\left(  x\right)  \right)  =E_{n}%
\alpha\left(  x\right)  =w$ such that for a triplet $\left(  i,j,k\right)  $
we verify that%
\begin{align*}
u_{ij}  & =\frac{x_{i}+2x_{i}x_{j}+x_{j}}{2}\text{, }u_{ik}=\frac{x_{i}%
+2x_{i}x_{k}+x_{k}}{2},\\
u_{jk}  & =\frac{x_{j}+2x_{j}x_{k}+x_{k}}{2}\text{, }v_{ij}=\frac{x_{i}%
-2x_{i}x_{j}+x_{j}}{2},\\
v_{ik}  & =\frac{x_{i}-2x_{i}x_{k}+x_{k}}{2},v_{jk}=\frac{x_{j}-2x_{j}%
x_{k}+x_{k}}{2}.%
\end{align*}
Here $E_{n}$ is not a square matrix but a rectangular $n\left(  n-1\right)
\times\frac{n\left(  n+1\right)  }{2}$. The matrices $E_{n}$ and $T_{n}$ are
connected by the relation:%
\[
T_{n}E_{n}=I,
\]
where $I$ is the identity matrix of dimension $\frac{n\left(  n+1\right)  }%
{2}\times\frac{n\left(  n+1\right)  }{2}$. Obviolusly the transformation
$T_{n}$ satisfies the relation%
\[
T_{n}\beta\left(  \alpha\left(  x\right)  \right)  =\alpha\left(  x\right).
\]

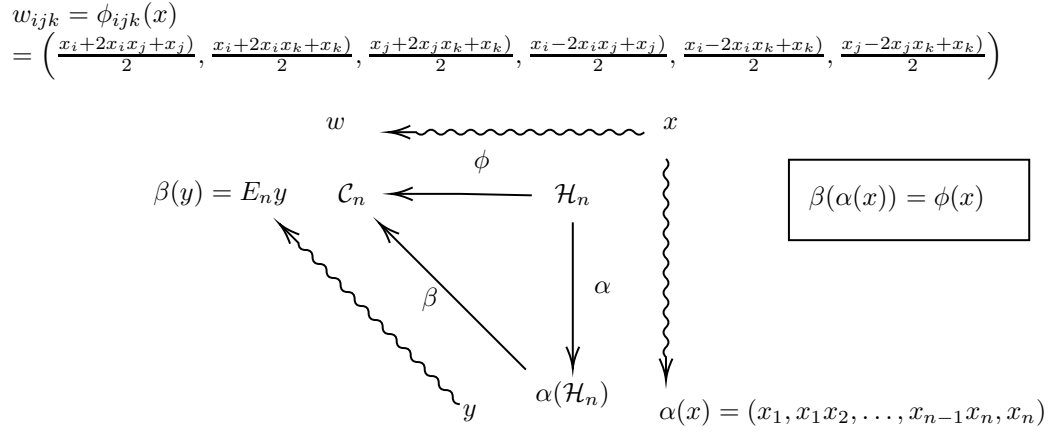
\begin{figure*}
[ptb]
\begin{center}
\tikzset{every picture/.style={line width=0.75pt}} 

\begin{tikzpicture}[x=0.75pt,y=0.75pt,yscale=-1,xscale=1]

\draw    (309,122) -- (309,193.73) ;
\draw [shift={(309,195.73)}, rotate = 270] [color={rgb, 255:red, 0; green, 0; blue, 0 }  ][line width=0.75]    (10.93,-3.29) .. controls (6.95,-1.4) and (3.31,-0.3) .. (0,0) .. controls (3.31,0.3) and (6.95,1.4) .. (10.93,3.29)   ;
\draw    (285.33,196.33) -- (212.75,123.75) ;
\draw [shift={(211.33,122.33)}, rotate = 405] [color={rgb, 255:red, 0; green, 0; blue, 0 }  ][line width=0.75]    (10.93,-3.29) .. controls (6.95,-1.4) and (3.31,-0.3) .. (0,0) .. controls (3.31,0.3) and (6.95,1.4) .. (10.93,3.29)   ;
\draw    (288.33,108.37) -- (252.33,107.63) -- (219.33,107.63) ;
\draw [shift={(217.33,107.63)}, rotate = 360] [color={rgb, 255:red, 0; green, 0; blue, 0 }  ][line width=0.75]    (10.93,-3.29) .. controls (6.95,-1.4) and (3.31,-0.3) .. (0,0) .. controls (3.31,0.3) and (6.95,1.4) .. (10.93,3.29)   ;
\draw    (356,90) .. controls (357.67,91.67) and (357.67,93.33) .. (356,95) .. controls (354.33,96.67) and (354.33,98.33) .. (356,100) .. controls (357.67,101.67) and (357.67,103.33) .. (356,105) .. controls (354.33,106.67) and (354.33,108.33) .. (356,110) .. controls (357.67,111.67) and (357.67,113.33) .. (356,115) .. controls (354.33,116.67) and (354.33,118.33) .. (356,120) .. controls (357.67,121.67) and (357.67,123.33) .. (356,125) .. controls (354.33,126.67) and (354.33,128.33) .. (356,130) .. controls (357.67,131.67) and (357.67,133.33) .. (356,135) .. controls (354.33,136.67) and (354.33,138.33) .. (356,140) .. controls (357.67,141.67) and (357.67,143.33) .. (356,145) .. controls (354.33,146.67) and (354.33,148.33) .. (356,150) .. controls (357.67,151.67) and (357.67,153.33) .. (356,155) .. controls (354.33,156.67) and (354.33,158.33) .. (356,160) .. controls (357.67,161.67) and (357.67,163.33) .. (356,165) .. controls (354.33,166.67) and (354.33,168.33) .. (356,170) .. controls (357.67,171.67) and (357.67,173.33) .. (356,175) .. controls (354.33,176.67) and (354.33,178.33) .. (356,180) .. controls (357.67,181.67) and (357.67,183.33) .. (356,185) .. controls (354.33,186.67) and (354.33,188.33) .. (356,190) -- (356,191.73) -- (356,199.73) ;
\draw [shift={(356,201.73)}, rotate = 270] [color={rgb, 255:red, 0; green, 0; blue, 0 }  ][line width=0.75]    (10.93,-3.29) .. controls (6.95,-1.4) and (3.31,-0.3) .. (0,0) .. controls (3.31,0.3) and (6.95,1.4) .. (10.93,3.29)   ;
\draw    (252,213.73) .. controls (249.64,213.74) and (248.46,212.56) .. (248.46,210.2) .. controls (248.47,207.84) and (247.29,206.66) .. (244.93,206.66) .. controls (242.57,206.67) and (241.39,205.49) .. (241.39,203.13) .. controls (241.4,200.77) and (240.22,199.59) .. (237.86,199.59) .. controls (235.5,199.6) and (234.32,198.42) .. (234.32,196.06) .. controls (234.33,193.7) and (233.15,192.52) .. (230.79,192.52) .. controls (228.43,192.52) and (227.25,191.34) .. (227.25,188.98) .. controls (227.25,186.63) and (226.07,185.45) .. (223.72,185.45) .. controls (221.36,185.45) and (220.18,184.27) .. (220.18,181.91) .. controls (220.18,179.55) and (219,178.37) .. (216.64,178.38) .. controls (214.28,178.38) and (213.1,177.2) .. (213.11,174.84) .. controls (213.11,172.48) and (211.93,171.3) .. (209.57,171.31) .. controls (207.21,171.31) and (206.03,170.13) .. (206.04,167.77) .. controls (206.04,165.41) and (204.86,164.23) .. (202.5,164.24) .. controls (200.14,164.24) and (198.96,163.06) .. (198.97,160.7) .. controls (198.97,158.34) and (197.79,157.16) .. (195.43,157.16) .. controls (193.08,157.16) and (191.9,155.98) .. (191.9,153.63) .. controls (191.9,151.27) and (190.72,150.09) .. (188.36,150.09) .. controls (186,150.1) and (184.82,148.92) .. (184.82,146.56) .. controls (184.83,144.2) and (183.65,143.02) .. (181.29,143.02) .. controls (178.93,143.03) and (177.75,141.85) .. (177.75,139.49) .. controls (177.76,137.13) and (176.58,135.95) .. (174.22,135.95) .. controls (171.86,135.96) and (170.68,134.78) .. (170.68,132.42) -- (168.07,129.8) -- (162.41,124.15) ;
\draw [shift={(161,122.73)}, rotate = 405] [color={rgb, 255:red, 0; green, 0; blue, 0 }  ][line width=0.75]    (10.93,-3.29) .. controls (6.95,-1.4) and (3.31,-0.3) .. (0,0) .. controls (3.31,0.3) and (6.95,1.4) .. (10.93,3.29)   ;
\draw    (345,76.5) .. controls (343.33,78.17) and (341.67,78.17) .. (340,76.5) .. controls (338.33,74.83) and (336.67,74.83) .. (335,76.5) .. controls (333.33,78.17) and (331.67,78.17) .. (330,76.5) .. controls (328.33,74.83) and (326.67,74.83) .. (325,76.5) .. controls (323.33,78.17) and (321.67,78.17) .. (320,76.5) .. controls (318.33,74.83) and (316.67,74.83) .. (315,76.5) .. controls (313.33,78.17) and (311.67,78.17) .. (310,76.5) .. controls (308.33,74.83) and (306.67,74.83) .. (305,76.5) .. controls (303.33,78.17) and (301.67,78.17) .. (300,76.5) .. controls (298.33,74.83) and (296.67,74.83) .. (295,76.5) .. controls (293.33,78.17) and (291.67,78.17) .. (290,76.5) .. controls (288.33,74.83) and (286.67,74.83) .. (285,76.5) .. controls (283.33,78.17) and (281.67,78.17) .. (280,76.5) .. controls (278.33,74.83) and (276.67,74.83) .. (275,76.5) .. controls (273.33,78.17) and (271.67,78.17) .. (270,76.5) .. controls (268.33,74.83) and (266.67,74.83) .. (265,76.5) .. controls (263.33,78.17) and (261.67,78.17) .. (260,76.5) .. controls (258.33,74.83) and (256.67,74.83) .. (255,76.5) .. controls (253.33,78.17) and (251.67,78.17) .. (250,76.5) .. controls (248.33,74.83) and (246.67,74.83) .. (245,76.5) .. controls (243.33,78.17) and (241.67,78.17) .. (240,76.5) .. controls (238.33,74.83) and (236.67,74.83) .. (235,76.5) .. controls (233.33,78.17) and (231.67,78.17) .. (230,76.5) -- (226.1,76.5) -- (218.1,76.5) ;
\draw [shift={(216.1,76.5)}, rotate = 360] [color={rgb, 255:red, 0; green, 0; blue, 0 }  ][line width=0.75]    (10.93,-3.29) .. controls (6.95,-1.4) and (3.31,-0.3) .. (0,0) .. controls (3.31,0.3) and (6.95,1.4) .. (10.93,3.29)   ;
\draw   (418,90.4) -- (539.2,90.4) -- (539.2,131) -- (418,131) -- cycle ;

\draw (309,108) node    {$\mathcal{H}_{n}$};
\draw (309,209) node    {$\alpha (\mathcal{H}_{n})$};
\draw (198,108) node    {$\mathcal{C}_{n}$};
\draw (276,30) node    {$ \begin{array}{l}
w_{ijk} =\phi _{ijk}( x)\\
=\left(\frac{x_{i} +2x_{i} x_{j} +x_{j})}{2} ,\frac{x_{i} +2x_{i} x_{k} +x_{k})}{2} ,\frac{x_{j} +2x_{j} x_{k} +x_{k})}{2} ,\frac{x_{i} -2x_{i} x_{j} +x_{j})}{2} ,\frac{x_{i} -2x_{i} x_{k} +x_{k})}{2} ,\frac{x_{j} -2x_{j} x_{k} +x_{k})}{2}\right)
\end{array}$};
\draw (183,67.5) node [anchor=north west][inner sep=0.75pt]   [align=left] {$\displaystyle w$};
\draw (353,67.5) node [anchor=north west][inner sep=0.75pt]   [align=left] {$\displaystyle x$};
\draw (351,210) node [anchor=north west][inner sep=0.75pt]   [align=left] {$\displaystyle \alpha ( x) =( x_{1} ,x_{1} x_{2} ,\dotsc ,x_{n-1} x_{n} ,x_{n}$)};
\draw (237,161) node    {$\beta $};
\draw (252,212.73) node [anchor=north west][inner sep=0.75pt]   [align=left] {$\displaystyle y$};
\draw (96,99) node [anchor=north west][inner sep=0.75pt]   [align=left] {$\displaystyle \beta ( y) =E_{n} y\ $};
\draw (324,155) node    {$\alpha $};
\draw (263,91) node    {$\phi $};
\draw (426.6,101.7) node [anchor=north west][inner sep=0.75pt]   [align=left] {$\displaystyle \beta ( \alpha ( x)) =\phi ( x)$};

\end{tikzpicture}
\caption{Diagram for the maps $\phi$,
$\alpha$, and $\beta$ in the general case.}%
\label{FIG4}%
\end{center}
\end{figure*}

Figure \ref{FIG4} illustrates the maps $\phi$, $\alpha$, and $\beta$, in a
similar way as in the simple case ($n=3$).

\begin{example}
\label{EXAMPLE3}For $n=4$:%
\[
T_{4}=\frac{1}{2}\left(
\begin{array}
[c]{cccccccccccc}%
1 & 1 & 0 & -1 & 0 & 0 & 1 & 1 & 0 & -1 & 0 & 0\\
1 & 0 & 0 & 0 & 0 & 0 & -1 & 0 & 0 & 0 & 0 & 0\\
0 & 1 & 0 & 0 & 0 & 0 & 0 & -1 & 0 & 0 & 0 & 0\\
0 & 0 & 1 & 0 & 0 & 0 & 0 & 0 & -1 & 0 & 0 & 0\\
1 & -1 & 0 & 1 & 0 & 0 & 1 & -1 & 0 & 1 & 0 & 0\\
0 & 0 & 0 & 1 & 0 & 0 & 0 & 0 & 0 & -1 & 0 & 0\\
0 & 0 & 0 & 0 & 1 & 0 & 0 & 0 & 0 & 0 & -1 & 0\\
-1 & 1 & 0 & 1 & 0 & 0 & -1 & 1 & 0 & 1 & 0 & 0\\
0 & 0 & 0 & 0 & 0 & 1 & 0 & 0 & 0 & 0 & 0 & -1\\
-1 & 0 & 1 & 0 & 1 & 0 & -1 & 0 & 1 & 0 & 1 & 0
\end{array}
\right).
\]
The matrix $E_{4}$ is%
\[
\frac{1}{2}\left(
\begin{array}
[c]{cccccccccc}%
1 & 2 & 0 & 0 & 1 & 0 & 0 & 0 & 0 & 0\\
1 & 0 & 2 & 0 & 0 & 0 & 0 & 1 & 0 & 0\\
1 & 0 & 0 & 2 & 0 & 0 & 0 & 0 & 0 & 1\\
0 & 0 & 0 & 0 & 1 & 2 & 0 & 1 & 0 & 0\\
0 & 0 & 0 & 0 & 1 & 0 & 2 & 0 & 0 & 1\\
0 & 0 & 0 & 0 & 0 & 0 & 0 & 1 & 2 & 1\\
1 & -2 & 0 & 0 & 1 & 0 & 0 & 0 & 0 & 0\\
1 & 0 & -2 & 0 & 0 & 0 & 0 & 1 & 0 & 0\\
1 & 0 & 0 & -2 & 0 & 0 & 0 & 0 & 0 & 1\\
0 & 0 & 0 & 0 & 1 & -2 & 0 & 1 & 0 & 0\\
0 & 0 & 0 & 0 & 1 & 0 & -2 & 0 & 0 & 1\\
0 & 0 & 0 & 0 & 0 & 0 & 0 & 1 & -2 & 1
\end{array}
\right).
\]
It can be checked that $T_{4}E_{4}=I_{10}$.
\end{example}

Now we formulate the minimization problem in $\mathcal{C}_{n}$:

\begin{description}
\item[\textbf{(P}$_{n}^{\prime}$\textbf{):}] $\min_{w\in\mathcal{C}_{n}}%
\tilde{f}\left(  w\right)  $,
\end{description}

where the objective function $\tilde{f}:\mathcal{C}_{n}\rightarrow\mathbb{R}$
is defined as $\tilde{f}\left(  x\right)  =\tilde{c}^{T}w$ with $\tilde{c}%
^{T}=c^{T}T_{n}$. The following Theorem is an extension of Theorem \ref{TEO1}
and proves that the minimum of the problem ($P_{n}$) is the same as that of
($P_{n}^{\prime}$).

\begin{theorem}
Let $w^{\ast}\in\mathcal{C}_{n}$ be the point that minimizes the function
$\tilde{f}$, and $x^{\ast}\in\mathcal{H}_{n}$ the point where $f$ reaches the
minimum (which we know to be a vertex of $\mathcal{H}_{n}$, then $\tilde
{f}\left(  w^{\ast}\right)  =f\left(  x^{\ast}\right)  $).
\end{theorem}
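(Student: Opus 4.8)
The plan is to mirror the argument of Theorem \ref{TEO1}, establishing the two inequalities $\tilde{f}(w^{\ast}) \le f(x^{\ast})$ and $f(x^{\ast}) \le \tilde{f}(w^{\ast})$ separately. Everything hinges on the single identity that $\tilde{f}$ agrees with $f$ along the image of $\phi$: for every $x \in \mathcal{H}_{n}$,
\[
\tilde{f}\!\left(\phi(x)\right) = \tilde{c}^{T}\phi(x) = c^{T}T_{n}\,\beta(\alpha(x)) = c^{T}T_{n}E_{n}\,\alpha(x) = c^{T}\alpha(x) = f(x),
\]
where the fourth equality uses the relation $T_{n}E_{n}=I$ recorded above and the last uses $f = c^{T}\alpha$. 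I would prove this identity first, since both inequalities reduce to it.

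For the easy inequality, Lemma \ref{LEMA3} supplies a point $\phi(x^{\ast}) \in \mathcal{C}_{n}$. Since $w^{\ast}$ minimizes $\tilde{f}$ over $\mathcal{C}_{n}$, the displayed identity gives $\tilde{f}(w^{\ast}) \le \tilde{f}(\phi(x^{\ast})) = f(x^{\ast})$. This step is purely formal and carries over verbatim from the case $n=3$.

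The hard inequality requires exhibiting a point $y \in \mathcal{H}_{n}$ with $\phi(y) = w^{\ast}$; granting this, the displayed identity yields $\tilde{f}(w^{\ast}) = \tilde{f}(\phi(y)) = f(y) \ge f(x^{\ast})$, because $x^{\ast}$ is the global minimizer of $f$ over $\mathcal{H}_{n}$. Combining the two inequalities then gives $f(x^{\ast}) = \tilde{f}(w^{\ast})$. To produce such a $y$, I would use that $\tilde{f}$ is linear and $\mathcal{C}_{n}$ is a bounded polytope (cut out by the box constraints, the per-triplet convexity constraints defining each $\mathcal{C}_{3}^{(i,j,k)}$, and the consistency constraints (C1)--(C3)); hence the minimum is attained at an extreme point and I may take $w^{\ast}$ to be a vertex of $\mathcal{C}_{n}$. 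I then define $y_{i} := g_{i,j,k}(w^{\ast})$, which consistency makes independent of the chosen triplet, check that $y \in \mathcal{H}_{n}$, and verify the decisive equality $w^{\ast} = \phi(y)$, i.e. that each block satisfies $u_{ij} = (y_{i} + 2y_{i}y_{j} + y_{j})/2$ and $v_{ij} = (y_{i} - 2y_{i}y_{j} + y_{j})/2$.

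The main obstacle is this last verification. In the case $n=3$ the vertices of $\mathcal{C}_{3}$ are by construction exactly the eight points $\phi(p_{i})$, so $w^{\ast} = \phi(y)$ for a vertex $y$ is immediate. Here, by contrast, $\mathcal{C}_{n}$ is glued together from the local hulls $\mathcal{C}_{3}^{(i,j,k)}$ by the linear consistency constraints, and the earlier counterexample showing $\mathcal{C}_{3} \nsubseteq \phi(\mathcal{H}_{3})$ warns that membership of each block $w_{i,j,k}$ in $\mathcal{C}_{3}^{(i,j,k)}$ does \emph{not} by itself force $w \in \phi(\mathcal{H}_{n})$. The real work is therefore to show that extremality, interacting with the consistency constraints, collapses the relaxation gap so that the off-diagonal blocks realize genuine products $y_{i}y_{j}$ of the recovered coordinates; I expect this to be the delicate and decisive part of the argument.
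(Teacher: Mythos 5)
Your scaffolding coincides exactly with the paper's: the identity $\tilde{f}\left(\phi\left(x\right)\right)=c^{T}T_{n}E_{n}\alpha\left(x\right)=c^{T}\alpha\left(x\right)=f\left(x\right)$, and the inequality $\tilde{f}\left(w^{\ast}\right)\leq f\left(x^{\ast}\right)$ via Lemma \ref{LEMA3}, are precisely steps $\left(\ref{EQ104}\right)$--$\left(\ref{EQ105}\right)$ of the paper's argument. But your proposal stops at exactly the load-bearing point, and you say so yourself: you never actually prove that the minimizer $w^{\ast}$ has a preimage $y\in\mathcal{H}_{n}$ under $\phi$. This is a genuine gap, not a deferred detail, because it is the entire nontrivial content of the theorem: as your own citation of the counterexample $\mathcal{C}_{3}\nsubseteq\phi\left(\mathcal{H}_{3}\right)$ shows, blockwise membership $w_{i,j,k}\in\mathcal{C}_{3}^{\left(i,j,k\right)}$ plus consistency does not place $w^{\ast}$ in $\phi\left(\mathcal{H}_{n}\right)$, so only extremality can close the gap. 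Moreover, your concrete plan --- take $w^{\ast}$ a vertex of $\mathcal{C}_{n}$, set $y_{i}:=g_{i,j,k}\left(w^{\ast}\right)$, verify $u_{ij}=\left(y_{i}+2y_{i}y_{j}+y_{j}\right)/2$ and $v_{ij}=\left(y_{i}-2y_{i}y_{j}+y_{j}\right)/2$ --- tacitly needs the additional unargued fact that a vertex of the glued polytope $\mathcal{C}_{n}$ restricts, block by block, to a \emph{vertex} of each local hull $\mathcal{C}_{3}^{\left(i,j,k\right)}$; a vertex of an intersection of polytopes need not project to vertices of the individual constraint sets, so this step itself requires proof before the verification can even begin.

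For comparison: the paper's main-text proof of this theorem simply \emph{asserts} ``there exists a vertex $y$ in $\mathcal{H}_{n}$ such that $\phi\left(y\right)=w^{\ast}$'' with no justification, and relegates the substance to the appendix. There the problem is rewritten in the secondary variables as \textbf{(LP}$_{n}^{\prime}$\textbf{)} with box constraints $\lambda\in\left[0,1\right]^{8N}$, and Lemma \ref{LEMA10} builds the vertex correspondence in the lifted $\lambda$-space: for binary $\lambda$ each block $\lambda^{\left(i,j,k\right)}$ is a standard basis vector, which forces $w_{i,j,k}$ to be a vertex of $\mathcal{C}_{3}^{\left(i,j,k\right)}$ with a vertex preimage $p_{i,j,k}\in\mathcal{H}_{3}^{\left(i,j,k\right)}$, and the consistency constraints glue these local preimages into a single $p\in V$ with $\phi\left(p\right)=w$; the appendix theorem then reruns your two-inequality argument on vertices. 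So your instinct about where the difficulty sits is exactly right, and your plan is essentially the appendix's strategy transplanted into primary variables --- but as written it is a program, not a proof. To complete it you would either have to carry out the delicate extremality analysis you flag (showing vertices of $\mathcal{C}_{n}$ lie in $\phi\left(V\right)$), or follow the paper into the lifted $\left(\lambda,w\right)$-space, where extreme points are analyzed through the $\lambda$-blocks rather than through the $w$-blocks directly.
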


\begin{proof}
From Lemma \ref{LEMA3}, we know that $\phi\left(  \mathcal{H}_{n}\right)
\subseteq\mathcal{C}_{n}$, so there is a point $w\in\mathcal{C}_{n}$ such that
$\phi\left(  x^{\ast}\right)  =w$. The minimum of $\tilde{f}$ over
$\mathcal{C}_{n}$ is attained at $w^{\ast}\in\mathcal{C}_{n}$, so that
\begin{equation}
\tilde{f}\left(  w^{\ast}\right)  \leq\tilde{f}\left(  w\right)
\text{.}\label{EQ104}%
\end{equation}
The connection between $f$ and $\tilde{f}$ yields:%
\begin{gather}
\tilde{f}\left(  \phi\left(  x^{\ast}\right)  \right)  =\tilde{c}^{T}%
\phi\left(  x^{\ast}\right)  =\left(  c^{T}T_{n}\right)  \left(  E_{n}%
\alpha\left(  x^{\ast}\right)  \right)  \label{EQ105}\\
=c^{T}\alpha\left(  x^{\ast}\right)  =f\left(  x^{\ast}\right)  \text{.}%
\nonumber
\end{gather}
\newline According to $\left(  \ref{EQ104}\right)  $ and $\left(
\ref{EQ105}\right)  $ it follows that%
\[
\tilde{f}\left(  w^{\ast}\right)  \leq\tilde{f}\left(  w\right)  =f\left(
x^{\ast}\right)  \text{.}%
\]
On the other hand, we know that there exists a vertex $y$ in $\mathcal{H}_{n}$
such that $\phi\left(  y\right)  =w^{\ast}$, so%
\begin{equation}
f\left(  y\right)  =c^{T}\alpha\left(  y\right)  =\left(  c^{T}T_{n}\right)
\left(  E_{n}\alpha\left(  y\right)  \right)  =\tilde{c}^{T}\phi\left(
y\right)  =\tilde{f}\left(  \phi\left(  y\right)  \right)  \text{.}%
\label{EQ106}%
\end{equation}
\newline Since the minimum of $f$ over $\mathcal{H}_{n}$ is attained at
$x^{\ast}\in\mathcal{H}_{n}$, and accounting for $\left(  \ref{EQ106}\right)
$, we have that%
\[
f\left(  y\right)  =\tilde{f}\left(  \phi\left(  y\right)  \right)  =f\left(
w^{\ast}\right)  \geq f\left(  x^{\ast}\right)  \text{.}%
\]
Henceforth, $f\left(  x^{\ast}\right)  =\tilde{f}\left(  w^{\ast}\right)  $.
\end{proof}

As for the simple case $n=3$, in the general case, the condition $\phi\left(
\mathcal{H}_{n}\right)  \subseteq\mathcal{C}_{n}$ can be relaxed. In the
appendix, another proof is made where this relaxation is performed and the
correspondence between the vertices of $\mathcal{H}_{n}$ and the vertices of
$\mathcal{C}_{n}$ is simply considered. This requires expressing the problem
in secondary variables along with box constraints.

With the ideas presented above, the optimization problem (\textbf{P}%
$_{n}^{\prime}$) is transformed into the following linear optimization problem:

\begin{description}
\item[\textbf{(LP}$_{n}$\textbf{):}] $\left\{
\begin{array}
[c]{l}%
\min\tilde{c}^{T}w\\
\text{such that for }1\leq i<j<k\leq n\text{:}\\
\left\{
\begin{array}
[c]{c}%
B\mathbf{\lambda}^{\left(  i,j,k\right)  }-w_{i,j,k}=0\text{ }\\
u^{T}\mathbf{\lambda}^{\left(  i,j,k\right)  }\mathbf{=}1\\
w^{\left(  i,j,k\right)  }\geq0\text{, }\mathbf{\lambda}^{\left(
i,j,k\right)  }\geq0
\end{array}
\right.  \\
\text{and for }1\leq j<k\leq n\text{:}\\
\left\{
\begin{array}
[c]{c}%
r^{T}\left(  w_{123}-w_{1jk}\right)  =0\text{ with }\left(  j,k\right)
\neq\left(  2,3\right)  \\
r^{T}\left(  w_{213}-w_{2jk}\right)  =0\text{ with }\left(  j,k\right)
\neq\left(  1,3\right)  \\
r^{T}\left(  w_{i,1,2}-w_{i,j,k}\right)  =0\text{ with }\left(  j,k\right)
\neq(1,2)\text{, }i\geq3
\end{array}
\right.
\end{array}
\right.  $
\end{description}

where $\mathbf{\lambda}^{\left(  i,j,k\right)  }\in\mathbb{R}^{8}$, $B=\left(
\phi\left(  p_{1}\right)  ,\ldots,\phi\left(  p_{8}\right)  \right)  $ is the
matrix in $\left(  \ref{EQ13}\right)  $ , $r^{T}=\frac{1}{2}\left(
1,1,1,1,-1,-1\right)  $ and $u$ is an all-ones vector of appropriate dimension.

Calling $N=\binom{n}{3}$, $N_{1}=\binom{n}{2}$ , and $N_{2}=n\left(
\binom{n-1}{2}-1\right)  $, the problem \textbf{(LP}$_{n}$\textbf{)} can be
written in matrix form:%
\begin{equation}
\text{\textbf{(LP}}_{n}\text{\textbf{)}}\mathbf{:}\left\{
\begin{array}
[c]{l}%
\min\tilde{c}^{T}w\\
\text{such that :}\\
\left(
\begin{array}
[c]{cc}%
A_{11} & A_{12}\\
A_{21} & A_{22}\\
A_{31} & A_{32}%
\end{array}
\right)  \left(
\begin{array}
[c]{c}%
\lambda\\
w
\end{array}
\right)  =\left(
\begin{array}
[c]{c}%
0\\
0\\
u
\end{array}
\right) \\
w\geq0\text{, }\lambda\geq0
\end{array}
\right.  \label{EQ18}%
\end{equation}
where $A_{11}\in\mathbb{R}^{6N\times8N}$, $A_{12}\in\mathbb{R}^{6N\times
2N_{1}}$, $A_{22}\in\mathbb{R}^{N_{2}\times2N_{1}}$, $A_{31}\in\mathbb{R}%
^{N\times8N}$, $u$ is an all-ones vector of dimension $N$, and $A_{21}$ and
$A_{32}$ are zero matrices of dimensions $N_{2}\times8N$ and $N\times2N_{1}$
respectively. The primary variables $w$ and the secondary variables have
appropriate dimensions. The equivalence of problems (\textbf{P}$_{n}$),
(\textbf{P'}$_{n}$), and (\textbf{LP}$_{n}$) is described in Figure \ref{FIG5},
where
\[
\tilde{A}_{11}=\left(
\begin{array}
[c]{c}%
A_{11}\\
A_{21}\\
A_{31}%
\end{array}
\right)  \text{, }\tilde{A}_{22}=\left(
\begin{array}
[c]{c}%
A_{12}\\
A_{22}\\
A_{32}%
\end{array}
\right)  \text{, }\tilde{b}=\left(
\begin{array}
[c]{c}%
0\\
0\\
u
\end{array}
\right).
\]

\begin{figure*}
[hptb]
\begin{center}
\tikzset{every picture/.style={line width=0.75pt}} 

\begin{tikzpicture}[x=0.75pt,y=0.75pt,yscale=-1,xscale=1]

\draw   (11.55,36.67) .. controls (11.55,26.06) and (20.15,17.47) .. (30.75,17.47) -- (277.92,17.47) .. controls (288.52,17.47) and (297.12,26.06) .. (297.12,36.67) -- (297.12,94.27) .. controls (297.12,104.87) and (288.52,113.47) .. (277.92,113.47) -- (30.75,113.47) .. controls (20.15,113.47) and (11.55,104.87) .. (11.55,94.27) -- cycle ;

\draw   (298,59.74) -- (315.5,39.74) -- (315.5,49.74) -- (350.5,49.74) -- (350.5,39.74) -- (368,59.74) -- (350.5,79.74) -- (350.5,69.74) -- (315.5,69.74) -- (315.5,79.74) -- cycle ;
\draw   (165,176.3) .. controls (165,160.13) and (178.11,147.01) .. (194.29,147.01) -- (428.44,147.01) .. controls (444.62,147.01) and (457.73,160.13) .. (457.73,176.3) -- (457.73,264.18) .. controls (457.73,280.35) and (444.62,293.47) .. (428.44,293.47) -- (194.29,293.47) .. controls (178.11,293.47) and (165,280.35) .. (165,264.18) -- cycle ;

\draw   (378,27.5) .. controls (378,15.64) and (387.62,6.01) .. (399.49,6.01) -- (543.24,6.01) .. controls (555.11,6.01) and (564.73,15.64) .. (564.73,27.5) -- (564.73,91.98) .. controls (564.73,103.85) and (555.11,113.47) .. (543.24,113.47) -- (399.49,113.47) .. controls (387.62,113.47) and (378,103.85) .. (378,91.98) -- cycle ;

\draw   (110.58,120.18) -- (137.04,117.69) -- (130.17,124.95) -- (155.58,149.01) -- (162.46,141.75) -- (161.42,168.3) -- (134.96,170.8) -- (141.83,163.53) -- (116.42,139.47) -- (109.54,146.73) -- cycle ;
\draw   (511.42,120.18) -- (484.96,117.69) -- (491.83,124.95) -- (466.42,149.01) -- (459.54,141.75) -- (460.58,168.3) -- (487.04,170.8) -- (480.17,163.53) -- (505.58,139.47) -- (512.46,146.73) -- cycle ;

\draw (18.83,34.47) node [anchor=north west][inner sep=0.75pt]   [align=left] {\begin{tabular}
[c]{l}%
$\text{Problem }\left(  P_{n}\right)  \text{:}$\\
$\text{minimize }f\left(  x\right)  =x^{T} Qx+b^{T} x=c^{T}\alpha\left(  x\right)  $\\
$\text{subject to }x\in\mathcal{H}_{n}$%
\end{tabular}};
\draw (394.87,24.74) node [anchor=north west][inner sep=0.75pt]   [align=left] {\begin{tabular}
[c]{l}%
$\text{Problem }\left(  P_{n}^{\prime}\right)  \text{:}$\\
$\text{minimize }\tilde{f}\left(  w\right)  =\tilde{c}^{T}w$\\
$\text{subject to }w\in\mathcal{C}_{n}$%
\end{tabular}};
\draw (201.37,164.74) node [anchor=north west][inner sep=0.75pt]   [align=left]
{\begin{tabular}
[c]{l}%
$\text{Problem }\left(  LP_{n}\right)  \text{:}$\\
$\text{minimize }\tilde{f}\left(  w\right)  =\tilde{c}^{T}w$\\
$\text{subject to }\left(
\begin{array}
[c]{cc}%
\tilde{A}_{11} & \tilde{A}_{12}%
\end{array}
\right)  \left(
\begin{array}
[c]{c}%
\lambda\\
w
\end{array}
\right)  =\tilde{b}$\\
and $w\geq0$, $\lambda\geq0$%
\end{tabular}};

\end{tikzpicture}
\caption{Equivalence of problems (\textbf{P}$_{n}$),
(\textbf{P'}$_{n}$) and (\textbf{LP}$_{n}$)}%
\label{FIG5}%
\end{center}
\end{figure*}
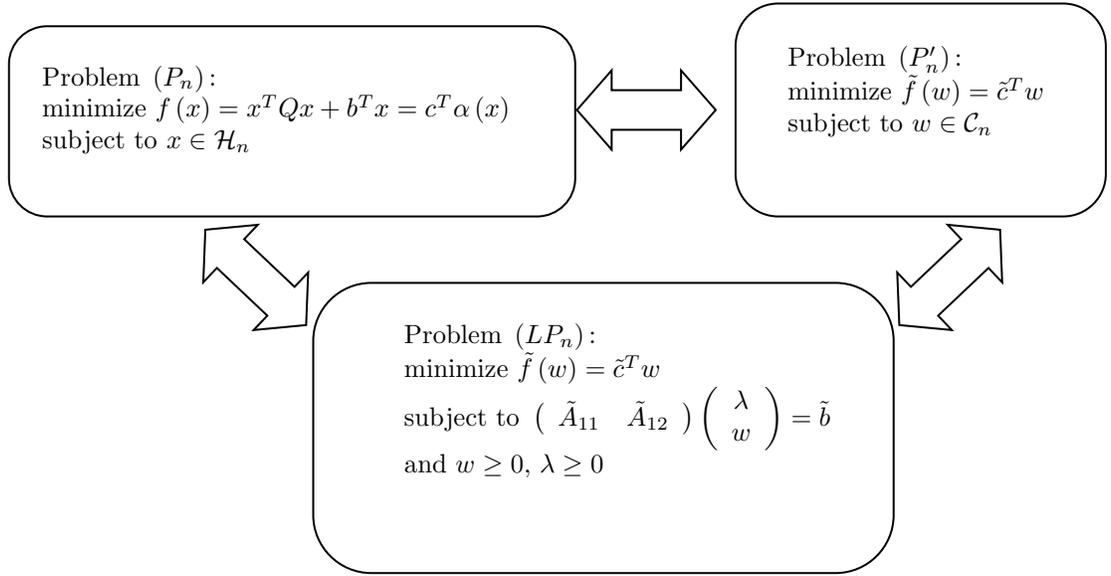

\begin{example}
\label{EXAMPLE1}Let $Q\in\mathbb{Z}^{4\times4}$ be the symmetric matrix:%
\[
Q=\left(
\begin{array}
[c]{cccc}%
0 & -30 & 6 & -22\\
-30 & 0 & 15 & -2\\
6 & 15 & 0 & -5\\
-22 & -2 & -5 & 0
\end{array}
\right),
\]
and $b\in\mathbb{Z}^{4}$ the vector%
\[
b=\left(
\begin{array}
[c]{c}%
-8\\
-22\\
0\\
-32
\end{array}
\right).
\]

The consistency constraints were given in Example \ref{EXAMPLE2}. The optimal
value of $f$ is $-170$ and
\begin{align*}
\lambda^{\ast}  &  =e_{7}+e_{16}+e_{22}+e_{30},\\
w^{\ast}  &  =\left(  2,\frac{1}{2},2,\frac{1}{2},2,\frac{1}{2},0,\frac{1}%
{2},0,\frac{1}{2},0,\frac{1}{2}\right)  ^{T},%
\end{align*}
where $e_{k}\in\mathbb{R}^{32}$ is the k-th vector of the standard basis (with
an entry '1' at the position $k$ and '0' for the rest of positions).

The optimal point $x^{\ast}$ can be recovered from $w$ by applying $E_{3}$ to
$w_{ijk}$. To illustrate this point we compute $x^{\ast}$ for the previous example.

\end{example}
\begin{example}
From example \ref{EXAMPLE1}$:$
\begin{align*}
w_{123}^{\ast}  &  =\left(  2,\frac{1}{2},\frac{1}{2},0,\frac{1}{2},\frac
{1}{2}\right)  ^{T},\\
w_{234}^{\ast}  &  =\left(  \frac{1}{2},2,\frac{1}{2},\frac{1}{2},0,\frac
{1}{2}\right)  ^{T}.%
\end{align*}
We can recover the point $x^{\ast}$ as follows:%
\[
E_{3}^{-1}w_{123}^{\ast}=\left(  1,1,0,1,0,0\right)  ^{T}=\left(  x_{1}%
,x_{1}x_{2},x_{1}x_{3},x_{2},x_{2}x_{3},x_{3}\right)  ^{T},%
\]
and this means $x_{1}^{\ast}=x_{2}^{\ast}=1$, $x_{3}^{\ast}=0$. \ We proceed
in a similar way for $w_{234}^{\ast}$:%
\[
E_{3}^{-1}w_{234}^{\ast}=\left(  1,0,1,0,0,1\right)  ^{T}=\left(  x_{2}%
,x_{2}x_{3},x_{2}x_{4},x_{3},x_{3}x_{4},x_{4}\right)  ^{T},%
\]
which implies that $x_{2}^{\ast}=1$, $x_{3}^{\ast}=0$, $x_{4}^{\ast}=1$. The
optimal point is $x^{\ast}=\left(  1,1,0,1\right)  $.
\end{example}

\section{Computational Complexity}

The problem (\textbf{LP}$_{n}$) requires an amount of memory given by the
dimension of $A$ in $\left(  \ref{EQ18}\right)  $, that is $\left(
7N+N_{2}\right)  \left(  8N+2N_{1}\right)  $. Henceforth, the space complexity
is of order $O\left(  n^{6}\right)  $. Since both $B$ (involved in the
definition of the convex-hull $\mathcal{C}_{3}^{\left(  i,j,k\right)  }$) and
$E_{3}$ (corresponding to the constraint $\sum_{l=1}^{8}\mathbf{\lambda}%
_{l}^{\left(  i,j,k\right)  }=1$ in the convex-hull $\mathcal{C}_{3}^{\left(
i,j,k\right)  }$) are constant, the time complexity is also $O\left(
n^{6}\right)  $ (assuming that the storing of an entry in a matrix is
$O\left(  1\right)  $). Note that the objective function $\tilde{f}\left(
w\right)  =\tilde{c}^{T}w$ requires $2N_{1}$ multiplications and $2N_{1}-1$
sums, resulting in a time complexity of order $O\left(  n^{2}\right)  $.

Once generated the matrix $A$ and the vector $\tilde{c}$, the problem
(\textbf{LP}$_{n}$) can be solved in polynomial-time via interior-point
methods (the reader is referred to \cite{Gonzaga1995} for more details): (i)
ellipsoid method due to Khachiyan ($O\left(  m^{4}L\right)  $ where $L$
denotes the number of bits in a binary representation of $A$ and $m$ is the
space dimension, (ii) projective algorithm of Karmarkar ($O\left(  m^{\frac
{7}{2}}L\right)  $), (iii) Gonzaga algorithm and Vaidya's 87 algorithm, with
the complexity of $O\left(  m^{3}L\right)  $ operations (these two algorithms
were simultaneously developed in 1987), or (iv) Vaidya's 89 algorithm
($O\left(  m^{\frac{5}{2}}\right)  $), among others. Henceforth, the problem
(P$_{n}$) is solved in polynomial time. The number of variables in (PL$_{n}$)
is $8N+2N_{1}$ so that the problem can be solved with Vaidya's 89 algorithm in
$O\left(  n^{\frac{15}{2}}\right)  $.

\section{Implementation Aspects}

This section addresses the implementation aspects of the equivalent linear
program \textbf{(LP}$_{n}$\textbf{)}. We will develop the different algorithms
both for the transformation of the objective function in the form $\tilde
{f}\left(  w\right)  =\tilde{c}^{T}w$ through the linear transformation
$T_{n}$ and for the equality constraints (convexity constraints and
consistency constraints). The convexity and consistency constraints allow us
to define the feasible set $\mathcal{C}_{n}$ as follows:%
\begin{gather*}
\mathcal{C}_{n}=\bigg \{w\in\left[  0,2\right]  ^{n}\times\left[  0,\frac{1}%
{2}\right]  ^{n}:\left(
\begin{array}
[c]{cc}%
\tilde{A}_{11} & \tilde{A}_{12}%
\end{array}
\right)  \left(
\begin{array}
[c]{c}%
\lambda\\
w
\end{array}
\right)  =\tilde{b},\\
\lambda\geq0 \bigg \},
\end{gather*}
where%
\[
\tilde{A}_{11}=\left(
\begin{array}
[c]{c}%
A_{11}\\
A_{21}\\
A_{31}%
\end{array}
\right)  \text{, }\tilde{A}_{12}=\left(
\begin{array}
[c]{c}%
A_{12}\\
A_{22}\\
A_{32}%
\end{array}
\right)  \text{, }\tilde{b}=\left(
\begin{array}
[c]{c}%
0\\
0\\
u
\end{array}
\right).
\]
We will start with the indexing of the primary variables since this point is
key for a correct implementation of the method.

\subsection{Definition of Primary Variables}

As indicated above, the primary variables are stored in a vector
$w\in\mathbb{R}^{n\left(  n-1\right)  }$ as a stacking of the $\binom{n}{2}%
$-dimensional vectors $u$, and $v$:%
\[
w=\left(
\begin{array}
[c]{c}%
u\\
v
\end{array}
\right).
\]

In turn, each element in $u$ and $v$ is defined by a pair of indices
$\left(  i,j\right)  $ such that $1\leq i<j\leq n$. The element $u_{ij}$ (idem
for $v_{ij}$), is stored in $u$ (in $v$) in the position $\left(  i-1\right)
\frac{2n-i}{2}+\left(  j-i\right)  $ when $i<j$ (in the position
$N_{1}+\left(  i-1\right)  \frac{2n-i}{2}+\left(  j-i\right)  $ for $v_{ij}$).
This can be analyzed simply according to the scheme below where
different subvectors are represented according to the first index:%

\begin{gather*}[h]
\underset{\text{Length=}\sum_{k=1}^{i-1}\left(  n-k\right)  }{\underbrace
{\left.
\begin{array}
[c]{c}%
u_{1,2}\\
u_{1,3}\\
\vdots\\
u_{1,n}%
\end{array}
\right\}  \text{Length}=n-1,\left.
\begin{array}
[c]{c}%
u_{2,3}\\
u_{2,4}\\
\vdots\\
u_{2,n}%
\end{array}
\right\}  \text{Length}=n-2,\cdots}},\\
\left.
\begin{array}
[c]{c}%
u_{i,i+1}\\
u_{i,i+2}\\
\vdots\\
u_{i,n}%
\end{array}
\right\}  \text{Length}=n-i,\cdots,\left.
\begin{array}
[c]{c}%
u_{i+1,i+2}\\
u_{i+2,i+3}\\
\vdots\\
u_{i+1,n}%
\end{array}
\right\}
\begin{array}
[c]{c}%
\text{Length}=\\
n-\left(  i+1\right)
\end{array}
\\
\cdots,\left.
\begin{array}
[c]{c}%
u_{n-2,n-1}\\
u_{n-2,n}%
\end{array}
\right\}  \text{Length}=2,\left.
\begin{array}
[c]{c}%
u_{n-1,n}%
\end{array}
\right\}  \text{Length}=1
\end{gather*}

The i-th subvector, that is $\left(  u_{i,i+1},u_{i,i+2},\cdots,u_{i,n}%
\right)  $ has above $i-1$ subvectors with lengths from $n-1$ to $n-\left(
i-1\right)  $. Therefore, the number of entries above the i-th subvector is%
\[
\sum_{k=1}^{i-1}\left(  n-k\right)  =n\left(  i-1\right)  -\frac{i\left(
i-1\right)  }{2}=\left(  i-1\right)  \frac{2n-i}{2}.%
\]

Finally, within the subvector $i$, the first element is $u_{i,i+1}$, so the
input $u_{i,j}$ is in position $\left(  j-i\right)  $ within this subvector:%
\[
\underset{\left(  \left.  \left.
\begin{array}
[c]{c}%
u_{i,j+1}\\
\vdots\\
u_{i,n}%
\end{array}
\right.  \right.  \right)  }{\left.  \left(
\begin{array}
[c]{c}%
u_{i,i+1}\\
u_{i,i+2}\\
\vdots\\
u_{i,j}%
\end{array}
\right)  \right\}  }\left(  j-i\right)  \text{ positions}%
\]
Similarly we can position ourselves in vector $v$: when $i\geq j$, the
position of the element $u_{i,j}$ can be calculated immediately by swapping
the roles of $i$ and $j$. Based on the previous information, we define the
positioning index function:%
\[
\iota\left(  i,j\right)  =\left\{
\begin{array}
[c]{cc}%
\left(  i-1\right)  \frac{2n-i}{2}+\left(  j-i\right)   & \text{, when }i<j\\
\left(  j-1\right)  \frac{\left(  2n-j\right)  }{2}+\left(  i-j\right)   &
\text{, when }i\geq j
\end{array}
\right.
\]
Throughout this presentation, we will exemplify the main ideas and algorithms
for a problem of dimension $n=4$.

\begin{example}
For $n=4$, in Figures \ref{TAB:Primaryvariablesu} and
\ref{TAB:Primaryvariablesv} the correspondence between primary variables and
the indices generated by the $\iota$ function has been represented.

\begin{figure*}[ht]
\begin{subfigure}{0.49\linewidth}
\centering
$\overset{\text{Primary variables }u}{\overbrace{%
\begin{tabular}
[c]{cccccc}%
$\iota\left(  1,2\right)  $ & $\iota\left(  1,3\right)  $ & $\iota\left(
1,4\right)  $ & $\iota\left(  2,3\right)  $ & $\iota\left(  2,4\right)  $ &
$\iota\left(  3,4\right)  $\\
$1$ & $2$ & $3$ & $4$ & $5$ & $6$\\
$\downarrow$ & $\downarrow$ & $\downarrow$ & $\downarrow$ & $\downarrow$ &
$\downarrow$\\\hline
\multicolumn{1}{|c}{$u_{12}$} & \multicolumn{1}{|c}{$u_{13}$} &
\multicolumn{1}{|c}{$u_{14}$} & \multicolumn{1}{|c}{$u_{23}$} &
\multicolumn{1}{|c}{$u_{24}$} & \multicolumn{1}{|c|}{$u_{34}$}\\\hline
\end{tabular}
}}$ 
\caption{Primary variables $u$ in the vector $w$ 
}
\label{TAB:Primaryvariablesu}
\end{subfigure}\hfill

\begin{subfigure}{0.49\linewidth}
\centering
$\overset{\text{Primary variables }v}{\overbrace{%
\begin{tabular}
[c]{cccccc}%
$N_{1}+\iota\left(  1,2\right)  $ & $N_{1}+\iota\left(  1,3\right)  $ &
$N_{1}+\iota\left(  1,4\right)  $ & $N_{1}+\iota\left(  2,3\right)  $ &
$N_{1}+\iota\left(  2,4\right)  $ & $N_{1}+\iota\left(  3,4\right)  $\\
$6+1$ & $6+2$ & $6+3$ & $6+4$ & $6+5$ & $6+6$\\
$\downarrow$ & $\downarrow$ & $\downarrow$ & $\downarrow$ & $\downarrow$ &
$\downarrow$\\\hline
\multicolumn{1}{|c}{$v_{12}$} & \multicolumn{1}{|c}{$v_{13}$} &
\multicolumn{1}{|c}{$v_{14}$} & \multicolumn{1}{|c}{$v_{23}$} &
\multicolumn{1}{|c}{$v_{24}$} & \multicolumn{1}{|c|}{$v_{34}$}\\\hline
\end{tabular}
}}$ 
\caption{Primary variables $v$ in the vector $w$ 
}
\label{TAB:Primaryvariablesv}
\end{subfigure}\hfill

\caption{Positions of the primary variables in the vector $w$ according to the index $\iota$.}
\label{}
\end{figure*}

\end{example}

\subsection{Transformation of Objective Function}

In this subsection we will build the linear transformation $T_{n}$ that allows
us to transform the objective function $f:\mathcal{H}_{n}\rightarrow
\mathbb{R}$ into $\tilde{f}:\mathcal{C}_{n}\rightarrow\mathbb{R}$.
Specifically, $\tilde{f}\left(  w\right)  =\tilde{c}^{T}w$ where  $\tilde
{c}^{T}=c^{T}T_{n}$. The matrix $T_{n}$ is obtained by exploiting the
following relationships:%
\begin{align*}
x_{i} &  =\frac{u_{ij}+v_{ij}+u_{ik}+v_{ik}-u_{jk}-v_{jk}}{2},\\
x_{i}x_{j} &  =\frac{u_{ij}-v_{ij}}{2},%
\end{align*}
such that $T_{n}E_{n}\alpha\left(  x\right)  =\alpha\left(  x\right)  $,
remember that $w=\phi\left(  x\right)  =E_{n}\alpha\left(  x\right)  $. We
will illustrate the construction of $T_{n}$ with an example:

\begin{example}
For $n=4$ we have that $N_{1}=\binom{4}{2}=6$ ($=$ number of primary variables
$u=$ number of primary variables $v$). In Tables \ref{tab:MatrixTPrimaryu} and \ref{tab:MatrixTPrimaryv}, the $T_{4}$ matrix is represented indicating the correspondence of the rows with the single variables
$x_{1}$, $x_{2}$, $x_{3}$, and $x_{4}$, as well as the cross products
$x_{1}x_{2}$, $x_{1}x_{3}$, $x_{1}x_{4}$, $x_{2}x_{3}$, $x_{2}x_{4}$ and
$x_{3}x_{4}$. The vector $T_{4}w$ is a linear combination of the column
vectors in $T_{4}$ through the primary variables $u$ and $v$ (this has been
highlighted with each column indicating which primary variable it is
associated with). Furthermore, the primary variables that correspond to each other
are found within the vector $w$ according to the index $\iota\left(
i,j\right)  $. In reality, the matrix $T_{4}$ is used to express the variables
$x_{i}$ as well as the cross products $x_{i}x_{j}$ with $j>i$ as a function of
the primary variables. 
For clarity we will divide the matrix $T_{4}$ into a part $T_{4}^{u}$
corresponding to the primary variables $u$ and another $T_{4}^{v}$ to the
primary variables $v$ so that $T_{4}=\left(
\begin{array}
[c]{cc}%
T_{4}^{u} & T_{4}^{v}%
\end{array}
\right)  $. Tables  \ref{tab:MatrixTPrimaryu} and \ref{tab:MatrixTPrimaryv}
represent $T_{4}^{u}$ and $T_{4}^{v}$ respectively.

\begin{table}[htbp]
   \centering
   \caption{Part of the matrix $T_{4}$ corresponding to the primary variables $u$.}
   \label{tab:MatrixTPrimaryu}
   \begin{tabular}
[c]{ccccccc}
& $\iota\left(  1,2\right)  $ & $\iota\left(  1,3\right)  $ & $\iota\left(
1,4\right)  $ & $\iota\left(  2,3\right)  $ & $\iota\left(  2,4\right)  $ &
$\iota\left(  3,4\right)  $\\
& $1$ & $2$ & $3$ & $4$ & $5$ & $6$\\
& $\downarrow$ & $\downarrow$ & $\downarrow$ & $\downarrow$ & $\downarrow$ &
$\downarrow$\\\cline{2-7}%
$x_{1}\rightarrow$ & \multicolumn{1}{|c}{$1$} & \multicolumn{1}{|c}{$1$} &
\multicolumn{1}{|c}{$0$} & \multicolumn{1}{|c}{$-1$} & \multicolumn{1}{|c}{$0$%
} & \multicolumn{1}{|c|}{$0$}\\\cline{2-7}%
$x_{1}x_{2}\rightarrow$ & \multicolumn{1}{|c}{$1$} & \multicolumn{1}{|c}{$0$}
& \multicolumn{1}{|c}{$0$} & \multicolumn{1}{|c}{$0$} &
\multicolumn{1}{|c}{$0$} & \multicolumn{1}{|c|}{$0$}\\\cline{2-7}%
$x_{1}x_{3}\rightarrow$ & \multicolumn{1}{|c}{$0$} & \multicolumn{1}{|c}{$1$}
& \multicolumn{1}{|c}{$0$} & \multicolumn{1}{|c}{$0$} &
\multicolumn{1}{|c}{$0$} & \multicolumn{1}{|c|}{$0$}\\\cline{2-7}%
$x_{1}x_{4}\rightarrow$ & \multicolumn{1}{|c}{$0$} & \multicolumn{1}{|c}{$0$}
& \multicolumn{1}{|c}{$1$} & \multicolumn{1}{|c}{$0$} &
\multicolumn{1}{|c}{$0$} & \multicolumn{1}{|c|}{$0$}\\\cline{2-7}%
$x_{2}\rightarrow$ & \multicolumn{1}{|c}{$1$} & \multicolumn{1}{|c}{$-1$} &
\multicolumn{1}{|c}{$0$} & \multicolumn{1}{|c}{$1$} & \multicolumn{1}{|c}{$0$}
& \multicolumn{1}{|c|}{$0$}\\\cline{2-7}%
$x_{2}x_{3}\rightarrow$ & \multicolumn{1}{|c}{$0$} & \multicolumn{1}{|c}{$0$}
& \multicolumn{1}{|c}{$0$} & \multicolumn{1}{|c}{$1$} &
\multicolumn{1}{|c}{$0$} & \multicolumn{1}{|c|}{$0$}\\\cline{2-7}%
$x_{2}x_{4}\rightarrow$ & \multicolumn{1}{|c}{$0$} & \multicolumn{1}{|c}{$0$}
& \multicolumn{1}{|c}{$0$} & \multicolumn{1}{|c}{$0$} &
\multicolumn{1}{|c}{$1$} & \multicolumn{1}{|c|}{$0$}\\\cline{2-7}%
$x_{3}\rightarrow$ & \multicolumn{1}{|c}{$-1$} & \multicolumn{1}{|c}{$1$} &
\multicolumn{1}{|c}{$0$} & \multicolumn{1}{|c}{$1$} & \multicolumn{1}{|c}{$0$}
& \multicolumn{1}{|c|}{$0$}\\\cline{2-7}%
$x_{3}x_{4}\rightarrow$ & \multicolumn{1}{|c}{$0$} & \multicolumn{1}{|c}{$0$}
& \multicolumn{1}{|c}{$0$} & \multicolumn{1}{|c}{$0$} &
\multicolumn{1}{|c}{$0$} & \multicolumn{1}{|c|}{$1$}\\\cline{2-7}%
$x_{4}\rightarrow$ & \multicolumn{1}{|c}{$-1$} & \multicolumn{1}{|c}{$0$} &
\multicolumn{1}{|c}{$1$} & \multicolumn{1}{|c}{$0$} & \multicolumn{1}{|c}{$1$}
& \multicolumn{1}{|c|}{$0$}\\\cline{2-7}
& $\uparrow$ & $\uparrow$ & $\uparrow$ & $\uparrow$ & $\uparrow$ & $\uparrow
$\\
& $u_{12}$ & $u_{13}$ & $u_{14}$ & $u_{23}$ & $u_{24}$ & $u_{34}$%
\end{tabular}
\end{table}

\begin{table}[htbp]
   \centering
   \caption{Part of the matrix $T_{4}$ corresponding to the primary variables $v$.}
   \label{tab:MatrixTPrimaryv}
   \begin{tabular}
[c]{cccc}
& $N_{1}+\iota\left(  1,2\right)  $ & $N_{1}+\iota\left(  1,3\right)  $ &
$N_{1}+\iota\left(  1,4\right)  $\\
& $6+1$ & $6+2$ & $6+3$\\
& $\downarrow$ & $\downarrow$ & $\downarrow$\\\cline{2-4}%
$x_{1}\rightarrow$ & \multicolumn{1}{|c}{$\frac{1}{2}$} &
\multicolumn{1}{|c}{$\frac{1}{2}$} & \multicolumn{1}{|c|}{$0$}\\\cline{2-4}%
$x_{1}x_{2}\rightarrow$ & \multicolumn{1}{|c}{$-\frac{1}{2}$} &
\multicolumn{1}{|c}{$0$} & \multicolumn{1}{|c|}{$0$}\\\cline{2-4}%
$x_{1}x_{3}\rightarrow$ & \multicolumn{1}{|c}{$0$} &
\multicolumn{1}{|c}{$-\frac{1}{2}$} & \multicolumn{1}{|c|}{$0$}\\\cline{2-4}%
$x_{1}x_{4}\rightarrow$ & \multicolumn{1}{|c}{$0$} & \multicolumn{1}{|c}{$0$}
& \multicolumn{1}{|c|}{$-\frac{1}{2}$}\\\cline{2-4}%
$x_{2}\rightarrow$ & \multicolumn{1}{|c}{$\frac{1}{2}$} &
\multicolumn{1}{|c}{$-\frac{1}{2}$} & \multicolumn{1}{|c|}{$0$}\\\cline{2-4}%
$x_{2}x_{3}\rightarrow$ & \multicolumn{1}{|c}{$0$} & \multicolumn{1}{|c}{$0$}
& \multicolumn{1}{|c|}{$0$}\\\cline{2-4}%
$x_{2}x_{4}\rightarrow$ & \multicolumn{1}{|c}{$0$} & \multicolumn{1}{|c}{$0$}
& \multicolumn{1}{|c|}{$0$}\\\cline{2-4}%
$x_{3}\rightarrow$ & \multicolumn{1}{|c}{$-\frac{1}{2}$} &
\multicolumn{1}{|c}{$\frac{1}{2}$} & \multicolumn{1}{|c|}{$0$}\\\cline{2-4}%
$x_{3}x_{4}\rightarrow$ & \multicolumn{1}{|c}{$0$} & \multicolumn{1}{|c}{$0$}
& \multicolumn{1}{|c|}{$0$}\\\cline{2-4}%
$x_{4}\rightarrow$ & \multicolumn{1}{|c}{$-\frac{1}{2}$} &
\multicolumn{1}{|c}{$0$} & \multicolumn{1}{|c|}{$\frac{1}{2}$}\\\cline{2-4}
& $\uparrow$ & $\uparrow$ & $\uparrow$\\
& $v_{23}$ & $v_{24}$ & $v_{34}$%
\end{tabular}
\end{table}

\addtocounter{table}{-1}

\begin{table}[htbp]
   \centering
   \caption{Part of the matrix $T_{4}$ corresponding to the primary variables $v$ (\emph{Continuation}).}
   \begin{tabular}
[c]{cccc}
& $N_{1}+\iota\left(  2,3\right)  $ & $N_{1}+\iota\left(  2,4\right)  $ &
$N_{1}+\iota\left(  3,4\right)  $\\
& $6+4$ & $6+5$ & $6+6$\\
& $\downarrow$ & $\downarrow$ & $\downarrow$\\\cline{2-4}%
$x_{1}\rightarrow$ & \multicolumn{1}{|c}{$-\frac{1}{2}$} &
\multicolumn{1}{|c}{$0$} & \multicolumn{1}{|c|}{$0$}\\\cline{2-4}%
$x_{1}x_{2}\rightarrow$ & \multicolumn{1}{|c}{$0$} & \multicolumn{1}{|c}{$0$}
& \multicolumn{1}{|c|}{$0$}\\\cline{2-4}%
$x_{1}x_{3}\rightarrow$ & \multicolumn{1}{|c}{$0$} & \multicolumn{1}{|c}{$0$}
& \multicolumn{1}{|c|}{$0$}\\\cline{2-4}%
$x_{1}x_{4}\rightarrow$ & \multicolumn{1}{|c}{$0$} & \multicolumn{1}{|c}{$0$}
& \multicolumn{1}{|c|}{$0$}\\\cline{2-4}%
$x_{2}\rightarrow$ & \multicolumn{1}{|c}{$\frac{1}{2}$} &
\multicolumn{1}{|c}{$0$} & \multicolumn{1}{|c|}{$0$}\\\cline{2-4}%
$x_{2}x_{3}\rightarrow$ & \multicolumn{1}{|c}{$-\frac{1}{2}$} &
\multicolumn{1}{|c}{$0$} & \multicolumn{1}{|c|}{$0$}\\\cline{2-4}%
$x_{2}x_{4}\rightarrow$ & \multicolumn{1}{|c}{$0$} &
\multicolumn{1}{|c}{$-\frac{1}{2}$} & \multicolumn{1}{|c|}{$0$}\\\cline{2-4}%
$x_{3}\rightarrow$ & \multicolumn{1}{|c}{$\frac{1}{2}$} &
\multicolumn{1}{|c}{$0$} & \multicolumn{1}{|c|}{$0$}\\\cline{2-4}%
$x_{3}x_{4}\rightarrow$ & \multicolumn{1}{|c}{$0$} & \multicolumn{1}{|c}{$0$}
& \multicolumn{1}{|c|}{$-\frac{1}{2}$}\\\cline{2-4}%
$x_{4}\rightarrow$ & \multicolumn{1}{|c}{$0$} & \multicolumn{1}{|c}{$\frac
{1}{2}$} & \multicolumn{1}{|c|}{$0$}\\\cline{2-4}
& $\uparrow$ & $\uparrow$ & $\uparrow$\\
& $v_{23}$ & $v_{24}$ & $v_{34}$%
\end{tabular}

\end{table}

When $i=1$, the variable $x_{1}$
is expressed as a function of $u$ and $v$ as%
\[
x_{1}=\frac{u_{12}+u_{13}-u_{23}+v_{12}+v_{13}-v_{23}}{2},%
\]
and cross products as%
\begin{equation}
x_{1}x_{2}=\frac{u_{12}-v_{12}}{2}.\label{EQ132}%
\end{equation}
The variable $x_{1}$ corresponds to row $1$ in $T_{4}$, the primary variables
$u_{12}$, $u_{13}$ and $u_{23}$ are associated with columns $\iota\left(
1,2\right)  =1$, $\iota\left(  1,3\right)  =2$ and $\iota\left(  2,3\right)
=4$, while the primary variables $v_{12}$, $v_{13}$ and $v_{23}$ are
placed in columns $N_{1}+\iota\left(  1,2\right)  =7$, $N_{1}+\iota\left(
1,3\right)  =8$, and $N_{1}+\iota\left(  2,3\right)  =10$. The cross product
$x_{1}x_{2}$ corresponds to row $2$ in $T_{4}$. According to $\left(
\ref{EQ132}\right)  $ we must access columns $\iota\left(  1,2\right)  =1$
(corresponding to $u_{12}$) and $N_{1}+\iota\left(  1,2\right)  =7$
(corresponding to $v_{12}$). Following this process we would finish building
the matrix $T_{4}$ completely.
\end{example}

Algorithm \ref{Alg:MatrixT} calculates $T_{n}$ for an arbitrary $n$ according to these ideas.

\begin{algorithm}
\caption{Transformation of objective function}\label{Alg:MatrixT}
\begin{algorithmic}[3]
\Procedure{Transformed Objective Function}{$c$}\Comment{Computation of $\tilde{c}$ in $\tilde{f}\left(  w\right)  =\tilde{c}^{T}w$}

\State $T \gets 0_{\frac{n\left(  n+1\right)  }{2}\times2N_{1}}$
\State $r \gets 0 $

\For{$i\gets 1, n$}
    \State $S\leftarrow\left\{  1,2\ldots,n\right\}  $
    \State $r\gets r+1$
    \If{i=1} 
        \State $T_{r,8N+\iota\left(  1,2\right)  }\gets\frac{1}{2}$
        \State $T_{r,8N+\iota\left(  1,3\right)  }\gets\frac{1}{2}$
        \State $T_{r,8N+\iota\left(  2,3\right)  }\gets-\frac{1}{2}$
        \State $T_{r,8N+N_{1}+\iota\left(  1,2\right)  }\gets\frac{1}{2}$
        \State $T_{r,8N+N_{1}+\iota\left(  1,3\right)  }\gets\frac{1}{2}$
        \State $T_{r,8N+N_{1}+\iota\left(  2,3\right)  }\gets-\frac{1}{2}$
    \ElsIf{i=2}
        \State $T_{r,8N+\iota\left(  1,2\right)  }\gets\frac{1}{2}$
        \State $T_{r,8N+\iota\left(  2,3\right)  }\gets\frac{1}{2}$
        \State $T_{r,8N+\iota\left(  1,3\right)  }\gets-\frac{1}{2}$
        \State $T_{r,8N+N_{1}+\iota\left(  1,2\right)  }\gets\frac{1}{2}$
        \State $T_{r,8N+N_{1}+\iota\left(  2,3\right)  }\gets\frac{1}{2}$
        \State $T_{r,8N+N_{1}+\iota\left(  1,3\right)  }\gets-\frac{1}{2}$
    \Else 
        \State $T_{r,8N+\iota\left(  1,2\right)  }\gets-\frac{1}{2}$
        \State $T_{r,8N+\iota\left(  1,i\right)  }\gets\frac{1}{2}$
        \State $T_{r,8N+\iota\left(  2,i\right)  }\gets\frac{1}{2}$
        \State $T_{r,8N+N_{1}+\iota\left(  1,2\right)  }\gets-\frac{1}{2}$
        \State $T_{r,8N+N_{1}+\iota\left(  1,i\right)  }\gets\frac{1}{2}$
        \State $T_{r,8N+N_{1}+\iota\left(  2,i\right)  }\gets\frac{1}{2}$

    \EndIf
    
     \State Generate an ordered array $C$ of the elements of $\bar{S}$ taken two by two.
     
     \For{$j\gets i+1, n$}
        \State $r \gets r+1$
        \State $T_{r,8N+\iota\left(  i,j\right)  \gets \frac{1}{2}}$
        \State $T_{r,8N+N_{1}+\iota\left(  i,j\right)  }\gets-\frac{1}{2}$
     \EndFor

\EndFor

\State $\tilde{c}^{T} \gets c^{T}T$

\Return $\tilde{c}$

\EndProcedure
\end{algorithmic}
\end{algorithm}

\subsection{Equality Constraints}

At the beginning of the procedure, we will assume that $\tilde{A}$ is an empty
matrix that we will fill in. This matrix is of dimension $7N+N_{2}%
\times8N+2N_{1}$.

\subsubsection{Convexity Constraints}

For the secondary variables $\mathbf{\lambda}$ we have adopted the following
notation%
\[
\mathbf{\lambda}=\left(
\begin{array}
[c]{c}%
\mathbf{\lambda}^{\left(  1,2,3\right)  }\\
\mathbf{\lambda}^{\left(  1,2,4\right)  }\\
\vdots\\
\mathbf{\lambda}^{\left(  n-2,n-1,n\right)  }%
\end{array}
\right),
\]
where $\mathbf{\lambda}^{\left(  i,j,k\right)  }\in\left[  0,1\right]  ^{8}$
(this notation does not follow the definition of $w_{i,j,k}$, that is, for
each $\left(  i,j,k\right)  $ we have a vector $\mathbf{\lambda}^{\left(
i,j,k\right)  }$  which is exclusive to the convex hull $\mathcal{C}%
_{3}^{\left(  i,j,k\right)  }$). For each triplet $\left(  i,j,k\right)  $ we
generate the convex hull $\mathcal{C}_{3}^{\left(  i,j,k\right)  }$:%
\begin{equation}
\mathcal{C}_{3}^{\left(  i,j,k\right)  }=\left\{  w_{i,j,k}=B\mathbf{\lambda
}^{\left(  i,j,k\right)  }:u^{T}\mathbf{\lambda}^{\left(  i,j,k\right)
}\mathbf{=}1\text{, }\mathbf{\lambda}^{\left(  i,j,k\right)  }\geq0\right\},
\label{EQ129}%
\end{equation}
where $u=\left(  1,1,1,1,1,1,1,1\right)  ^{T}$ and%
\[
w_{i,j,k}=\left(
\begin{array}
[c]{c}%
u_{i,j,k}\\
v_{i,j,k}%
\end{array}
\right),
\]
with $u_{i,j,k}^{\left(  i,j,k\right)  },v^{\left(  i,j,k\right)  }%
\in\mathbb{R}^{3}$. According to $\left(  \ref{EQ129}\right)  $ the convexity
constraints for $\mathcal{C}_{3}^{\left(  i,j,k\right)  }$ are equal
constraints written as%
\begin{equation}
B\mathbf{\lambda}^{\left(  i,j,k\right)  }-w_{i,j,k}=0,\label{EQ130}%
\end{equation}%
\begin{equation}
u^{T}\mathbf{\lambda}^{\left(  i,j,k\right)  }=1,\label{EQ131}%
\end{equation}
along with the natural constraint $\mathbf{\lambda}^{\left(  i,j,k\right)
}\geq0$. Let us start by looking at the implementation of $\left(
\ref{EQ130}\right)  $.

As we see in $\left(  \ref{EQ130}\right)  $ we need the basic block $B=\left(
\phi\left(  p_{1}\right)  ,\ldots,\phi\left(  p_{8}\right)  \right)  $ that is
implemented in Algorithm \ref{Algoritmo4}.

\begin{algorithm}
\caption{Basic block $B$}
\label{Algoritmo4}

\begin{algorithmic}[1]
\Procedure{Basic Block}{}\Comment{The basic block $B$}
\State $B\gets 0_{6\times8}$
\State $i\gets 0$ \Comment{it is used as a column index of $B$}
\For{$x_{1}\gets 0, 1$}
    \For{$x_{2}\gets 0, 1$}
        \For{$x_{3} \gets 0, 1$}
            \State $i\gets i+1$
            \State $B_{1,i}\gets \frac{\left(  x_{1}+x_{2}\right)  ^{2}}{2}$
            \State $B_{2,i}\leftarrow\frac{\left(  x_{1}+x_{3}\right)  ^{2}}{2}$
            \State $B_{3,i}\leftarrow\frac{\left(  x_{2}+x_{3}\right)  ^{2}}{2}$
            \State $B_{4,i}\leftarrow\frac{\left(  x_{1}-x_{2}\right)  ^{2}}{2}$
            \State $B_{5,i}\leftarrow \frac{\left(  x_{1}-x_{3}\right)  ^{2}}{2}$
            \State $B_{6,i}\leftarrow\frac{\left(  x_{2}-x_{3}\right)  ^{2}}{2}$

    \EndFor
    \EndFor
\EndFor
\EndProcedure
\end{algorithmic}
\end{algorithm}

In the previous algorithm, squares appear for simplicity in writing since we
are evaluating binary variables. For simplicity of implementation, each
triplet $\left(  i,j,k\right)  $ is associated with a single index
$r\in\left\{  1,\ldots,N\right\}  $, where we remember that $N=\binom{n}{3}$;
This index $r$ will allow us to traverse rows inside the matrix $\tilde{A}$. We
write the matrix $A_{11}\in\mathbb{R}^{6N\times8N}$ as a stack of submatrices
$A_{11}^{(r)}$ of dimension $\mathbb{R}^{6\times8N}$ each one of them
transforming a vector $\mathbf{\lambda}^{\left(  i,j,k\right)  }$:%
\[
A_{11}=\left(
\begin{array}
[c]{c}%
A_{11}^{(1)}\\
A_{11}^{\left(  2\right)  }\\
\vdots\\
A_{11}^{\left(  N\right)  }%
\end{array}
\right).
\]
These sub-matrices $A_{11}^{\left(  r\right)  }$ are divided according to
$\left(  i,j,k\right)  $:%

\[
A_{11}^{(r)}=\left(  \underset{\text{Part corresponding to the secondary
variables }\mathbf{\lambda}}{\underbrace{%
\begin{array}
[c]{cccc}%
\mathbf{\lambda}^{\left(  1,2,3\right)  } & \mathbf{\lambda}^{\left(
1,2,4\right)  } & \cdots & \mathbf{\lambda}^{\left(  n-2,n-1,n\right)  }\\
\downarrow & \downarrow & \cdots & \downarrow\\
A_{11}^{\left(  r\right)  \left(  1\right)  } & A_{11}^{\left(  r\right)
\left(  2\right)  } & \cdots & A_{11}^{\left(  r\right)  \left(  N\right)  }%
\end{array}
}}\right),
\]
where $A_{11}^{\left(  r\right)  \left(  i\right)  }\in\mathbb{R}^{6\times8}$.
A part of the constraint in $\left(  \ref{EQ130}\right)  $ is written simply
as: $A_{11}^{\left(  r\right)  \left(  r\right)  }\leftarrow B$.

Similarly, the matrix $A_{12}\in\mathbb{R}^{6N\times2N_{1}}$ will be written as%
\[
A_{12}=\left(
\begin{array}
[c]{c}%
A_{12}^{(1)}\\
A_{12}^{\left(  2\right)  }\\
\vdots\\
A_{12}^{\left(  N\right)  }%
\end{array}
\right),
\]
where $A_{12}^{(r)}\in$ $\mathbb{R}^{6\times8}$ transforms a vector
$w_{i,j,k}$. In turn, each matrix $A_{12}^{\left(  r\right)  }$ will be
divided into $A_{12}^{\left(  r\right)  ^{\prime}}$ and $A_{12}^{\left(
r\right)  ^{\prime\prime}}$ corresponding to the primary variables $u$ and $v$
respectively, i.e. $A_{12}^{\left(  r\right)  }=\left(
\begin{array}
[c]{cc}%
A_{12}^{\left(  r\right)  ^{\prime}} & A_{12}^{\left(  r\right)
^{\prime\prime}}%
\end{array}
\right)  $. Given an index $r\in\left\{  1,\ldots,N\right\}  $ we can generate
$A_{12}^{\left(  r\right)  ^{\prime}}$ and $A_{12}^{\left(  r\right)
^{\prime\prime}}$ by accessing columns $\iota\left(  i,j\right)  $,
$\iota\left(  i,k\right)  $, and $\iota\left(  j,k\right)  $. For primary
variables $u$ we have to make the following assignments%

\begin{align*}
\left(  A_{12}^{\left(  r\right)  ^{\prime}}\right)  _{1,\iota\left(
i,j\right)  }  & \leftarrow-1,\\
\left(  A_{12}^{\left(  r\right)  ^{\prime}}\right)  _{2,\iota\left(
i,k\right)  }  & \leftarrow-1,\\
\left(  A_{12}^{\left(  r\right)  ^{\prime}}\right)  _{3,\iota\left(
j,k\right)  }  & \leftarrow-1,
\end{align*}

and for primary variables $v$:%

\begin{align*}
\left(  A_{12}^{\left(  r\right)  ^{\prime\prime}}\right)  _{4,\iota\left(
i,j\right)  }  & \leftarrow-1,\\
\left(  A_{12}^{\left(  r\right)  ^{\prime\prime}}\right)  _{5,\iota\left(
i,k\right)  }  & \leftarrow-1,\\
\left(  A_{12}^{\left(  r\right)  ^{\prime\prime}}\right)  _{6,\iota\left(
j,k\right)  }  & \leftarrow-1.
\end{align*}
With this, we achieve that%
\[
A_{12}^{\left(  r\right)  }w=-I_{6\times6}w_{i,j,k}=-w_{i,j,k}.%
\]
Finally, the constraint in $\left(  \ref{EQ131}\right)  $ can be entirely written through  $A_{31}\in\mathbb{R}^{N\times8N}$ following the same
idea as for $A_{11}$. Specifically, $A_{31}$ is a stack of rows $A_{31}%
^{\left(  r\right)  }$ that transform the vector $\mathbf{\lambda}^{\left(
i,j,k\right)  }$ into a scalar:%
\[
A_{31}=\left(
\begin{array}
[c]{c}%
A_{31}^{(1)}\\
A_{31}^{\left(  2\right)  }\\
\vdots\\
A_{31}^{\left(  N\right)  }%
\end{array}
\right).
\]

Again these sub-matrices row $A_{31}^{\left(  r\right)  }$ are divided
according to  $\left(  i,j,k\right)  $:%
\[
A_{31}^{(r)}=\left(  \underset{\text{Part corresponding to the secondary
variables }\mathbf{\lambda}}{\underbrace{%
\begin{array}
[c]{cccc}%
\mathbf{\lambda}^{\left(  1,2,3\right)  } & \mathbf{\lambda}^{\left(
1,2,4\right)  } & \cdots & \mathbf{\lambda}^{\left(  n-2,n-1,n\right)  }\\
\downarrow & \downarrow & \cdots & \downarrow\\
A_{31}^{\left(  r\right)  \left(  1\right)  } & A_{31}^{\left(  r\right)
\left(  2\right)  } & \cdots & A_{31}^{\left(  r\right)  \left(  N\right)  }%
\end{array}
}}\right),
\]
where $A_{31}^{\left(  r\right)  \left(  i\right)  }\in\mathbb{R}^{1\times8}$.
So the matrix $A_{31}$ from the constraint $\left(  \ref{EQ131}\right)  $ is
written simply as $A_{31}^{\left(  r\right)  \left(  r\right)  }%
\leftarrow\left(  1,1,1,1,1,1,1,1\right)  $ for $r=1,\ldots,N$.

\subsubsection{Consistency Constraints}

Consistency constraints express the cancellation of a linear combination of
primary variables $u$ and $v$. In particular the consistency constraints can
be written as%
\begin{equation}
g_{1,2,3}\left(  w\right)  -g_{1,j,k}\left(  w\right)  =0\text{ with }\left(
j,k\right)  \neq\left(  2,3\right),  \label{EQ124}%
\end{equation}%
\begin{equation}
g_{2,1,3}\left(  w\right)  -g_{2,j,k}\left(  w\right)  =0\text{ with }\left(
j,k\right)  \neq\left(  1,3\right),  \label{EQ125}%
\end{equation}

\begin{equation}
g_{i,1,2}\left(  w\right)  -g_{i,j,k}\left(  w\right)  =0\text{ with }\left(
j,k\right)  \neq(1,2),\text{ }i\geq3,\label{EQ126}%
\end{equation}
or in matrix form as $A_{22}w=0$. \ Remembering that the number of primary
variables is $2N_{1}=2\binom{n}{2}$, we have that this will be the number of
columns of $A_{22}$. The number of rows in $A_{22}$ matches the number of
consistency constraints that we saw to be $N_{2}=n\left(  \binom{n-1}%
{2}-1\right)  $.

From the definition of $g_{i,j,k}\left(  w\right)  $ we have that the
combination of variables $u$ is the same as that of variables $v$; remember
that%
\[
g_{i,j,k}\left(  w\right)  =\frac{u_{ij}+u_{ik}-u_{jk}}{2}+\frac{v_{ij}%
+v_{ik}-v_{jk}}{2}.%
\]
This idea reduces many calculations since you simply have to create a matrix
$M=\left(  M_{1},M_{1}\right)  $ such that $M_{1}\in\mathbb{R}^{N_{2}\times
N_{1}}$ and $A_{22}=M$. The matrix $M$ will be filled sequentially by rows.
Initially, it is reset to zero.

To generate the consistency constraints as defined in the equations $\left(
\ref{EQ124}\right)  $, $\left(  \ref{EQ125}\right)  $, and $\left(
\ref{EQ126}\right)  $, we must create in first a set of indices $S=\left\{
1,2,\ldots,n\right\}  $ that will be traversed sequentially. In the i-th step
we remove the element $i$ from the set $S$ which gives us a set $\bar
{S}=S\diagdown\left\{  i\right\}  $, and it is from this set we generate the
pairs $\left(  j,k\right)  $ as they appear in the constraints of consistency.
For this, we generate an ordered array $C\in\mathbb{N}^{2\binom{n-1}{2}}$
according to the lexicographic order and including all the combinations of two
elements taken from $\bar{S}$. Note that when $i=1$, the first combination of
this list is $\left(  j_{1},k_{1}\right)  =\left(  2,3\right)  $ and this
corresponds to the subscripts of $g_{1,2,3}\left(  w\right)  $ in equation
$\left(  \ref{EQ124}\right)  $. Similarly, for $i=2$, the first ordered pair
in $C$ is $\left(  j_{1},k_{1}\right)  =\left(  1,3\right)  $, corresponding
to $g_{2,1,3}\left(  w\right)  $ in the equation $\left(  \ref{EQ125}\right)
$, and when $i\geq2$, $\left(  j_{1},k_{1}\right)  =\left(  1,2\right)  $ is
the first ordered pair in $C$ associated with $g_{i,1,2}\left(  w\right)  $ in
$\left(  \ref{EQ126}\right)  $.

Now all that is left is to generate the rest of the indexes that appear in the
consistency constraints. For it, we go through $C$ from the second ordered pair
onwards. Let $\left(  j,k\right)  $ be this ordered pair, and suppose we are at
step $r$. We introduce the part of $g_{i,j_{1},k_{1}}\left(  w\right)  $
corresponding to the variables $u$:%
\begin{align*}
M_{r,\iota\left(  i,j_{1}\right)  } &  \leftarrow M_{r,\iota\left(
i,j_{1}\right)  }-1,\\
M_{r,\iota\left(  i,k_{1}\right)  } &  \leftarrow M_{r,\iota\left(
i,k_{1}\right)  }-1,\\
M_{r,\iota\left(  j_{1},k_{1}\right)  } &  \leftarrow M_{r,\iota\left(
j_{1},k_{1}\right)  }+1.
\end{align*}
Similarly, we do with the part of $g_{i,j,k}\left(  w\right)  $ that
corresponds to the variables $u$:%
\begin{align*}
M_{r,\iota\left(  i,j\right)  } &  \leftarrow M_{r,\iota\left(  i,j\right)
}-1,\\
M_{r,\iota\left(  i,k\right)  } &  \leftarrow M_{r,\iota\left(  i,k\right)
}-1,\\
M_{r,\iota\left(  j,k\right)  } &  \leftarrow M_{r,\iota\left(  j,k\right)
}+1.
\end{align*}
Once we have completed this task we have filled in the part of $A_{22}$
corresponding to $u$, that is to say, $A_{22}=\left(  M_{1},\cdot\right)  $. To
complete the part corresponding to v we simply replicate $M_{1}$ in $A_{22}$,
i.e. $A_{22}=\left(  M_{1},M_{1}\right)  $. The construction process of the
matrix $M$ is shown in Algorithm \ref{AlgConsistency}.

\begin{algorithm}
\caption{Create consistency constraints}
\begin{algorithmic}[2]
\State $M\gets 0_{N_{2}\times2N_{1}}$
\State $S\gets \left\{  1,2,\ldots,n\right\}  $ \Comment{Index Set}
\State $r\gets 0$ \Comment{Row index of $M$}

\For{$i\gets 1, n$}
    \State $\bar{S}\gets S\diagdown\left\{  i\right\}  $
    \State Generate an ordered array $C$ of the elements of $\bar{S}$ taken two by two.
    \State $\left(j_{1},k_{1}\right)  \gets C_{1}$ \Comment{first ordered pair in
$C$}
    \For{$l\gets 2, n$}
        \State $r\gets r+1$
        \State $\left(  j,k\right)  \leftarrow C_{l}$ \Comment{l-th ordered pair in $C$}
        \State $M_{r,\iota\left(  i,j_{1}\right)  }\gets M_{r,\iota\left(
i,j_{1}\right)  }-1$

        \State $M_{r,\iota\left(  i,k_{1}\right)  }\gets M_{r,\iota\left(
i,k_{1}\right)  }-1$

        \State $M_{r,\iota\left(  j_{1},k_{1}\right)  }\gets M_{r,\iota\left(
j_{1},k_{1}\right)  }+1$

        \State $M_{r,\iota\left(  i,j\right)  }\gets M_{r,\iota\left(  i,j\right)  }-1$

        \State $M_{r,\iota\left(  i,k\right)  }\gets M_{r,\iota\left(  i,k\right)  }-1$

        \State $M_{r,\iota\left(  j,k\right)  }\leftarrow M_{r,\iota\left(  j,k\right)  }+1$.
    
    \EndFor

\EndFor

\State $M_{1}\gets M\left[  1,\ldots,N_{2};1,\ldots N_{1}\right]  $
\State $M\left[  1,\ldots,N_{2};N_{1}+1,\ldots2N_{1}\right]  \gets M_{1}$

\end{algorithmic}

\label{AlgConsistency}
\end{algorithm}

In Algorithm \ref{AlgConsistency} the submatrix formed by the rows $\left\{
1,\ldots,N_{2}\right\}  $ and by the columns $\left\{  1,\ldots,N_{1}\right\}
$ is denoted by $M\left[  1,\ldots,N_{2};1,\ldots N_{1}\right]  $. Similarly
for $M\left[  1,\ldots,N_{2};N_{1}+1,\ldots2N_{1}\right]  $.

\begin{algorithm}
\caption{Create improved consistency constraints}
\begin{algorithmic}[2a]
\State $M\gets 0_{N_{2}\times2N_{1}}$
\State $S\gets \left\{  1,2,\ldots,n\right\}  $ \Comment{Index Set}
\State $r\gets 0$ \Comment{Row index of $M$}

\For{$i\gets 1, n$}
    \State $\bar{S}\gets S\diagdown\left\{  i\right\}  $
    \State $j_{1}\gets \bar{S}\left(  1\right)$
    \State $k_{1}\gets \bar{S}\left(  2\right)$
    \For{$l_{1}\gets 1, n-2$}
        \For{$l_{2}\gets l_{1}+1, n-1$}
            \If{$l_{1}>1$ or $l_{2}>2$}
                \State $r\gets r+1$
                \State $j\gets\bar {S}\left(  l_{1}\right)$
                \State $k\gets\bar{S}\left(  l_{2}\right)  $
                \State $\vdots$
                \State Here the same assignments as in Algorithm \ref{AlgConsistency}
                \State $\vdots$
            \EndIf
        \EndFor
    \EndFor
\EndFor

\State $M_{1}\gets M\left[  1,\ldots,N_{2};1,\ldots N_{1}\right]  $
\State $M\left[  1,\ldots,N_{2};N_{1}+1,\ldots2N_{1}\right]  \gets M_{1}$

\end{algorithmic}
\label{AlgConsistencyImproved}
\end{algorithm}

When $n$ is large, the generation of the array $C$ of elements of $\bar{S}$
taken two by two can take up a lot of memory, causing the system to blow up.
To avoid this problem, the combinations must be generated one by one.
Algorithm \ref{AlgConsistencyImproved} is a variant of Algorithm \ref{AlgConsistency} that
sequentially generates the combinations using the indices  $l_{1}\in\left[
1,\ldots,n-2\right]  $ and  $l_{2}\in\left[  l_{1}+1\ldots,n-1\right]  $. 

Finally, it was already commented in the previous sections that $A_{21}$ is a
null matrix since the secondary variables $\mathbf{\lambda}$ do not intervene
in the definition of the consistency constraints.

\subsection{Linear Optimizer}

Once the binary quadratic problem has been translated into a linear
optimization problem, it only remains to invoke a standard optimizer. Here it
should be noted, although it is a well-known fact, that the linear optimizer
works in polynomial time. There are many interior-point methods to solve the
linear programming problem, although they are all improvements to the
ellipsoid method due to Khachiyan. The objective of this technical note is not
to present an efficient method but to demonstrate that the procedure of
transformation to a linear program is successful. For this reason, it is
sufficient for the implementation to simply invoke a standard resolver. The
problem \textbf{(LP}$_{n}$\textbf{)} will be written compactly as%
\begin{align*}
\min\tilde{f}\left(  w\right)    & =\tilde{c}^{T}w\\
\text{subject to }\tilde{A}\left(
\begin{array}
[c]{c}%
\lambda\\
w
\end{array}
\right)    & =\tilde{b}\\
\lambda & \geq0
\end{align*}
The implementation was done in MATLAB and the \texttt{linprog}
function was invoked with parameters $\tilde{c}$, $\tilde{A}$, and $\tilde{b}%
$, and the constraint $\lambda\geq0$. Here we are not interested in the
implementation details of linprog as they are not relevant to ensure that the
proposed method works correctly, we are only interested in the result produced
by the linear optimizer.

For a future improvement in the implementation, we suggest the following stop
condition: if all the components $w_{i}$ are distant from the minimum
$w_{i}^{\ast}$ in less than $\frac{1}{4}$ the optimizer must stop. This is
because the primary variables u and v move in the domains $\left[  0,2\right]
^{n}$ and $\left[  0,\frac{1}{2}\right]  ^{n}$ respectively. The presented
formulation is matrix for reasons of clarity and simplicity, however this
involves having a somewhat sparsed structure with many zero entries.

\section{Experiment Description}

The implementation of the algorithm presented above was done in MATLAB (the
code can be found in the supplementary material). Experimentation was performed for
arbitrary dimensions of the problem (up to $n=30$, due to memory limitations
in MATLAB). Both matrix $Q$ and vector $b$ were chosen arbitrarily in a range of
values $\left[  lv,uv\right]  =\left[  -50,50\right]  $ (this range can be
freely modified by the experimenter). Also, the domain of values is allowed to
be the reals or the integers.

Regarding the success condition of the experiment, a comparison is made of the
optimal $\tilde{c}^{T}w^{\ast}$ obtained from \texttt{linprog} with that
produced by brute force, $f\left(  x^{\ast}\right)  $ (exploring all the
possible binary combinations for the original problem). For this, we set a
dimension $\epsilon>0$ and compare  $\tilde{c}^{T}w^{\ast}$ with  $f\left(
x^{\ast}\right)  $: if  $\left\vert \tilde{c}^{T}w^{\ast}-f\left(  x^{\ast
}\right)  \right\vert <\epsilon$ the result is successful.

The method has been tested in Matlab R2016a 64-bit under Windows 10 (64-bit) Intel\textregistered\ Core\texttrademark\ i5-4300U CPU@ 1.90 GHz. RAM 8.00GB. After five million experiments with variability in the dimension of the
problem and randomness in $Q$ and $b$, we did not record any unsatisfactory
results, which suggests that this method is correct.

\section{Discussion}

In this paper an algorithm has been developed to find the global minimum of a
quadratic binary function in polynomial time. Specifically, the computational
complexity of the algorithm when using the Vaidya linear optimizer is
$O\left(  n^{\frac{15}{2}}\right)  $. This bound is very conservative but it
is enough to prove that the problem is in class $P$. The reduction of the
complexity exponent, and therefore the speed of resolution of the method,
strongly depends on the advances that occur in linear optimizers as well as of
technological aspects such as parallel computing.

A great advantage of this algorithm is its modularity according to the
dimension: the definition of the consistency and convexity constraints through
$A$ and $b$ are fixed for all problems of a given dimension $n$. This means
that these matrices can be precalculated for different sizes of problems and
stored in a data file (physically it could be stored in a fast access ROM
memory). To test the algorithm for problems of dimension $n$, this
precalculated information is loaded into memory and the linear optimizer is
directly invoked.

The algorithm has been implemented in MATLAB and has been verified generating
more than five million matrices of arbitrary dimension up to $30$ with random
entries in the range $\left[  -50,50\right]  $. Actually, the algorithm is
designed for any dimension $n$, however, tests beyond $30$ can be done
considering the storage limits of matrices in MATLAB. In particular, matrix
$A$ of the problem \textbf{(LP}$_{n}$\textbf{)} has dimension  $\left(
8N+2N_{1}\right)  \times\left(  7N+N_{1}\right)  \in O\left(  n^{6}\right)  $
where $N=\binom{n}{3}$, $N_{1}=\binom{n}{2}$ y $N_{2}=n\left(  \binom{n-1}%
{2}-1\right)  $.. This matrix is memory intensive if its sparse condition is
not exploited. Therefore, the algorithm can be made to work more efficiently
by storing $A$ as a sparse matrix (matrix A is extremely sparse so the memory
demand is not as great as it might seem a priori from the dimension). If you
do not exploit this property, with MATLAB's default options, there is a memory
limit on matrices of about $2^{48}=256\cdot10^{12}$ entries. The size of
matrix $A$ (number of entries) according to dimension $n$ of the problem as
indicated in Figure \ref{FIGMemory} where a conservative memory limit of $10^{12}$ is
indicated. Thus, for this threshold, it is theoretically possible to calculate
dimension problems up to $n=93$. In practice, the situation is more dramatic
if sparse matrices are not handled. In particular, for dimension $n=30$,
MATLAB begins to give space reservation problems requiring $10.1$ GB of
storage. For all the reasons stated above it is recommended to exploit the
sparsity of the matrices.%

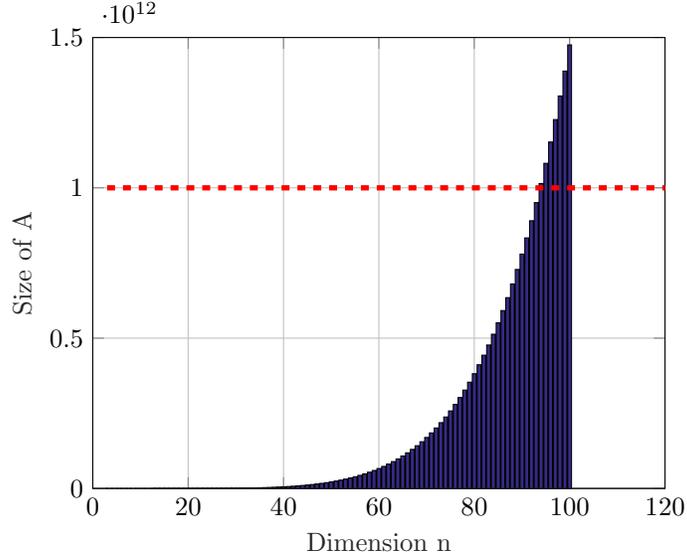
\begin{figure}
[hptb]
\begin{center}
%
%
\definecolor{mycolor1}{rgb}{0.20810,0.16630,0.52920}%
\begin{tikzpicture}

\begin{axis}[%
width=7.607cm,
height=6cm,
at={(0cm,0cm)},
scale only axis,
bar shift auto,
xmin=0,
xmax=120,
xlabel style={font=\color{white!15!black}},
xlabel={Dimension n},
ymin=0,
ymax=1500000000000,
ylabel style={font=\color{white!15!black}},
ylabel={Size of A},
axis background/.style={fill=white},
xmajorgrids,
ymajorgrids
]
\addplot[ybar, bar width=0.8, fill=mycolor1, draw=black, area legend] table[row sep=crcr] {%
3	101\\
4	1238\\
5	7010\\
6	26615\\
7	78911\\
8	197596\\
9	437508\\
10	882045\\
11	1651705\\
12	2913746\\
13	4892966\\
14	7883603\\
15	12262355\\
16	18502520\\
17	27189256\\
18	39035961\\
19	54901773\\
20	75810190\\
21	102968810\\
22	137790191\\
23	181913831\\
24	237229268\\
25	305900300\\
26	390390325\\
27	493488801\\
28	618338826\\
29	768465838\\
30	947807435\\
31	1160744315\\
32	1412132336\\
33	1707335696\\
34	2052261233\\
35	2453393845\\
36	2917833030\\
37	3453330546\\
38	4068329191\\
39	4772002703\\
40	5574296780\\
41	6485971220\\
42	7518643181\\
43	8684831561\\
44	9998002498\\
45	11472615990\\
46	13124173635\\
47	14969267491\\
48	17025630056\\
49	19312185368\\
50	21849101225\\
51	24657842525\\
52	27761225726\\
53	31183474426\\
54	34950276063\\
55	39088839735\\
56	43627955140\\
57	48598052636\\
58	54031264421\\
59	59961486833\\
60	66424443770\\
61	73457751230\\
62	81100982971\\
63	89395737291\\
64	98385704928\\
65	108116738080\\
66	118636920545\\
67	129996638981\\
68	142248655286\\
69	155448180098\\
70	169652947415\\
71	184923290335\\
72	201322217916\\
73	218915493156\\
74	237771712093\\
75	257962384025\\
76	279562012850\\
77	302648179526\\
78	327301625651\\
79	353606338163\\
80	381649635160\\
81	411522252840\\
82	443318433561\\
83	477136015021\\
84	513076520558\\
85	551245250570\\
86	591751375055\\
87	634708027271\\
88	680232398516\\
89	728445834028\\
90	779473930005\\
91	833446631745\\
92	890498332906\\
93	950767975886\\
94	1014399153323\\
95	1081540210715\\
96	1152344350160\\
97	1226969735216\\
98	1305579596881\\
99	1388342340693\\
100	1475431654950\\
};
\addplot[forget plot, color=white!15!black] table[row sep=crcr] {%
0	0\\
120	0\\
};
\addplot [color=red, dashed, line width=2.0pt, forget plot]
  table[row sep=crcr]{%
3	1000000000000\\
120	1000000000000\\
};
\end{axis}
\end{tikzpicture}%
\caption{Size of $A$ in terms of $n$. The memory threshold for non-sparse matrices is set  at $10^{12}$ and is represented by a dashed red line.}
\label{FIGMemory}%
\end{center}
\end{figure}

Although the problem has polynomial computational complexity in both time and
space when $n$ is large this takes time to compute. For example, $n=100$
implies a bound proportional to $10^{15}$. On a small scale, polynomial-time
algorithms generally do not perform well compared to metaheuristic methods.
Performance is found when handling large dimensions. For large-scale UBQP
problems, it is necessary to compile the code avoiding the use of interpreted
MATLAB code. On the other hand, the system should be parallelized as much as
possible using multi-core processor architectures and offloading computing in
graphics processing units (GPUs). Parallelization should avoid the dependency
problem. This requires that the counter variable r within the for-loops in
algorithms \ref{Alg:MatrixT}, \ref{AlgConsistency}, and
\ref{AlgConsistencyImproved} should be made explicit as a dependent function
of $i$ and $j$, $r\left(  i,j\right)  $, thus avoiding the increment
$r\leftarrow r+1$.

Finally, concerning the experiments, it should be noted that the method
solution has been compared with the brute force solution. For medium and large
dimensions this way of proceeding does not work. For example for $n=100$ the
brute force method requires $2^{100}$ iterations, approximately one followed
by thirty zeros. Therefore, in the future, it would be necessary to design
verification experiments that compared solutions by methods of different nature.

\section{Appendix}

We rewrite the problem \textbf{(LP}$_{n}$\textbf{)} in secondary variables as:%

\[
\text{\textbf{(LP}}_{n}^{\prime}\text{\textbf{)}: }\left\{
\begin{array}
[c]{c}%
\min d^{T}\lambda\\
\text{s.t. }F\lambda=g\\
\lambda\in\left[  0,1\right]  ^{8N}%
\end{array}
\right.
\]
where $F\in\mathbb{R}^{\left(  N_{2}+N\right)  \times8N_{1}}$. In the problem
\textbf{(LP'}$_{n}$\textbf{)} box restrictions of the form $0\leq\lambda
_{i}\leq1$ have been added. The upper bound on $\lambda$ is redundant since
for each $\mathcal{C}^{\left(  i,j,k\right)  }$ we have that%
\[
\sum_{l=1}^{8}\lambda_{l}^{\left(  i,j,k\right)  }=1\text{,}%
\]
and $\lambda_{l}^{\left(  i,j,k\right)  }\geq0$ for $l=1,\ldots,8$. This
implies that $\lambda_{l}^{\left(  i,j,k\right)  }\leq1$. However, this
redundancy is convenient to determine the vertices of the feasible region
$\mathcal{C}_{n}$.

\begin{lemma}
\label{LEMA10}Let $V$ be the set of vertices of the hypercube $\mathcal{H}%
_{n}$. For each $p\in V$, there is a $\lambda\in\left\{  0,1\right\}  ^{8N}$
that is in the feasible region of \textbf{(LP}$_{n}^{\prime}$\textbf{)}.
Conversely, for every $\lambda\in\left\{  0,1\right\}  ^{8N}$ in the feasible
region of \textbf{(LP}$_{n}^{\prime}$\textbf{)} there exists a vertex $p$ of
$\mathcal{H}_{n}$ such that $\phi\left(  p\right)  =w\in\mathcal{C}_{n}$.
\end{lemma}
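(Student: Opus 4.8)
The plan is to set up an explicit correspondence between the vertices of $\mathcal{H}_n$ and the binary feasible points of \textbf{(LP}$_n^{\prime}$\textbf{)}. The key preliminary observation is that a binary $\lambda$ is forced to be a block-wise indicator: for each triplet $(i,j,k)$ the convexity constraint $\sum_{l=1}^{8}\lambda_l^{(i,j,k)}=1$ together with $\lambda^{(i,j,k)}\in\{0,1\}^8$ forces exactly one coordinate of $\lambda^{(i,j,k)}$ to equal $1$, so that $w_{i,j,k}=B\lambda^{(i,j,k)}$ equals $\phi_{i,j,k}(a_i,a_j,a_k)$ for a single vertex $(a_i,a_j,a_k)\in\{0,1\}^3$. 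Thus a binary feasible $\lambda$ is the same data as a choice of one cube vertex for every triplet, and the whole problem reduces to deciding when these local choices are mutually compatible.

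For the forward direction, given $p=(p_1,\ldots,p_n)\in V$ I would set $\lambda^{(i,j,k)}$ equal to the standard basis vector indicating the vertex $(p_i,p_j,p_k)$ of $\{0,1\}^3$. Then $\lambda\in\{0,1\}^{8N}$, the convexity constraints $u^T\lambda^{(i,j,k)}=1$ and $B\lambda^{(i,j,k)}=w_{i,j,k}$ hold by construction, and $w_{i,j,k}=\phi_{i,j,k}(p_i,p_j,p_k)$. The consistency constraints \textbf{(C1)}--\textbf{(C3)} then follow from the identity $g_{i,j,k}(\phi_{i,j,k}(p_i,p_j,p_k))=p_i$, which rests on the elementary relation $u_{ij}+v_{ij}=x_i+x_j$ coming from $(\ref{EQ1})$--$(\ref{EQ2})$: each constraint in \textbf{(C1)}--\textbf{(C3)} equates two recovered copies of the same $p_i$, so both sides reduce to $p_i$. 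This places $\lambda$ in the feasible region.

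For the converse, let $\lambda\in\{0,1\}^{8N}$ be feasible; by the preliminary observation each block selects a vertex, $w_{i,j,k}=\phi_{i,j,k}(a_i,a_j,a_k)$. I would define $p_i:=g_{i,j,k}(w)$ using any triplet containing $i$. The crucial point is \emph{well-definedness}: the consistency constraints are arranged in ``star'' form, with a single reference per index---$g_{1,2,3}$ for $x_1$ in \textbf{(C1)}, $g_{2,1,3}$ for $x_2$ in \textbf{(C2)}, and $g_{i,1,2}$ for $x_i$ with $i\ge3$ in \textbf{(C3)}---against which the value recovered from \emph{every} other triplet containing that index is equated. Hence all triplets containing $i$ return the same $p_i$, with no transitivity argument required, and since $g_{i,j,k}(\phi_{i,j,k}(a_i,a_j,a_k))=a_i\in\{0,1\}$ the value $p_i$ is binary. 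Reading off all three coordinates of a block then gives $a_i=g_{i,j,k}(w)=p_i$, $a_j=g_{j,i,k}(w)=p_j$, $a_k=g_{k,i,j}(w)=p_k$, so $w_{i,j,k}=\phi_{i,j,k}(p_i,p_j,p_k)$ for every triplet; this is precisely $\phi(p)=w$ with $p\in V$, and feasibility of $\lambda$ gives $w\in\mathcal{C}_n$.

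I expect the main obstacle to be that well-definedness step, and in particular articulating \emph{why} the consistency constraints are needed at all. The subtlety is that the shared primary variables $(u_{ij},v_{ij})$ only determine the \emph{unordered} pair $\{a_i,a_j\}$: because $x_ix_j$ is symmetric, the two orderings $(0,1)$ and $(1,0)$ yield the same $u_{ij}=v_{ij}=\frac{1}{2}$, so the convexity constraints alone cannot fix a coherent global vertex. It is exactly the star-form consistency constraints that break this symmetry and pin down the ordered point $p$. The one book-keeping item to verify carefully is that \textbf{(C1)}--\textbf{(C3)} genuinely exhaust all triplets containing each index, which is what makes the correspondence onto $V$.
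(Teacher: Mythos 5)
Your proof is correct and follows essentially the same route as the paper's: binary blocks $\lambda^{\left(i,j,k\right)}$ with $u^{T}\lambda^{\left(i,j,k\right)}=1$ select one vertex of each $\mathcal{C}_{3}^{\left(i,j,k\right)}$, and the star-form consistency constraints \textbf{(C1)}--\textbf{(C3)} force the locally recovered coordinates $g_{i,j,k}\left(w\right)$ to agree across all triplets containing a given index, yielding a well-defined global vertex $p\in V$ (and conversely the indicator of $\left(p_{i},p_{j},p_{k}\right)$ per block gives feasibility). You merely make explicit what the paper leaves implicit --- the identity $g_{i,j,k}\left(\phi_{i,j,k}\left(p_{i},p_{j},p_{k}\right)\right)=p_{i}$ via $u_{ij}+v_{ij}=x_{i}+x_{j}$, and the observation that the shared pair $\left(u_{ij},v_{ij}\right)$ determines only the unordered pair $\left\{a_{i},a_{j}\right\}$, which is precisely why the consistency constraints are needed.
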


\begin{proof}
For each triple $\left(  i,j,k\right)  $ we have that $\phi_{i,j,k}\left(
p_{i,j,k}\right)  $ is a vertex of $\mathcal{C}_{i,j,k}$. Consequently, there
exists a $\lambda^{\left(  i,j,k\right)  }\in\mathbb{R}^{8}$, with one and the
remainder zeros. These vertices have a preimage $p_{i,j,k}$ through
$\phi_{i,j,k}$ given by%
\[
p_{i,j,k}=\left(
\begin{array}
[c]{c}%
x_{i}\\
x_{j}\\
x_{k}%
\end{array}
\right)  \text{,}%
\]
with
\[
x_{i}=\frac{u_{ij}+v_{ij}+u_{ik}+v_{ik}-u_{jk}-v_{jk}}{2}\in\left\{
0,1\right\}  \text{.}%
\]
Since $x_{i}$ must be the same for each convex $\mathcal{C}_{3}^{\left(
i,j,k\right)  }$, the consistency restrictions must be satisfied at the point
$\phi\left(  p\right)  $. Hence, $\phi\left(  p\right)  \in\mathcal{C}_{n}$,
and there is a $\lambda\in\left\{  0,1\right\}  ^{8N}$ in the feasible region
of \textbf{(LP}$_{n}^{\prime}$\textbf{)}. \newline For the reciprocal, we have
that $\lambda^{\left(  i,j,k\right)  }$ is a vector of the standard base of
$\mathbb{R}^{8}$, that is, the vector of primary variables $w$ is such that
$w_{i,j,k}$ is a vertex of the convex $C_{3}^{\left(  i,j,k\right)  }$. For
this vertex there exists a vertex $p_{i,j,k}$ in $\mathcal{H}_{3}^{\left(
i,j,k\right)  }$ such that
\[
\phi_{i,j,k}\left(  p_{i,j,k}\right)  =w_{i,j,k}\text{.}%
\]
The consistency constraints guarantee that the coordinate $x_{i}$ at two
points $p_{i,j,k}$ and $p_{i,j^{\prime},k^{\prime}}$ is the same. Hence, there
exists a unique preimage $p$ for $w$ through $\phi$ such that $p\in V$.
\end{proof}

In the previous Lemma, despite the fact that for each $p\in V$, the vector
$\lambda\in\left\{  0,1\right\}  ^{8N}$ does not have to be unique, the
associated vector of primary variables is unique through the transformation
$w=B\lambda$ (see example \ref{EJEMPLO1} of Section 3.1). This shows that
there is a bijection between the vertices of $\mathcal{H}_{n}$ and those of
$\mathcal{C}_{n}$ through $\phi$. And therefore, the number of vertices of
$\mathcal{C}_{n}$ is just $2^{n}$.

\begin{theorem}
Let $w^{\ast}$ be the point in $\mathcal{C}_{n}$ where the minimum of the
problem ($P_{n}^{\prime}$) is reached and let $x^{\ast}$ be the point in
$\mathcal{H}_{n}$ for the minimum of ($P_{n}$), then $\tilde{f}\left(
w^{\ast}\right)  =f\left(  x^{\ast}\right)  $.
\end{theorem}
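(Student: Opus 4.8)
The plan is to mirror the argument of Theorem \ref{TEO1}, but to replace the inclusion $\phi\left(\mathcal{H}_{n}\right)\subseteq\mathcal{C}_{n}$ of Lemma \ref{LEMA3} by the vertex characterization of Lemma \ref{LEMA10}, which is exactly what the appendix reformulation in secondary variables makes available. The decisive structural fact is that $\tilde{f}\left(w\right)=\tilde{c}^{T}w$ is a \emph{linear} objective and that $\mathcal{C}_{n}$, being contained in $\left[0,2\right]^{n}\times\left[0,\frac{1}{2}\right]^{n}$ and cut out by the finitely many convexity and consistency equalities together with $\lambda\geq0$, is a compact polytope. Hence the minimum $\tilde{f}\left(w^{\ast}\right)$ is attained at a vertex of $\mathcal{C}_{n}$, and I may take $w^{\ast}$ to be such a vertex.

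First I would invoke Lemma \ref{LEMA10} and the bijection remark following it: every vertex of $\mathcal{C}_{n}$ is of the form $\phi\left(p\right)$ for a unique vertex $p\in V$ of $\mathcal{H}_{n}$. Thus $w^{\ast}=\phi\left(p\right)$ for some vertex $p$. Using $\phi\left(p\right)=E_{n}\alpha\left(p\right)$ together with the identity $T_{n}E_{n}=I$ established earlier, I would compute
\[
\tilde{f}\left(w^{\ast}\right)=\tilde{c}^{T}\phi\left(p\right)=c^{T}T_{n}E_{n}\alpha\left(p\right)=c^{T}\alpha\left(p\right)=f\left(p\right).
\]
Since $x^{\ast}$ minimizes $f$ over $\mathcal{H}_{n}$ and $p\in\mathcal{H}_{n}$, this yields $f\left(x^{\ast}\right)\leq f\left(p\right)=\tilde{f}\left(w^{\ast}\right)$.

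For the reverse inequality I would use that $x^{\ast}$ is a vertex of $\mathcal{H}_{n}$ (Rosenberg's theorem, already quoted), so by Lemma \ref{LEMA10} the point $\phi\left(x^{\ast}\right)$ is a vertex, and in particular an element, of $\mathcal{C}_{n}$. The same computation as above gives $\tilde{f}\left(\phi\left(x^{\ast}\right)\right)=f\left(x^{\ast}\right)$, and the minimality of $w^{\ast}$ over $\mathcal{C}_{n}$ forces $\tilde{f}\left(w^{\ast}\right)\leq\tilde{f}\left(\phi\left(x^{\ast}\right)\right)=f\left(x^{\ast}\right)$. Combining the two inequalities gives $f\left(x^{\ast}\right)=\tilde{f}\left(w^{\ast}\right)$, as claimed.

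The main obstacle is not the algebra — the chain of identities collapses immediately once $T_{n}E_{n}=I$ is in hand — but rather the legitimacy of passing to a vertex of $\mathcal{C}_{n}$ that \emph{has} an $\mathcal{H}_{n}$-preimage under $\phi$. A priori the relaxation could create spurious vertices of $\mathcal{C}_{n}$ (non-convexity of $\phi\left(\mathcal{H}_{3}\right)$ was demonstrated explicitly in the $n=3$ counterexample), and a minimizer sitting at such a point would break the correspondence. This is precisely what Lemma \ref{LEMA10} forecloses by showing the integral feasible points of \textbf{(LP}$_{n}^{\prime}$\textbf{)} are in bijection with $V$; verifying that the consistency constraints genuinely pin down a single preimage coordinate $x_{i}$ across all triplets containing $i$ is the one place where the combinatorial content of the construction is actually used.
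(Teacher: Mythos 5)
Your proposal is correct and follows essentially the same route as the paper's own appendix proof: both pass to a vertex of $\mathcal{C}_{n}$, identify it as $\phi\left(p\right)$ for some $p\in V$ via Lemma \ref{LEMA10}, use the identity $T_{n}E_{n}=I$ to get $\tilde{f}\circ\phi=f$ on vertices, and combine the two minimality inequalities. Your explicit justification that the minimum of a linear objective over the compact polytope $\mathcal{C}_{n}$ is attained at a vertex, and your remark on why Lemma \ref{LEMA10} rules out spurious vertices, merely spell out steps the paper leaves implicit.
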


\begin{proof}
According to Lemma \ref{LEMA10}, the set of vertices of $\mathcal{C}_{n}$ is
$\left\{  \phi\left(  p\right)  :p\in V\right\}  $ (where $V$ is the set of
vertices of the hypercube $\mathcal{H}_{n}$). We know that the minimum of the
problem of ($P_{n}^{\prime}$) is reached at a vertex of $\mathcal{C}_{n}$, so
there will be a $p\in V$ such that $\phi\left(  p\right)  =w^{\ast}$. On the
other hand, the minimum of $f$ over $\mathcal{H}_{n}$ is achieved at $x^{\ast
}\in\mathcal{H}_{n}$, so that%
\begin{equation}
f\left(  x^{\ast}\right)  \leq f\left(  p\right)  \text{.}\label{EQQ104}%
\end{equation}
The connection between $f$ and $\tilde{f}$ leads to%
\begin{equation}
\tilde{f}\left(  \phi\left(  x^{\ast}\right)  \right)  =\tilde{c}^{T}%
\phi\left(  x^{\ast}\right)  =\left(  c^{T}T_{n}\right)  \left(  E_{n}%
\alpha\left(  x^{\ast}\right)  \right)  =c^{T}\alpha\left(  x^{\ast}\right)
=f\left(  x^{\ast}\right)  \label{EQQ105}%
\end{equation}
where $\phi\left(  x^{\ast}\right)  $ is a vertex of $\mathcal{C}_{n}$ because
$x^{\ast}\in V$, and to%
\begin{equation}
f\left(  p\right)  =c^{T}\alpha\left(  p\right)  =\left(  c^{T}T_{n}\right)
\left(  E_{n}\alpha\left(  p\right)  \right)  =\tilde{c}^{T}\phi\left(
p\right)  =\tilde{f}\left(  \phi\left(  p\right)  \right)  \text{.}%
\label{EQQ106}%
\end{equation}
\newline According to $\left(  \ref{EQQ104}\right)  $ and $\left(
\ref{EQQ106}\right)  $,%
\[
f\left(  x^{\ast}\right)  \leq f\left(  p\right)  =\tilde{f}\left(  w^{\ast
}\right)  \text{.}%
\]
\newline Since the minimum of $\tilde{f}$ over $\mathcal{C}_{n}$ is attained
at $w^{\ast}\in\mathcal{C}_{n}$, and accounting for $\left(  \ref{EQQ105}%
\right)  $, we have that%
\[
f\left(  x^{\ast}\right)  =\tilde{f}\left(  \phi\left(  x^{\ast}\right)
\right)  \geq\tilde{f}\left(  w^{\ast}\right).
\]
Henceforth, $f\left(  x^{\ast}\right)  =\tilde{f}\left(  w^{\ast}\right)  $.
\end{proof}

\bibliographystyle{elsarticle-num}
\bibliography{bibliografia}

\end{document}